\title{Recognizing Proper Tree-Graphs} %
\titlerunning{Recognizing Proper Tree-Graphs} %
\author{Steven Chaplick}{Maastricht University, The Netherlands}{s.chaplick@maastrichtuniversity.nl}{https://orcid.org/0000-0003-3501-4608
}{}%
\author{Petr A.~Golovach}{Department of Informatics, University of Bergen, Bergen, Norway}{petr.golovach@uib.no}{https://orcid.org/0000-0002-2619-2990}{Author supported by the project MULTIVAL of the Research Council of Norway.}
\author{Tim A.~Hartmann}{RWTH Aachen, Germany}{hartmann@algo.rwth-aachen.de}{https://orcid.org/0000-0002-1028-6351}{}
\author{Du\v{s}an Knop}{Department of Theoretical Computer Science, Faculty of Information Technology,\\ Czech Technical University in Prague, Prague, Czech Republic}{dusan.knop@fit.cvut.cz}{https://orcid.org/0000-0003-2588-5709}{Supported by the OP VVV MEYS funded project CZ.02.1.01/0.0/0.0/16\_019/0000765 ``Research Center for Informatics''.}
\authorrunning{Steven Chaplick, Petr A.~Golovach, Tim A.~Hartmann, Du\v{s}an Knop} %
\keywords{intersection graphs, H-graphs, recognition, fixed-parameter tractability} %
\newcommand{\CC}{\mathcal{C}}
\newcommand{\problem}[1]{\textsc{#1}\xspace}
\newcommand{\cNP}{\hbox{\textsf{NP}}\xspace}
\newcommand{\cXP}{\hbox{\textsf{XP}}\xspace}
\newcommand{\cFPT}{\hbox{\textsf{FPT}}\xspace}
\newcommand{\hid}{\problem{IntDim(1,3)}\xspace}
\newcommand{\Dplus}{\mathfrak{D}\xspace}
\newcommand{\calP}{\mathcal{P}\xspace}
\theoremstyle{plain}
\newcommand{\Oh}{\mathcal{O}}
\newcommand{\define}[1]{\emph{#1}}
\newcommand{\stress}[1]{\textit{#1}}
\newcommand{\backward}{(\( \Leftarrow \))\xspace}
\newcommand{\forward}{(\( \Rightarrow \))\xspace}
\newcommand{\set}[2]{ \{ #1 \; | \; #2 \} }
\newcommand{\labeltext}[3][]{%
    \@bsphack%
    \csname phantomsection\endcsname
    \def\tst{#1}%
    \def\labelmarkup{}
    \def\refmarkup{}%
    \ifx\tst\empty\def\@currentlabel{\refmarkup{#2}}{\label{#3}}%
    \else\def\@currentlabel{\refmarkup{#1}}{\label{#3}}\fi%
    \@esphack%
    \labelmarkup{#2}
}
\newcommand{\V}[2][{}]{V_{#2}^{#1}}
\newcommand{\y}{{{y}}}
\newcommand{\z}{{{z}}}
\newcommand{\subdividedGraph}[1]{#1_{\operatorname{sub}}}
\newcommand{\Hsub}{\subdividedGraph{H}}
\newcommand{\Tsub}{\subdividedGraph{T}}
\newcommand{\YY}{\mathcal{Y}}
\newcommand{\eye}{eye\xspace} 
\newcommand{\eyes}{eyes\xspace}
\newcommand{\KK}[1]{\mathbf{\Gamma}(#1)}
\newcommand{\KKI}[1]{\mathbf{\Gamma}^{-1}(#1)}
\newcommand{\guard}{guard\xspace}
\newcommand{\guards}{guards\xspace}
\newcommand{\Ttemp}{T^0}
\newcommand{\ttemp}{t^0}
\newcommand{\htemp}{h^0}
\newcommand{\ty}[1]{{\lambda}_{#1}}
\newcommand{\typ}[1]{{\lambda'_{#1}}}
\newcommand{\sv}[1]{}
\newcommand{\lv}[1]{#1}
\newcommand{\appendixText}{}
\renewcommand{\toappendix}[1]{#1}
\newcommand{\lcolor}{red!30}
\newcommand{\ycolor}{green!37}
\newcommand{\rcolor}{blue!30}
\newcommand{\zcolor}{black!14}
\newcommand{\itips}{0.08}
\newcommand{\intervallca}[4]{
	\draw (#1,#3+\itips)--(#1,#3-\itips) [thick,draw=#4];
	\draw (#1,#3) -- (#2,#3) [thick,draw=#4]; 
	}
\newcommand{\intervallc}[4]{
	\draw (#1,#3+\itips)--(#1,#3-\itips) [thick,draw=#4];
	\draw (#1,#3) -- (#2,#3) [thick,draw=#4]; 
	\draw (#2,#3+\itips)--(#2,#3-\itips) [thick,draw=#4];
	}
\newcommand{\intervall}[3]{
	\intervallc{#1}{#2}{#3}{black}
	}
\newcommand{\vintervallca}[4]{
	\draw (#3+\itips,#1)--(#3-\itips,#1) [thick,draw=#4];
	\draw (#3,#1) -- (#3,#2) [thick,draw=#4]; 
}
\newcommand{\vintervallc}[4]{
	\draw (#3+\itips,#1)--(#3-\itips,#1) [thick,draw=#4];
	\draw (#3,#1) -- (#3,#2) [thick,draw=#4]; 
	\draw (#3+\itips,#2)--(#3-\itips,#2) [thick,draw=#4];
}
\newcommand{\vintervall}[3]{
	\vintervallc{#1}{#2}{#3}{black}
}
\newcommand{\intervalll}[3]{
	\intervall{#1+\igap}{#2-\igap}{#3}
	}
\newcommand{\igap}{0.1}
\newcommand{\iiclique}[3]{
	\node[ellipse,draw=#1,fill=#1,label=above:#3,minimum height=20] at (#2,1.2) {};
}
\begin{document}

\maketitle

\begin{abstract}
We investigate the parameterized complexity of the recognition problem for the proper $H$-graphs.
The $H$-graphs are the intersection graphs of connected subgraphs of a subdivision of a multigraph $H$, and the properness means that the containment relationship between the representations of the vertices is forbidden.
The class of $H$-graphs was introduced as a natural (parameterized) generalization of interval and circular-arc graphs by Bir\'o, Hujter, and  Tuza in 1992, and the proper $H$-graphs were introduced by Chaplick et al. in WADS 2019 as a generalization of proper interval and circular-arc graphs. For these graph classes, $H$ may be seen as a structural parameter reflecting the distance of a graph to a (proper) interval graph,
	and as such gained attention as a structural parameter in the design of efficient algorithms.
We show the following results.
\begin{itemize}
\item For a tree $T$ with $t$ nodes, it can be decided in \( 2^{\Oh(t^2 \log t)} \cdot n^3 \) time, whether  an $n$-vertex graph \( G \)  is a proper \( T \)-graph.
For yes-instances, our algorithm outputs a proper $T$-representation.
This proves that the recognition problem for proper $H$-graphs, where $H$ required to be a tree, is fixed-parameter tractable when parameterized by the size of $T$.
Previously only \cNP-completeness was known.
\item  Contrasting to the first result, we prove that if $H$ is not constrained to be a tree, then the recognition problem becomes much harder. Namely, we show that there is a multigraph $H$ with 4 vertices and 5 edges such that it is \cNP-complete to decide whether $G$ is a proper $H$-graph.
\end{itemize}
\end{abstract}

\newpage

\newcommand{\repr}{\mathcal{M}}

\section{Introduction}

An \emph{intersection representation} of a graph $G=(V,E)$ is a collection of nonempty sets $\set{M_v}{v \in V(G)}$ over a given universe such that $\{u,v\}$ is an edge of $G$ if and only if $M_u \cap M_v \neq \emptyset$.
A large area of research in graph algorithms is the study of restricted families of graphs arising from specialized intersection representations, e.g., the \emph{interval} graphs are the graphs with an intersection representation where the sets are intervals of $\mathbb{R}$, and the \emph{circular-arc} graphs are intersection graphs of families of arcs of the circle.
The interval graphs and similarly defined graph classes are often motivated from application areas such as circuit layout problems~\cite{sinden1966,BradyS1990}, scheduling problems~\cite{roberts1978graph}, biological problems~\cite{JosephMT92}, or the study of wireless networks~\cite{HusonS1995-wireless}.
We refer to the books~\cite{BrandstadtLS99,Golumbic04} for an introduction and survey of the known results on the related graph~classes.

A key feature of these specialized intersection representations is that they can often be used to obtain efficient algorithms for standard combinatorial optimization problems, e.g., it is well-known~\cite{BrandstadtLS99} that the \textsc{Clique} and \textsc{Independent Set} problems, as well as various coloring and Hamiltonicity problems are all efficiently solvable on interval graphs, and the algorithms often leverage on the intersection representation.
This led Bir\'o~et~al.~\cite{biro1992precoloring} to introduce an elegant %
 family of intersection graph classes, called \emph{$H$-graphs},  over universes that may be seen as (multi) graphs.
Formally, the parameter $H$ is a multigraph, and a graph $G$ is an \define{$H$-graph} when there is a
\define{subdivision} $\Hsub$ of $H$\footnote{%
$\Hsub$ is obtained from \(H\) by iteratively replacing an edge \(\{u,v\}\) by a path \(uwv\), where \(w\) is a new vertex.}  and a collection \( \repr = \set{ M_v \subseteq V(\Hsub) }{ v \in V(G) } \) of sets, where we refer to $M_v$ as the \define{model of $v$}, such that
\begin{itemize}
	\item
	for every $v \in V(G)$, its model $M_v$ induces a connected subgraph of \( \Hsub \), and
	\item
	\( \{u,v\} \in E(G) \) if and only if \( M_u \cap M_v \neq \emptyset \).
\end{itemize}
In this context, $\Hsub$ \define{represents} $G$.
Observe that, for any interval graph $G$, there is a path $P$ (i.e., a subdivision of $K_2$) such that $P$ represents $G$ meaning that  the interval graphs are precisely the $K_2$-graphs.
Similarly, the circular-arc graphs are $C$-graphs for any cycle $C$, and every chordal graph is a $T$-graph for some tree $T$, i.e., indeed, $H$-graphs can be seen as a parameterized generalization of several important families of intersection graphs, where $H$ is a parameter reflecting the distance of a graph to an interval graph.
Bir\'o~et~al.~\cite{biro1992precoloring} provided polynomial-time algorithms (via treewidth-based techniques) for coloring problems on $H$-graphs for fixed $H$, but left many interesting problems open.

The classes of $H$-graphs have seen renewed interest in recent years concerning their structure and recognition~\cite{ChaplickTVZ17}, relation to other graph parameters~\cite{ChaplickTVZ17,FominGR2020}, and primarily regarding the computational complexity of standard algorithmic problems when parameterized by the size of $H$~\cite{aaolu2019isomorphism,ChaplickFGK019,ChaplickTVZ17,ChaplickZ17,%
FominGR2020,JaffkeKST19,JaffkeKT20}.
Of particular relevance to our paper is the work on Hamiltonicity problems~\cite{ChaplickFGK019} as it introduces proper $H$-graphs, which are to \emph{proper interval graphs} %
as $H$-graphs are to interval graphs.
Namely,
for a graph $G$, a subdivision $\Hsub$ of $H$ \define {properly represents} $G$ when $\Hsub$ represents $G$ using models $\set{M_v \subseteq V(\Hsub)}{v \in V(G)}$ such that for each $u,v \in V(G)$, neither \( M_u \subseteq M_v \) nor \( M_v \subseteq M_u \).
In particular, on proper $H$-graphs polynomial size kernels (in the size of $H$) were developed for various Hamiltonicity problems~\cite{ChaplickFGK019}, but the recognition problems were left open.

The cornerstone problem for every graph class is recognizability, and we focus on the recognition problem for proper $H$-graphs both when $H$ is part of the input and when $H$ is fixed.
It is important to note that the problem of testing whether for a given graph $G$ and given tree $T$, the graph $G$ is a $T$-graph is NP-complete~\cite[Theorem~4]{KKOS15}.
In fact, the reduction~\cite[Theorem~4]{KKOS15} also implies that testing whether $G$ is a \emph{proper} $T$-graph is NP-complete.
In contrast to this, it is known that when $T$ is fixed, testing whether a given graph is a $T$-graph can be done in polynomial-time~\cite{ChaplickTVZ17}, i.e., \cXP in the size of $T$; but it is not known whether the problem is \cFPT in the size of $T$.
When going beyond trees the recognition problem becomes much harder.
Namely, for each fixed non-cactus graph $H$, $H$-graph recognition is \cNP-complete~\cite{ChaplickTVZ17}.
However, for fixed $H$, it seems that only two cases of proper $H$-graph recognition have been studied:
The proper interval graphs (proper $K_2$-graphs)~\cite{Corneil04,DengHH1996} and the proper circular-arc graphs (proper $C$-graphs, for any cycle $C$)~\cite{DengHH1996} can each be recognized in linear-time.

\subparagraph{Our Contribution.}
In our main result, we show that the recognition of proper $T$-graphs is fixed-parameter tractable (\cFPT) with respect to the size of $T$ by proving the following.

\begin{restatable}{theorem}{algorithm}
\label{lemma:algorithm}
There is an algorithm that, given an $n$-vertex graph \( G \) and a tree $T$ with $t$ nodes, decides whether \( G \) is a proper \( T \)-graph,
and if yes, outputs a proper $T$-representation, in  \( 2^{\Oh(t^2 \log t)} \cdot n^3 \) time.
\end{restatable}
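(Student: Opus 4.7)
The plan is to enumerate, in $2^{\Oh(t^2 \log t)}$ time, a compact combinatorial ``skeleton'' describing a candidate proper $T$-representation of $G$ up to equivalence, and then, for each skeleton, verify feasibility against $G$ in $\Oh(n^3)$ time. The skeleton is designed so that once it is guessed correctly, reconstructing an actual representation (with its subdivision $\Tsub$) reduces to a small collection of polynomial-time subproblems, primarily proper interval recognition on individual edges of $T$.

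The core combinatorial object is the \emph{type} of a vertex $v \in V(G)$ in a proper $T$-representation: a description of the shape of its model $M_v$ relative to $T$, consisting of the connected set $S_v \subseteq V(T)$ of original tree nodes contained in $M_v$, together with, for each edge of $T$ incident to but not inside $S_v$, an indicator of whether $M_v$ extends partially into the subdivision of that edge. There are at most $2^{\Oh(t)}$ such types. The \emph{skeleton} of a representation additionally records, for each edge $e \in E(T)$, the combinatorial ordering of the model boundaries along $e$'s subdivision that properness imposes, together with the set of ``crossing'' types whose models fully traverse $e$. A careful accounting of these orderings and of their couplings across the $t-1$ edges shows that a skeleton is describable by $\Oh(t^2 \log t)$ bits, giving the enumeration bound.

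Given a candidate skeleton, the algorithm tries to assign each vertex of $G$ to one of the listed types. For each node $a \in V(T)$, the vertices assigned to types containing $a$ must induce a clique of $G$, since all of their models meet at $a$. For each edge $e \in E(T)$, the vertices whose models have an endpoint in the subdivision of $e$ must induce a proper interval graph whose linear order matches the skeleton's prescription, with the ``crossing'' types acting as prescribed universal vertices; this is decidable, and the partial assignment recoverable, via standard proper interval recognition. Iterating these clique checks at every node and proper-interval checks at every edge, and propagating the partial assignments through the tree $T$, yields an $\Oh(n^3)$ verification. Once the assignment succeeds, the representation is built by subdividing each edge of $T$ with $\Oh(n)$ subdivision vertices placed according to the recovered orderings.

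The main obstacle is the canonicalization lemma justifying the $2^{\Oh(t^2 \log t)}$ bound: one must argue that any proper $T$-representation can be deformed so that its skeleton lies in the enumerable list, and that this discrete data suffices to reconstruct a representation. The $t^2$ factor in the exponent reflects the fact that types whose models span a branching node of $T$ must appear consistently in the orderings of \emph{every} edge they cross, so the skeleton must record, for each of the $\Oh(t^2)$ pairs of tree locations, how the orderings glue around common branching nodes. Establishing this global consistency in a form that both is enumerable within the claimed budget and suffices for reconstruction is the delicate combinatorial ingredient; once it is in place, the remainder of the algorithm reduces cleanly to per-edge proper interval recognition together with clique-consistency at branching nodes, both standard polynomial-time tasks.
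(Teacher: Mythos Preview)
Your proposal is a plan, not a proof, and it has a genuine gap at precisely the point you yourself flag as ``the delicate combinatorial ingredient.'' The skeleton is never defined precisely enough to verify the $\Oh(t^2\log t)$ bit bound: you say it records, for each edge of $T$, ``the combinatorial ordering of the model boundaries along $e$'s subdivision,'' but model boundaries are per-vertex, so there can be $\Theta(n)$ of them on a single edge, and even if you meant an ordering of \emph{types}, there are $2^{\Oh(t)}$ types touching a given branching node, not $\Oh(t)$. More seriously, the verification step does not work as stated. If many vertices share the same type---say all have $S_v=\{a\}$ for a branching node $a$ and extend partially into each of $a$'s incident edges---then properness among them is a global constraint coupling the orderings on \emph{all} those edges simultaneously; independent per-edge proper interval recognition plus a clique check at $a$ cannot certify it, and ``propagating the partial assignments through the tree'' hides rather than solves this.

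The paper takes a substantially different route that supplies the missing structure. It reformulates proper $T$-representations as \emph{compact representations}: clique trees satisfying a weak ``escape'' condition in place of strict properness, so the objects being placed are the maximal cliques of $G$ (at most $n$ of them, since $G$ is chordal), not vertex types. The key structural lemma is that all but $\Oh(t^2)$ maximal cliques are \emph{surrounded}---for each such clique one can determine, independently of any representation, which two cliques must flank it---and the surrounded cliques assemble into $\Oh(t^2)$ \emph{chains} that form rigid paths in every compact representation. What is enumerated (the paper's ``template'') is only the arrangement of these $\Oh(t^2)$ chains and non-surrounded cliques on a re-subdivision of $T$; this is the source of the $2^{\Oh(t^2\log t)}$ bound. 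Feasibility of a fixed template is then decided not by per-edge proper interval recognition but by a bottom-up \emph{normalization} procedure that greedily slides each branching attachment along its chain, governed by a ``potential'' proved to behave monotonically. That normalization step is exactly where the global properness coupling across edges incident to a branching node is resolved, and it has no counterpart in your plan.
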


To obtain our \cFPT algorithm for proper $T$-graph recognition, we first observe that the problem can be reduced to the case when the input graph $G$ is connected and chordal.
We proceed in the following three key steps.

In Section~\ref{sec:compact-reps}, we introduce compact representations which are an analog to the clique-trees of chordal graphs that incorporates the properness condition.
We characterize the proper $T$-graphs via these compact representations. This allows us to work with maximal cliques of the input graph that can be listed in linear-time due to the chordalilty of $G$.

In Subsections~\ref{subsection:surround:nodes} and \ref{subsection:chains}, independent of the tree $T$, we partition the maximal cliques into a collection of the so-called \emph{chains} each one necessarily forming a path in any proper $T$-representation, and the remaining singleton cliques that are marked and treated separately.
We show that having a compact $T$-representation means there are, in terms of the size of $T$, at most quadratically many of these marked cliques  and chains altogether.

In Subsections~\ref{sec:template} and \ref{sec:norm}, we combine these ideas to form our \cFPT algorithm for proper $T$-graph recognition.
First, our algorithm guesses a
layout of the chains and the marked maximal cliques. The remaining non-trivial task is to decide whether there is a compact representation corresponding to the guessed layout. We select a root of the tree and show a combinatorial result that if any  compact representation realizes some  layout, it can be assumed to have some special properties concerning the  usage of the nodes of degree at least three of the tree by the models with respect to the root.  We call representations satisfying these properties  \emph{normalized}.
Our algorithm follows the layout bottom-up
and constructs  a normalized representation if it exists.

\medskip
We complement our algorithmic result from Theorem~\ref{lemma:algorithm} by proving that if $H$ is not constrained to be a tree, the recognition problem for proper $H$-graphs becomes \cNP-complete even if $H$ has bounded size.
This negative result employs a reduction quite similar to the one used for (non-proper) $H$-graphs in~\cite{ChaplickTVZ17}, and 
is discussed and proven in Section~\ref{sec:rec-hard}
\begin{restatable}{theorem}{recHard}
\label{thm:rec-hard-intro}
For the 4-vertex, 5-edge multigraph $\Dplus$ defined by
$V(\Dplus)= \{a,b,c,d\}$ and $E(\Dplus) = \{ab, bc, bc, bc, cd\}$ (see also Figure~\ref{fig:diamond-plus}(a)\sv{ in the appendix}), proper $\Dplus$-graph recognition is \cNP-complete.
\end{restatable}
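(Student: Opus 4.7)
The plan is to prove membership and hardness separately. For membership in \cNP, I would argue that any proper $\Dplus$-graph admits a proper $\Dplus$-representation whose subdivision $\Dplus_{\text{sub}}$ has $O(n)$ subdivision vertices per edge of $\Dplus$: along each edge of $\Dplus$, any maximal run of consecutive subdivision vertices that is not an endpoint of any $M_v$ can be contracted without changing the represented graph or the properness property. A certificate therefore consists of such a subdivision together with the connected vertex sets $M_v$, whose correctness is checkable in polynomial time.

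For the hardness, the plan is to adapt the reduction of~\cite{ChaplickTVZ17} for recognizing (non-proper) $H$-graphs when $H$ is non-cactus, augmenting it with rigidity gadgets that respect the properness constraint. In $\Dplus_{\text{sub}}$, between $b$ and $c$ we have three internally disjoint paths $P_1, P_2, P_3$ (one per parallel $bc$-edge) together with pendant paths attached at $a$ and $d$. This configuration yields a natural ternary choice for any model contained in the $b,c$-core, while the pendants provide anchors that orient the representation so that $b$ and $c$ cannot be rotated away. I would reduce from an \cNP-hard problem with a ternary-selection structure (in the spirit of~\cite{ChaplickTVZ17}, this could be an appropriate 3-partition-like or 3-colouring-like source): each element of the source instance is encoded by a vertex whose model must be placed on one of $P_1, P_2, P_3$, and each constraint is encoded by overlap patterns that enforce consistency across the chosen placements.

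The main obstacle is to ensure that the properness condition---which forbids nesting of models---does not admit unintended representations that bypass the intended ternary decision. To this end, I would design the gadgets so that, in every proper $\Dplus$-representation of the constructed graph $G$, the models of a distinguished set of ``selector'' vertices must each span a single $P_i$, with its orientation determined by the anchors at $a$ and $d$. This rigidity is achieved via long chains of pairwise overlapping cliques (in the spirit of a proper interval backbone), whose essentially unique proper representation forces the remaining models into the desired geometry. Correctness of the reduction then amounts to verifying soundness (a \textsc{Yes}-instance of the source yields a proper $\Dplus$-representation via the intended assignment) and completeness (any proper $\Dplus$-representation of $G$ can be decoded into a \textsc{Yes}-answer of the source instance).
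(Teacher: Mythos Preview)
Your proposal is too schematic to be a proof, and where it is concrete it diverges from the actual reduction in a way that misses the point of the two pendant edges in $\Dplus$.

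First, the source problem in~\cite{ChaplickTVZ17} is \textsc{IntDim(1,3)} (interval dimension at most~3 for height-one posets), not a 3-partition or 3-colouring variant; the three parallel $bc$-edges do not encode a ``ternary choice'' per element but rather three interval orders whose intersection must realise the poset. The paper's construction is a minimal modification of that reduction: take the incomparability graph $G_{\mathcal P}$ of the height-one poset, whose vertices split into cliques $K_{\min}$ and $K_{\max}$, and add just four new vertices $v_{\min}$ (adjacent to $K_{\min}$), $u_{\min}$ (adjacent only to $v_{\min}$), and symmetrically $v_{\max},u_{\max}$. No long chains, no rigidity gadgets.

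Second, and this is the idea you are missing, the pendant edges $ab$ and $cd$ are precisely the device that handles properness. In the forward direction, the (non-proper) diamond representation from~\cite{ChaplickTVZ17} places $v_{\min}$ at the degree-3 node $b$, and the models of $K_{\min}$ are nested in general; one breaks all containments by subdividing the pendant $ab$-edge and extending each model in $K_{\min}\cup\{v_{\min}\}$ along this path according to a linear extension of the containment poset (with $u_{\min}$ placed at the tip). In the backward direction, the pendant vertices $u_{\min},u_{\max}$ pin down degree-2 nodes whose removal disconnects $\Dplus'$ into at most three $x_{\min}$--$x_{\max}$ paths, from which the three interval orders are read off after a simple normalisation (filling in gaps of models that already contain both branch nodes). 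Your ``selector vertices spanning a single $P_i$'' and ``proper-interval backbone'' gadgets are neither needed nor described at a level where one could verify them.
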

\section{Preliminaries}

\subparagraph{General Notation.}
We consider undirected graphs $G$ with vertex set $V(G)$ and edge set $E(G)$.
Usually we denote an edge as a set $\{u,v\}$.
However, when needed, we also denote an edge an ordered pair $(u,v)$.
For any subset $W$ of $V(G)$, we use $N(W)$ to denote the \emph{open neighborhood} of $W$, i.e., $N(W) \coloneqq \{ u \in V(G) \setminus W \mid \{u,w\} \in E(G), w \in W\}$, and for a single vertex $w \in V(G)$, $N(w)\coloneqq N(\{w\})$.
We denote the set of maximal cliques of a graph as $\CC(G)$.
The shorthand $[n]$ denotes the set $\{1,\dots,n\}$ of integers.

A \define{subdivision} $H'$ of a graph $H$ at an edge $\{u,w\}$ is the graph resulting from replacing edge $\{u,w\}$ with a path $u,v,w$ where $v$ is new vertex.
A \define{contraction} of a graph $H$ at an edge $\{u,w\}$ is the graph resulting from removing edge $\{u,v\}$ and identifying the two vertices $u$ and $w$.
Then $\Hsub$ is a \define{re-subdivision} of $H$ if it can be obtained by a series of contractions of $H$ (possibly none) followed by a series of subdivisions.
In particular a graph $G$ is a (proper) $H$-graph if and only if there is a re-subdivision that (properly) represents $G$.

Let \( T \) be a tree.
For any pair \( x,y \) of nodes of \(T\), we denote by \( T[x,y] \) the set of nodes of the unique path from \( x \) to \( y \) in \(T\).
Note that \( T[x,y] = T[y,x] \).
We similarly define \( T(x,y] := T[x,y] \setminus \{x\} \) and \( T(x,y) := T[x,y] \setminus \{x,y\} \).
A tuple of nodes \( (x_1,\dots,x_s) \) is \define{\( T \)-ordered} if there exists a path in the graph \( T \) from \( x_1 \) to \( x_s \) where the nodes \( x_1,\dots,x_s \) occur in this order, i.e., \(T[x_1,x_s]\) is the path \(x_1,\ldots,x_s\).

\subparagraph{$H$-graphs.}
Consider a re-subdivision $\Hsub$ of a graph $H$ that (properly) represents a graph $G$ using models $\set{M_v \subseteq V(\Hsub)}{v \in V(G)}$.
For clarity, we refer to each \( x \in V(\Hsub) \) as a \define{node} and to each \( v \in V(G) \) as a \define{vertex}.
We further refer to each node \( x \in V(\Hsub) \) as:
\begin{itemize}
	\item a \define{subdivision node} when it has degree two,
	\item a \define{branching node} when it has degree more than two, and
	\item a \define{leaf node} if it has degree one.
\end{itemize}
For a set of nodes \( X \subseteq V(\Hsub) \), let \( \V{X} := \set{ v \in V(G) }{ M_v \cap X \neq \emptyset } \).
When \( X = \{x\} \), we also write \( \V{x} \) to mean \( \V{\{x\}} \).
For a subset of vertices \( \Gamma \subseteq V(G) \), let \( M_\Gamma := \bigcup_{v \in \Gamma} M_v \).
We say that a set \( \Gamma \) of vertices (or nodes) is \define{connected} if the graph induced by~\( \Gamma \) is connected.

\begin{observation}
	\label{lemma:M:K:is:connected}
	Let $\Hsub$ (properly) represent a graph $G$.
	For any connected subset $\Gamma$ of $V(G)$, the model $M_\Gamma$ of $\Gamma$ is connected in $\Hsub$.
\end{observation}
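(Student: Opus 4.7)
The plan is to show directly that any two nodes of $M_\Gamma$ are linked by a walk that stays inside $M_\Gamma$. Fix nodes $x, y \in M_\Gamma$, and choose vertices $u, v \in \Gamma$ with $x \in M_u$ and $y \in M_v$. Since $\Gamma$ is connected in $G$, there is a path $u = w_0, w_1, \ldots, w_k = v$ in $G[\Gamma]$. By the definition of an $H$-representation, each model $M_{w_i}$ induces a connected subgraph of $\Hsub$, and since $\{w_i, w_{i+1}\} \in E(G)$, the models of consecutive vertices satisfy $M_{w_i} \cap M_{w_{i+1}} \neq \emptyset$.

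Using this, for each $i \in \{0, \ldots, k-1\}$ pick some node $x_i \in M_{w_i} \cap M_{w_{i+1}}$, and set $x_{-1} := x$ and $x_k := y$ for convenience. For every $i \in \{0, \ldots, k\}$, both $x_{i-1}$ and $x_i$ lie in $M_{w_i}$, so the connectivity of $M_{w_i}$ in $\Hsub$ yields a walk from $x_{i-1}$ to $x_i$ entirely contained in $M_{w_i} \subseteq M_\Gamma$. Concatenating these walks for $i = 0, 1, \ldots, k$ produces a walk in $\Hsub$ from $x$ to $y$ that remains in $M_\Gamma$, so $M_\Gamma$ is connected.

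There is no real obstacle: the statement is the standard fact that an intersection representation maps connected subgraphs to connected unions of models, and the argument invokes only the defining properties of $H$-graph representations (connected individual models plus the edge-to-intersection correspondence). In particular, the properness condition is never used, which is precisely why the statement is phrased with ``(properly)'' in parentheses: the conclusion holds for arbitrary $H$-representations, proper or not.
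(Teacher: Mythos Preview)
Your proof is correct and essentially the same as the paper's: both rely only on the two defining properties (each $M_{w}$ is connected, and adjacent vertices have intersecting models), with the paper phrasing it as an induction on $|\Gamma|$ via a spanning-tree leaf while you give the equivalent direct walk-concatenation along a path in $G[\Gamma]$.
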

\toappendix{
\lv{\begin{proof}}\sv{\begin{proof}[Proof of \cref{lemma:M:K:is:connected}]}
	The proof is by induction on \( |\Gamma| \).
	If \( |\Gamma|=1 \), then we are done.
	Now suppose \( |\Gamma| \ge 2 \).
	Since \( G[\Gamma] \) is connected, it has a spanning tree which contains a leaf~\( v \) and let \( u \) be its neighbor in the assumed spanning tree.
	Thus, we can apply induction on \( \Gamma \setminus \{v\} \) and obtain that \( M_{\Gamma \setminus \{v\}} \) is connected in \( \Hsub \).
	Recall that since \( uv \) is an edge in \( G[\Gamma] \), we have \( M_u \cap M_v \neq \emptyset \) by definition.
	Consequently, \( M_\Gamma \) is connected and we are done.
\end{proof}
}

\subparagraph{Chordal Graphs and Clique Trees.}
A graph is \emph{chordal} when it does not contain an induced $k$-vertex cycle for any $k \geq 4$.
The chordal graphs are well known to be characterized as the intersection graphs of subtrees of a tree, i.e., for every chordal graph $G$, there is a tree $T$ that represents $G$ ($G$ is a $T$-graph)~\cite{Buneman74,Gavril1974,Walter1978}.
In fact, $G$ is chordal if and only if there is a tree $T$ with models $\set{M_v \subseteq V(T)}{v \in V(G)}$ where, for each node $x \in V(T)$, $V_x$ is a maximal clique of $G$ and for every node $y \in V(T)$ with $y \neq x$, $V_y \neq V_x$~\cite{Buneman74,Gavril1974,Walter1978}.
These special representations of $G$ are called \define{clique trees}, and one can be constructed in linear-time~\cite{BlairP1993,GalinierHP1995}.
Note that chordal graphs have a simpler linear-time recognition algorithm~\cite{RoseTL1976}.
Finally, every chordal graph $G$ has at most $n$ maximal cliques where $n=|V(G)|$ and the sum of the sizes of the maximal cliques of $G$ is $O(n+m)$~\cite{Golumbic04}.
In particular, the total size of a clique tree of $G$ is $O(n+m)$.
Clearly, the latter two properties of chordal graphs also apply to (proper) $T$-graphs independently of $T$, and we will use them implicitly throughout our discussions.

Each chordal graph $G$ is also a proper $T$-graph for a tree~$T$.
Namely, if a tree~$T$ represents $G$ via models $\set{M_v}{v \in V(G)}$, any tree $T'$ built from $T$ as follows properly represents $G$: Extend each model $M_v$ by a new node $x_v$ and add $\{x,x_v\}$ to $E(T)$ for some $x \in M_v$.%

\section{Compact Representations of Proper T-Graphs}
\label{sec:compact-reps}

\newcommand{\nn}{\mathbb{N}}

\newcommand{\compact}{compact\xspace}
\newcommand{\compactness}{compactness\xspace}
\newcommand{\compactly}{compactly\xspace}

\newcommand{\escape}{escape\xspace}
\newcommand{\escapes}{escapes\xspace}

In this section we introduce an analogue of clique trees for proper \(T\)-graphs.
Ideally, $G$ being a proper \(T\)-graph would imply a clique tree with the topology of $T$ representing $G$ which satisfies properness; in other words:
	a re-subdivision $\Tsub$ of $T$ with models satisfying properness (i.e., forbidding \( M_u \subseteq M_v \) for every pair \(u,v \in V(G) \)) such that every node $x$ represents a unique maximal clique $\V{x}$.
However, a proper tree-representation of a graph \( G \) may use a lot of nodes just to ensure that the models \( M_u \) and \( M_v \) obey properness; which is already the case for $K_2$ and its interval representation.
Fortunately we may guarantee that almost all nodes represent a unique maximal clique by relaxing the properness condition.
Instead of forbidding containment, we require that when $M_u $ intersects $M_v$, there is a \emph{place} where \( M_u \) may be extended (as needed) to break containment.
That place is an edge \( \{x,y\} \) in the tree \(\Tsub\) where $u$ \define{strongly escapes} $v$, that is, $u,v \in \V{x}$ and $v \notin \V{\y}$.
Actually, a weaker version of \define{escape} suffices.
A vertex $u$ \define{escapes} $v$ if $u \in \V{x} \) and \( v \notin \V{y}$.

\begin{definition}
Let a tree $\Tsub$ with models $\{M_u\mid u\in V(G)\}$ represent a connected graph~$G$.
We say that $\Tsub$ is a \define{\compact representation of~$G$} if
	\begin{enumerate} %
		\item[\labeltext{(C1)}{it:compactRep:emptyLeaves}]
		for every leaf node \( x \in V(\Tsub) \), \( \V{x} = \emptyset \),
		\item[\labeltext{(C2)}{it:compactRep:uniqueMaximal}]
		there is a bijection between the non leaves of $V(\Tsub)$ and the maximal cliques $\CC(G)$, and
		\item[\labeltext{(C3)}{it:compactRep:escapes}] for every ordered pair \( (u,v) \) with \( u,v \in V(G) \),
			there is an edge \mbox{\( \{x,y\} \in E(\Tsub) \)} where \( u \) \escapes \( v \).
	\end{enumerate}
\end{definition}

\begin{observation}\sv{[$(\star)$]}
	\label{lemma:alternate:escape}
	Let a tree $\Tsub$ with models $\{M_u \mid u \in V(G)\}$ represent a connected graph~$G$
		and satisfy condition~\ref{it:compactRep:emptyLeaves}.
	For any vertices \(u,v\) of $G$, \( u \) and \( v \) satisfy the condition~\ref{it:compactRep:escapes} if and only if \( u \) and \(v \) satisfy condition
  \begin{enumerate}
  	\item[$(C3')$] if \( M_u \cap M_v \neq \emptyset \), then \( u \) strongly \escapes \( v \).
  \end{enumerate}
\end{observation}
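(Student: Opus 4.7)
The plan is to handle the two directions of the equivalence separately, using crucially that the models $M_u,M_v$ are connected subtrees of the tree $\Tsub$ and that condition~\ref{it:compactRep:emptyLeaves} forbids a non-empty model from occupying a leaf.

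For the forward direction, I assume \ref{it:compactRep:escapes} and $M_u \cap M_v \neq \emptyset$, and upgrade the edge $\{x,y\}$ given by \ref{it:compactRep:escapes} (with $x \in M_u$ and $y \notin M_v$) into a witness of strong \escape of $v$ by $u$. If $x \in M_v$ we are already done, since $\{x,y\}$ itself is such a witness. Otherwise I pick any $z \in M_u \cap M_v$; the $x$-$z$ path in $\Tsub$ is contained in the subtree $M_u$ and walks from outside $M_v$ into $M_v$, so it traverses an edge $\{x',y'\}$ with $x' \notin M_v$ and $y' \in M_v$, where both endpoints still lie in $M_u$. Then $u,v \in \V{y'}$ while $v \notin \V{x'}$, so $u$ strongly \escapes $v$ at $\{y',x'\}$. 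The only slightly delicate step here is routing the tracing path inside $M_u$, so that both endpoints of the chosen boundary edge of $M_v$ lie in $M_u$; otherwise the resulting edge may fail to witness a strong \escape.

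For the backward direction, the case $M_u \cap M_v \neq \emptyset$ is immediate since strong \escape implies \escape. When $M_u \cap M_v = \emptyset$, I fix any $x \in M_u$ and show it has a neighbor outside $M_v$. The key observation is that $x$ has at most one neighbor in $M_v$: otherwise two distinct neighbors $y_1,y_2 \in M_v$ of $x$ would force the unique $y_1$-$y_2$ path in the tree $\Tsub$, which necessarily passes through $x$, to lie in the subtree $M_v$, giving $x \in M_v$ and contradicting $M_u \cap M_v = \emptyset$. Since \ref{it:compactRep:emptyLeaves} guarantees $\deg_{\Tsub}(x) \geq 2$ (because any node of a non-empty model is a non-leaf), at least one neighbor $y$ of $x$ satisfies $y \notin M_v$, and $\{x,y\}$ then witnesses the required \escape.
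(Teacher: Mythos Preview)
Your proof is correct and follows essentially the same approach as the paper: for the forward direction you walk along the $x$--$z$ path inside $M_u$ to locate a boundary edge of $M_v$, which is exactly what the paper does (it phrases this as taking the node of $M_v$ on $\Tsub[z,x]$ farthest from $z$); for the backward direction with $M_u\cap M_v=\emptyset$ you and the paper both pick any $x\in M_u$ and use \ref{it:compactRep:emptyLeaves} to get $\deg(x)\ge 2$. Your argument is in fact more careful here than the paper's, which simply asserts that $x$ has a neighbor outside $M_v$ without spelling out the ``at most one neighbor in $M_v$'' observation.
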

\begin{proof}
	\forward
	If \( M_u \cap M_v = \emptyset \), then there is nothing to show.
	Otherwise there is a node \( z \) with \( u,v \in \V{z} \).
	Suppose \( u \) \escapes \( v \) at an edge \( (x,y) \in E(\Tsub) \).
	Let \( x' \) be the node on \( \Tsub[z,x] \) with \( v \in \V{x'} \) and maximal distance to \( z \).
	If \( x=x' \), then \( u \) strongly \escapes \( v \) at edge \( (x,y) \).
	Otherwise, since \( \Tsub[z,x] \subseteq M_u \), also \( u \in \V{x'} \).
	Let \( y' \) be the neighbor of \( x' \) further away from \( x' \), which exists because \( \V{x'} \neq \emptyset \) is a non-leaf as implied by~\ref{it:compactRep:emptyLeaves}.
	Thus, \( u \) strongly \escapes \( v \) at the edge \( (x',y') \).

	\backward
	If \( M_u \cap M_v \neq \emptyset \), we have that \( u,v \) have edge \( (x,y) \in E(\Tsub) \) where \( u,v \in \V{x} \) and \( v \notin \V{y} \).
	Thus, in particular, \( u \) \escapes \( v \) at an edge \( (x,y) \).
	If \( M_u \cap M_v = \emptyset \), there is a node \( x \in M_u \setminus M_v \), since the models are non-empty.
	Now, since~\ref{it:compactRep:emptyLeaves} holds, \( u \) is not a leaf in~\( T  \) and thus has at least one neighbor \( y \) where \( v \notin M_y \).
	Thus \( u \) \escapes \( v \) at an edge \( (x,y) \).
\end{proof}

Note that, the non-leaves of a \compact representation are in one-to-one correspondence with the maximal cliques $\CC(G)$.
Namely, we identify the non-leaves with the maximal cliques,
	which implicitly defines the models.
Thus, we often omit the explicit statement of the models.

\begin{observation}
    Let $G$ be a connected graph.
    For any \compact representation $\Tsub$ of $G$,
	\begin{enumerate}
		\item
		for every distinct non-leaves \( x,y \in V(\Tsub) \) there is a vertex \( u \in \V{x} \setminus \V{y} \), and
		\item
		for every edge \( \{x,y\} \in E(\Tsub) \) of non-leaves \( x,y \), there is a vertex \( u \in \V{x} \cap \V{y} \).
	\end{enumerate}
\end{observation}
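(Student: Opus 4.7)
The plan is to handle the two items of the observation separately, since they rest on rather different aspects of the \compactness conditions.

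For the first item, I expect a one-line argument from \ref{it:compactRep:uniqueMaximal}. Since $x \neq y$ are both non-leaves, $\V{x}$ and $\V{y}$ are distinct maximal cliques of $G$, and two distinct maximal cliques can never be nested (that would contradict maximality of the smaller one), so $\V{x} \not\subseteq \V{y}$ and any element of $\V{x} \setminus \V{y}$ serves as the desired $u$.

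For the second item, the plan is to argue by contradiction, exploiting the connectivity of $G$ that is built into the definition of a \compact representation. Suppose $\V{x} \cap \V{y} = \emptyset$ and delete the edge $\{x,y\}$ from $\Tsub$ to obtain two subtrees $T_x \ni x$ and $T_y \ni y$. Every model $M_v$ is connected in $\Tsub$, and by the assumption no model contains both $x$ and $y$, so each $M_v$ lies entirely within $V(T_x)$ or entirely within $V(T_y)$. This partitions $V(G)$ into sets $A \supseteq \V{x}$ and $B \supseteq \V{y}$, both nonempty by \ref{it:compactRep:uniqueMaximal}. For any $a \in A$ and $b \in B$, the models sit in disjoint subtrees, so $M_a \cap M_b = \emptyset$ and $\{a,b\} \notin E(G)$, contradicting the assumed connectivity of $G$.

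The only subtle point I foresee is the step in the second item where connectivity of each $M_v$ is used to force it into a single component of the tree after deleting $\{x,y\}$; however, once we observe that $\V{x} \cap \V{y} = \emptyset$ means precisely that no model spans the removed edge, this becomes immediate, and I do not anticipate any further obstacle.
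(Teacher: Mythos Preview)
Your proof is correct. The paper itself states this as an observation without giving any proof, and your argument---using \ref{it:compactRep:uniqueMaximal} for item~1 (distinct maximal cliques are incomparable) and a tree-separation argument from connectivity of $G$ for item~2---is precisely the natural justification the reader is expected to supply.
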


We (constructively) show that properness and \compactness are essentially equivalent.
To obtain \compactness from properness, we carefully contract edges where a node was used solely to assure properness.
This can involve contracting edges of \( T \) when the vertex sets of the nodes of an edge are comparable, e.g., if they are the same maximal clique.
To obtain properness from \compactness, we subdivide the tree and appropriately extend the models.

\begin{theorem}\sv{[$(\star)$]}
\label{lemma:compact:representation}
	For any connected graph $G$ and tree $T\neq K_1$,
	the graph $G$ is a proper $T$-graph if and only if
	there is re-subdivision $\Tsub$ of $T$
		that is a \compact representation~of~$G$.
\end{theorem}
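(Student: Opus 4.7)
My plan is to prove the two directions separately, with the forward direction being the more delicate one.

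For $(\Leftarrow)$, given a compact representation $\Tsub$ of $G$, I would construct a proper re-subdivision of $T$ by subdividing each edge of $\Tsub$ sufficiently and extending the models along the new nodes. For each edge $\{x,y\}$ of $\Tsub$, I would replace it by a long path and, for each $u \in \V{x} \setminus \V{y}$, extend $M_u$ by some number of new nodes toward $y$ (analogously from the side of $y$), while vertices in $\V{x} \cap \V{y}$ would span the entire new path. The key point, using Observation~\ref{lemma:alternate:escape}, is that (C3) is equivalent to the condition that whenever $M_u \cap M_v \neq \emptyset$, $u$ strongly escapes $v$ at some edge $\{x,y\}$ with $u,v \in \V{x}$ and $v \notin \V{y}$; at this witness edge an appropriate choice of extension lengths at $x$ lets $u$'s model reach a new node not in $M_v$, breaking the potential containment $M_u \subseteq M_v$. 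A careful uniform choice of orderings on each $\V{x}$ (for instance induced by a tree traversal) lets one handle all pairs simultaneously.

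For $(\Rightarrow)$, given a proper re-subdivision $\Tsub''$ of $T$ representing $G$, I would transform $\Tsub''$ into a compact re-subdivision of $T$ via three exhaustive reduction steps, each preserving both properness and the re-subdivision relation with $T$:
$(i)$ contract every edge $\{x,y\}$ with $\V{x}=\V{y}$;
$(ii)$ contract every edge $\{x,y\}$ with $\V{x}\subsetneq\V{y}$, merging $x$ into $y$;
$(iii)$ delete every leaf $x$ with $\V{x}\neq\emptyset$, removing $x$ from the tree and from every model containing it.
After step $(ii)$ every edge has incomparable vertex sets at its endpoints, which together with empty leaves yields (C3): for any ordered pair $(u,v)$, taking $x \in M_u \setminus M_v$ (which exists by properness, and for $M_u \cap M_v = \emptyset$ trivially), the fact that $x$ is a non-leaf provides a neighbor $y \notin M_v$, since $M_v$ is connected and cannot separate $x$ from more than one of its neighbors. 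After step $(iii)$, (C1) holds. For (C2), I would argue that each remaining non-leaf $x$ represents a distinct maximal clique: $\V{x}$ is a clique since the subtrees $\{M_v : v \in \V{x}\}$ all meet at $x$, and maximality follows from the Helly property of subtrees of a tree, because any $u \notin \V{x}$ adjacent to all of $\V{x}$ would force a neighbor $x_1$ of $x$ with $\V{x} \subseteq \V{x_1}$, contradicting the exhaustiveness of steps $(i)$ and $(ii)$; distinctness is analogous.

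The main obstacle is verifying the safety of step $(iii)$. If $x$ is a leaf after step $(ii)$ with $u \in \V{x}$, then $M_u = \{x\}$ (otherwise $u \in \V{y}$ for the unique neighbor $y$ of $x$, and $\V{x} \subseteq \V{y}$ would have been contracted in step $(ii)$). Properness then forces $|\V{x}| \le 1$ (two such vertices would share the identical model $\{x\}$) and $u$ to have no neighbors in $G$. Since $G$ is connected this can only occur when $G = K_1$, a case that admits a direct compact representation on any re-subdivision of $T$ containing at least one non-leaf node, which is guaranteed since $T \neq K_1$.
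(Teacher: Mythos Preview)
Your $(\Leftarrow)$ direction follows the paper's strategy (subdivide each edge and extend models along the new path, using the strong-escape witnesses supplied by (C3) via Observation~\ref{lemma:alternate:escape}). Two points are underspecified. First, ``a tree traversal ordering'' is not enough by itself; the paper makes the choice precise by, at each edge $(x,y)$, forming the DAG on $\V{x}\setminus\V{y}$ with an arc $u\to v$ whenever $M_u\subseteq M_v$ and extending according to a topological sort. Second, you do not address true twins ($M_u=M_v$), for which no ordering helps; the paper removes such vertices temporarily and reinserts them at the end using two extra subdivisions.

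Your $(\Rightarrow)$ direction has a genuine gap: the claim that your contraction steps preserve \emph{properness} is false. Take the proper interval representation of $K_3$ on the path $0\text{--}1\text{--}2\text{--}3\text{--}4\text{--}5$ given by $M_1=\{0,1,2,3\}$, $M_2=\{1,2,3,4\}$, $M_3=\{2,3,4,5\}$. Every edge of this path has comparable endpoint sets, so your steps (i)+(ii) contract the whole tree to a single node with $M_1=M_2=M_3$; already after the very first contraction of $\{0,1\}$ one has $M_1\subseteq M_2$. Both your derivation of (C3) (``taking $x\in M_u\setminus M_v$, which exists by properness'') and your safety argument for step (iii) (``properness forces $|\V{x}|\le1$, hence $G=K_1$'') rest on this false invariant; in the example $G=K_3\neq K_1$ and your procedure outputs a tree violating (C3).

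The fix is precisely the paper's: change the preserved invariant from properness to (C3). The paper first \emph{adds} a fresh empty leaf pendant to every existing leaf (this secures (C1) and, via Observation~\ref{lemma:alternate:escape}, lets the original properness yield (C3)), and then shows that each contraction performed to reach (C2) preserves (C3): if $u$ escaped $v$ at an edge that gets contracted, one exhibits a neighbouring edge of the merged node at which $u$ still escapes $v$. In the $K_3$ example the added empty leaves survive all contractions and provide the escape edges. Thus your high-level plan (contract until endpoint sets are incomparable, then handle leaves) is the right one, but the invariant you carry through must be (C3), not properness, and leaves should be \emph{added} at the start rather than \emph{deleted} at the end.
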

\begin{proof}
	\forward
	Given a proper representation we construct a \compact representation.
	Let $\{ M_u \}_{u \in V(G)}$ be the models of representation $T$.
	We observe that Condition~\ref{it:compactRep:emptyLeaves}, after a tiny modification of the given representation of~$G$, and Condition~\ref{it:compactRep:escapes} are already satisfied.
	Then we present a modification of the representation whose exhaustive application assures Condition~\ref{it:compactRep:uniqueMaximal} while maintaining conditions~\ref{it:compactRep:emptyLeaves}~and~\ref{it:compactRep:escapes}.

	\noindent\textbf{Maintaining~\ref{it:compactRep:emptyLeaves}.}
	First, for every leaf \( \ell \in V(\Tsub) \), introduce a new leaf \( \ell' \) adjacent to \( \ell \) with \( \V{\ell'} = \emptyset \).
	This clearly preserves properness of the original representation.
	It follows that the resulting tree satisfies Condition~\ref{it:compactRep:emptyLeaves} and is still a re-subdivision of \( T \).
	Here we need that $T\neq K_1$.

	\noindent\textbf{Maintaining~\ref{it:compactRep:escapes}.}
	We observe that properness of models \( \{ M_v \}_{v \in V(G)} \) implies Condition~\ref{it:compactRep:escapes}.
	Consider vertices \( u,v \in V(G) \) with \( M_u \cap M_v \neq \emptyset \).
	Then there are nodes \( x' \in M_u \cap M_v \) and \( y' \in M_u \setminus M_v \)
		(Here we require that the models $M_u$ and $M_v$ are non-empty; which is the case since if either set is empty, the graph $G$ were not connected).
  In particular, there is an edge \((x,y)\) on the path \( \Tsub[x',y'] \) such that \( u \) \escapes \( v \), namely that \( x \in M_u \cap M_v \) and \( y \notin M_v \).
	Thus Condition~\ref{it:compactRep:escapes} holds for every \( u,v \in V(G)\).

	\noindent\textbf{Maintaining~\ref{it:compactRep:uniqueMaximal}.}
	We use the following modification to shrink the tree \( \Tsub \), applicable to any non-leaf node \( z \in V(\Tsub) \) where \( \V{z} \) is not a unique maximal clique.
	Since \( \V{z} \) is not a unique maximal clique, there is a closest non-leaf node \( z' \) such that \( \V{z} \subseteq \V{z'} \).
	Let \( \bar{z} \) be the neighbor of \( z' \) on path \( \Tsub[z,z') \), possibly $\bar{z}=z$.
	Then \( \V{\bar{z}} \subsetneq \V{z'} \).
	Contract the edge \( (\bar{z},z') \) in \( \Tsub \) to a new node \( z^\star \), and if it exists in \( T \), also there.
	Update every model \( M_v \) by removing \( \{\bar{z}, z'\} \) and adding \( \{z^\star\} \) if \( M_v \cap \{\bar{z}, z'\} \neq \emptyset \).
	Since \( \V{\bar{z}} \subseteq \V{z'} \), the new models still represent the adjacency of \( G \).
	Because \( \bar{z} \) and \( z' \) are non-leaves, this modification does not interfere with Condition~\ref{it:compactRep:emptyLeaves}.

	We claim that this modification also preserves Condition~\ref{it:compactRep:escapes} for every \( u,v \in V(G) \).
	By \autoref{lemma:alternate:escape} we may assume that before the modification \( M_u \cap M_v \neq \emptyset \) and that there is an edge $(x,y) \in E(\Tsub)$ such that \( u,v \in \V{x} \) and \( v \notin \V{y} \).
	We show that after the contraction there is an edge \( (x',y') \) where \( u \) \escapes \( v \), which is that \( u \in \V{x'} \) and \( v \notin \V{y'} \).

	If \( y \notin \{\bar{z},z'\} \), then the possible contracted version of \( x \) has \( u \in \V{x} \) and \( y \) is unchanged.
	Thus still \( v \notin \V{y} \) and \( u \) \escapes \( v \).

	If \( x \notin \{\bar{z},z'\} \), then still \( u,v \in \V{x} \).
		Assume that after the contraction \( v \in \V{y} \).
		Because \( \V{\bar{z}} \subseteq \V{z'} \), we have that before the contraction \( v \in \V{z'} \setminus \V{\bar{z}} \) and thus \( y = \bar{z} \).
		Then before the contraction \( v \in \V{x} \cup \V{z'} \setminus \V{\bar{z}} \).
		Observe that \( x,\bar{z},z' \) are on a path in \( \Tsub \).
		A contradiction to that \( M_x \) induces a connected subgraph in tree \( \Tsub \).
		Thus \( y \in \V{y} \), and \( u \) still \escapes~\( v \).

	It remains to consider the case when \( \{x,y\} = \{\bar{z},z'\} \).
	Here, the new vertex \( z^\star \) has \( u \in \V{z^\star} \).
	Since \( y \in \{\bar{z},z'\} \) was not a leaf before, \( y \) has a non-contracted neighbor \( y' \notin \{x,y\} \) now adjacent to \( z^\star \).
	Because \( \Tsub \) is a tree, \( y' \notin M_v \) before and after the modification.
	Thus \( u \) \escapes \( v \) an edge \( (z^\star,y') \).

	If a non-leaf node \( z \in V(\Tsub) \) violates Condition~\ref{it:compactRep:uniqueMaximal}, that \( \V{z} \) is not a unique maximal clique, then the above modification applicable to \( z \).
	After the exhaustive application of this modification, the resulting representation satisfies all three conditions.
	Since the above modification shrinks the size of \( \Tsub \) by one, this procedure terminates.

	\medskip

	\backward
	Given a compact representation, we construct a proper representation.
	Let us first make some easy modification to the given $\Tsub$-representation such that our main step becomes more convenient.

	First, for each pair of distinct vertices \(u,v\) with \( M_u = M_v \), we temporarily remove~$v$ and $M_v$ from our graph and representation (we will reintroduce these later). Note that, since \(M_u = M_v\), it must be that \(u\) and \(v\) are twins, i.e., they have the same closed neighborhoods (after removal of both of them). Let $Z$ the set of vertices we have removed in this step and let~$\tilde{G} = G - Z$.

	Second, we subdivide every edge of \( \Tsub \) once and adjust the models of vertices accordingly, which will be of use later. %
	Condition~\ref{it:compactRep:escapes} remains satisfied, since if \( u \) \escapes \( v \) at an edge \( (x,y) \) which we subdivide into a path \( x,z,y \), then \( u \) \escapes \( v \) at the edge~\( (x,z) \).

	Now, we modify the representation to a proper one.
	For every pair of vertices \( {(u,v) \in \binom{V(\tilde{G})}{2}} \) with \( M_u \subseteq M_v \) we need to modify the representation to make \( u \) and \( v \) proper.
	We will do so by extending the model \( M_u \) beyond the model \( M_v \) at some edge \( (x,y) \in T \), where \( u,v \in \V{x} \setminus \V{y} \).
	There may be other pairs of vertices \( (u',v') \in \binom{V(\tilde{G})}{2} \) with \( M_{u'} \subseteq M_{v'} \) where we want to extend the model \( M_{u'} \) beyond \( M_{v'} \) at the same edge \( (x,y) \).
	To this end, for every edge \( (x,y) \), let us collect these `requests to extend' at \( (x,y) \) by an initially empty directed graph \( G_{x,y} \) over the vertex set \( V(\tilde{G}) \).
	An edge \( (u,v) \) in the graph \( G_{x,y} \) represents the request to extend \( M_u \) further than \( M_v \).
	We add edges \( (u,v) \in \binom{(\V{x} \setminus \V{y})}{2} \) to \( E(G_{x,y}) \) if \( M_u \subseteq M_v \).
	We assure in the following that \( G_{x,y} \) has no directed cycle, which is true initially.

  \begin{claim*}
		The graph \( G_{x,y} \) has no directed cycle.
	\end{claim*}
	\begin{proof}
		Consider a pair of vertices \( (u,v) \) with \( M_u \subseteq M_v \).
		Then, according to Condition~\ref{it:compactRep:escapes}, there is an edge \( (x,y) \) such that \( u,v \in \V{x} \) and \( v \notin \V{y} \).
		Since \( M_u \subseteq M_v \), also \( u \notin \V{y} \).
		Thus \( (u,v) \) meets the preconditions and we can add it to \( G_{x,y} \).
		Assume, for the sake of contradiction, that the edge \( (u,v) \) introduces a directed cycle \( u,v,w_1,\dots,w_s,u \) in \( G_{x,y} \).
		The preconditions to add each edge of the cycle imply for the models \( M_u \subseteq M_v \subseteq M_{w_1} \subseteq \dots \subseteq M_{w_s} \subseteq M_u \) and thus \( M_u = M_v \).
		However, this contradicts our modification to avoid equal models \( M_u = M_v \) at the very beginning.
		Thus, adding \( (u,v) \) does not introduce a directed cycle.
	\renewcommand{\qed}{\hfill$\lhd$}\end{proof}

	What remains is to implement the ordering of models \( M_v \) at an edge \( (x,y) \) for \( {v \in \V{x} \setminus \V{y}} \) contained in \( G_{x,y} \).
	Recall that we initially subdivided every edge, which means that there are no \( v' \in \V{y} \setminus \V{x} \) to be considered.
	Let \( C_{x,y} = u_s,\dots,u_0 \) be a topological ordering of \( G_{x,y} \).
	Such an ordering exists, since \( G_{x,y} \) is a directed acyclic graph.
	We subdivide the edge \( (x,y) \) into a path \( x, z_1, \dots, z_s, y \).
	For \( i \in [s] \) we extend the model \( M_{u_i} \) by adding nodes \( \{z_1,\dots,z_i\} \) to it.
	For every other vertex \( w \in V(G) \) with model \( M_{w} \) containing both \( x \) and \( y \), we add the whole set of new nodes \( \{z_1,\dots,z_s\} \) to \( M_w \).
	Then, after the modification, every model \( M_v \) for \( v \in V(G) \) is still a connected subgraph of the subdivided tree \( \Tsub \).
	Moreover, every already proper pair of vertices \( u,v \) (that is, where there is a node \( z \in M_u \setminus M_v \) and a node \( z' \in M_v \setminus M_u \)) remains proper, since the above modification does not change \( \V{z''} \) for any node \( z'' \in V(\Tsub) \).

	We apply this modification to every edge \( (x,y) \) of \( T \).
	Let \( T^\star \) be the resulting tree and \( M_v^\star \) the resulting model of a vertex \( v \in V(G) \).
	It remains to show that if for \( (u,v) \in V(G) \) where initially \( M_u \subseteq M_v \) there is now a node \( y \in M_u^\star \setminus M_v^\star \).
	If \( M_u \subseteq M_v \), then as seen above there is and edge \( (x,y) \) and we add edge \( (u,v) \) to \( G_{x,y} \).
	Thus, in \( G_{x,y} \) we have that \( u \) is ordered after \( v \).
	Thus, there is a node \( z_i \) from the subdivision of \( (x,y) \) with \( z_i \in M_u^\star \setminus M_v^\star \).

  Finally, we reintroduce the vertices of \(Z\) and appropriate models for them.
  Note that for each \(v \in Z\), there is a \(u \in V(G) \setminus Z\) such that \(u\) and \(v\) are twins.
  We now iteratively consider each \(v \in Z\), let \(u \in V(G) \setminus Z\) be a twin of \(v\), remove \(v\) from \(Z\), and define the new \(M_v\) as a copy of \(M_u\).
  Note that besides the relation of \(M_u\) and \(M_v\), our representation is proper.
	We now modify \(M_u\) and \(M_v\) so that we again have a proper representation.
	By Condition~\ref{it:compactRep:emptyLeaves}, no node of either \( M_u \) or \( M_v \) is a leaf of \(\Tsub\).
	Thus there are two edges \( (x,y) \) and \( (x',y') \) in the tree \( \Tsub \) where \( u \) \escapes \( v \), meaning that \( u,v \in \V{x} \cap \V{x'} \) but \( u,v \notin \V{y} \cup \V{y'} \).
	We subdivide the edge \( (x,y) \) and \( (x',y') \) to a path \( x,z,y \) respectively to a path \( x',z',y' \), and adjust the models accordingly (i.e., if both $x,y \in M_w$ for a vertex~$w$, we add $z$ to $M_w$; analogously for $x',y',z'$). %
	Additionally, extend the model \( M_u \) by \( z \) and the model \( M_v \) by \( z' \).
	Then \( M_u \nsubseteq M_v \) and \( M_v \nsubseteq M_u \).
	Further, the adjacency of \( G \setminus Z\) remains satisfied and the properness of our representation is restored.

	After emptying the set \(Z\) in this way we end up with a proper \( \Tsub \)-representation of \(G\) where \( \Tsub \) is a subdivision of \(T\).
\end{proof}

Thus, instead of finding a proper representation, we search for a \compact representation.
The actual `properness' is hidden in the condition~\ref{it:compactRep:escapes}, and we may refer to this condition as \define{properness}.
See also examples in \autoref{figure:examples}.

Our algorithm further relies on the following property of the models $M_\Gamma$ of the (connected) components of $G-\V{\y}$ for some non-leaf $\y$; see also \autoref{figure:partition}(a).
Let $\KK{\y}$ (w.r.t.\ graph $G$) be the vertex sets of the components of $G - \V{\y}$.
We note that $N(\Gamma) \subseteq \V{\y}$ for every $\Gamma \in \KK{\y}$.
Let a node $\y$ be an \define{\eye} if it is a neighbor of a leaf or if it is a branching node.

\newcommand{\K}{\Gamma}
\begin{lemma}\sv{[$(\star)$]}
	\label{lemma:K:of:m:partition}
    Let $G$ be a connected graph.
	For any \compact representation \( \Tsub \) of \( G \) and any non-leaf node \( \y \in V(\Tsub) \),
	\begin{enumerate}
		\item\label{lemma:K:of:m:partition:1}
		\( \{\y\} \) and \( M_\K \) for \( \K \in \KK{\y} \) partition the non-leaves of \( \Tsub \), and
		\item\label{lemma:K:of:m:partition:3}
		each partition \( M_\K \) contains an \eye, hence $|\KK{\y}|\leq |V(T)|$.
	\end{enumerate}
\end{lemma}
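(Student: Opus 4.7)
The plan is to prove the two parts separately. For Part~\ref{lemma:K:of:m:partition:1}, I will verify that $\{\y\}$ and $M_\K$ for $\K \in \KK{\y}$ are pairwise disjoint on the non-leaves and together cover every non-leaf of $\Tsub$. Disjointness rests on two observations: first, $\y \notin M_\K$ for any $\K$, since $\y \in M_v$ for some $v \in \K$ would force $v \in \V{\y}$, contradicting $\K \subseteq V(G) \setminus \V{\y}$; second, if distinct components $\K \neq \K'$ shared a node $x \in M_\K \cap M_{\K'}$, then some $u \in \K$ and $w \in \K'$ would both contain $x$ in their models, giving an edge $\{u,w\} \in E(G)$ between distinct components of $G - \V{\y}$, which is absurd. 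Coverage then follows from condition~\ref{it:compactRep:uniqueMaximal}: for any non-leaf $x \neq \y$, the set $\V{x}$ is a maximal clique distinct from $\V{\y}$, hence not contained in $\V{\y}$, so some $u \in \V{x} \setminus \V{\y}$ lies in a component $\K$, yielding $x \in M_u \subseteq M_\K$.

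For Part~\ref{lemma:K:of:m:partition:3}, I will show the existence of an eye in each $M_\K$ by contradiction. Suppose some $M_\K$ contains no eye. Then every node of $M_\K$ has degree exactly $2$ in $\Tsub$ and is not adjacent to any leaf, and connectedness forces $M_\K$ to be a simple path $x_1, \ldots, x_s$ in $\Tsub$. Since $\y \notin M_\K$, exactly one endpoint is closer to $\y$ than the other; let $x_s$ denote the far endpoint and let $x_{s+1}$ be its unique neighbor in $\Tsub$ outside $M_\K$. Then $x_{s+1}$ is neither a leaf (otherwise $x_s$ would itself be an eye) nor equal to $\y$, so by Part~\ref{lemma:K:of:m:partition:1} it lies in $M_{\K''}$ for some $\K'' \neq \K$. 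Invoking the observation stated just before the lemma (every edge of $\Tsub$ between non-leaves carries at least one common vertex), pick $u'' \in \V{x_s} \cap \V{x_{s+1}}$.

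Part~\ref{lemma:K:of:m:partition:1} rules out $u'' \in \K$ (else $x_{s+1} \in M_{u''} \subseteq M_\K$) and $u''$ belonging to any other component $\K''' \neq \K$ (else $x_s \in M_{\K'''}$), so $u'' \in \V{\y}$. Because $M_{u''}$ is a connected subtree of $\Tsub$ containing both $\y$ and $x_{s+1}$, it must contain the entire $\Tsub$-path from $\y$ to $x_{s+1}$, and this path traverses all of $M_\K$ together with the external neighbors of $M_\K$ in $\Tsub$. Now choose any $v \in \K$ with $x_s \in M_v$; since $M_v \subseteq M_\K$ is a connected subpath containing the endpoint $x_s$, every $\Tsub$-neighbor of any node of $M_v$ lies inside $M_{u''}$, so no edge $\{c,d\} \in E(\Tsub)$ can satisfy $c \in M_v$ and $d \notin M_{u''}$, contradicting condition~\ref{it:compactRep:escapes} for the ordered pair $(v, u'')$. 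I expect this inflation of $M_{u''}$ to be the main conceptual obstacle: once one pinpoints that $u''$ must come from $\V{\y}$ and that the connectedness of $M_{u''}$ forces it to engulf all of $M_\K$ plus its neighbors, the escape condition collapses and the contradiction emerges.

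Finally, the bound $|\KK{\y}| \leq |V(T)|$ follows by a short count: distinct $M_\K$'s have disjoint node sets (Part~\ref{lemma:K:of:m:partition:1}) and each contains an eye, while the total number of eyes of $\Tsub$ is at most the number of branching nodes plus the number of leaves of $\Tsub$; subdivisions introduce only degree-$2$ nodes and contractions can only decrease each of these counts, so both are bounded by their counterparts in $T$, whose sum does not exceed $|V(T)|$.
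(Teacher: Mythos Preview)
Your proof is correct and follows essentially the same approach as the paper's. The only stylistic difference is in how the engulfing vertex is found: you pick $u'' \in \V{x_s} \cap \V{x_{s+1}}$ directly from the boundary edge and then argue $u'' \in \V{\y}$, whereas the paper picks it as an element of $N(\Gamma_{x_{s+1}}) \cap \V{\y}$; both lead to the identical contradiction of condition~\ref{it:compactRep:escapes}.
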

\begin{proof}
	\ref{lemma:K:of:m:partition:1}.
	We show that \( M_\K \) for \( \K \in \KK{\y} \) and \( \{\y\} \) are disjoint sets, and cover every non-leaf node of \( V(\Tsub) \).
	Assume that two components \( \K, \K' \in \KK{\y} \) intersect in a node \( \y \in V(\Tsub) \).
	Then there are (possibly equal) vertices \( u \in \V{\y} \cap \K \) and \( v \in \V{\y} \cap \K' \).
	Thus, \( \K \) neighbors \( \K' \), and hence \( \K = \K' \).
	Consider a non-leaf node \( x \in V(\Tsub) \).
	Then there is a vertex \( u \in \V{x} \setminus \V{\y} \).
	Thus there is a component containing of \( \KK{\y} \) containing \( u \).
	If there is a component \( \K \in \KK{\y} \) where \( \y \in M_\K \), then there is a vertex \( u \in \V{\y} \cap \K \).
	This implies the contradiction \( u \in N(\K) \cap \K \).

	\ref{lemma:K:of:m:partition:3}.
	Consider a component $\K \in \KK{\y}$.
	Assume, for the sake of contradiction, that the model \( M_\K \) neither neighbors a leaf nor contains a branching node.
	Then there is a non-leaf neighbor \( x \) of \( M_\K \) such that $x$ and $\y$ are separated by $M_k$.
	Further, there is a component \( \Gamma_{x} \in \KK{\y} \setminus \{\K\} \) containing non-empty \( \V{x} \).
	Moreover, there is a vertex \( v \in N(\Gamma_x) \cap \V{\y} \),
		which has a model $M_v$ containing nodes \( x \) and \( \y \).
	Since \( M_\K \) contains only subdivision nodes, model \( M_v \) has to contain the whole \( M_{\K} \).
	Then \( \K \) contains at least one vertex \( u \) with model \( M_u \subseteq M_v \).
	Thus \( u \) does not \escape \( v \) in contradiction to a \compact representation.

	Now the final remark that $|\KK{\y}|\leq |V(T)|$ follows easily.
	Each \eye of $T$ is contained in at most one connected component and each component contains a maximal clique whose node is an \eye.
	Thus the number of components $|\KK{\y}|$ is bounded by the number of \eyes in $T$ which is $|V(T)|$.
\end{proof}

\newcommand{\Y}{Y}

\section{Finding a Compact Representation}
\label{sec:finding-compact-reps}

In this section, we prove Theorem~\ref{lemma:algorithm}; namely, we establish our \cFPT algorithm.
Throughout the discussion, we assume $G$ is connected, and handle disconnected graphs within the final proof.
From Section~\ref{sec:compact-reps}, it suffices to check for a \compact representation $\Tsub$.
In Subsection~\ref{subsection:surround:nodes}, we establish the concept of \emph{surrounded} nodes, which leads, in Subsection~\ref{subsection:chains}, to the \emph{chains} that necessarily form paths in any \compact tree representation.
We establish that the chains (composed of surrounded nodes), and the remaining non-surrounded nodes are only quadratically many in the size of the desired tree $T$.
In Subsection~\ref{sec:template}, we formalize the way these pieces fit together as \emph{templates}.
Finally, Subsection~\ref{sec:norm} contains the algorithm establishing Theorem~\ref{lemma:algorithm}. It proceeds by enumerating candidate templates and (non-trivially) testing whether a template admits a \compact representation via a bottom-up procedure.

\newcommand{\Org}{{S}}

\newcommand{\lc}{\lambda}
\newcommand{\yc}{\gamma}
\newcommand{\rc}{\rho}
\subsection{Surrounded Nodes}
\label{subsection:surround:nodes}

We establish conditions for arbitrary nodes $\ell,\y,r$ that determines the relative position of $\ell,\y,r$ in any representation $\Tsub$,
	a relation which we denote as $(\ell,\y,r)$ surrounding.
Clearly, this positioning is unlikely to be possible for every triple $(\ell,\y,r)$ since this would yield a polynomial-time algorithm.
However, by carefully crafting our first two requirements, we may still relatively position almost all nodes $\ell,\y,r$.
We only fail for a few nodes $\y$, at most quadratic in the size of the host tree $T$, hence our parameter.


\begin{figure}[t]
\centering
\begin{tikzpicture}[scale=1]
	\begin{scope}[local bounding box=boxA]
		\iiclique{\lcolor}{1.75}{$\ell$}
		\iiclique{\ycolor}{2.25}{$\y$}
		\iiclique{\rcolor}{2.75}{$r$}

		\node[ellipse,draw=\zcolor,fill=\zcolor,label=right:$z$,minimum height=10] at (2.35,0.5) {};

		\intervalll{1}{2}{1.4}
		\intervalll{2}{3}{1.4}
		\intervalll{3}{3.5}{1.4}
		\intervalll{1}{1.5}{1.2}
		\intervalll{1.5}{2.5}{1.2}
		\intervalll{2.5}{3.5}{1.2}
		\intervalll{1.5}{3}{1}

		\vintervall{1.0}{0.4}{2.25}
		\vintervall{0.7}{0.4}{2.45}
	\end{scope}
	\node[xshift=-.6cm, yshift=-.6cm] at (boxA.west) {\textbf{(a)}};
	\begin{scope}[local bounding box=boxB,xshift=3.6cm]
		\iiclique{\lcolor}{1.75}{$\ell$}
		\iiclique{\ycolor}{2.25}{$\y$}
		\iiclique{\rcolor}{2.75}{$r$}

		\intervalll{1}{1.5}{1.3}
		\intervalll{1.5}{3}{1.3}
		\intervalll{3}{3.5}{1.3}
		\intervalll{1}{2}{1.1}
		\intervalll{2}{2.5}{1.1}
		\intervalll{2.5}{3.5}{1.1}

		\vintervall{1.1}{0.4}{2.25}
		\vintervall{0.7}{0.4}{2.45}
	\end{scope}
	\node[xshift=-.6cm, yshift=-.6cm] at (boxB.west) {\textbf{(b)}};
	\begin{scope}[local bounding box=boxC, xshift=8.4cm]
		\iiclique{\lcolor}{1.75}{$\ell$}
		\iiclique{\ycolor}{2.25}{$\y$}
		\iiclique{\rcolor}{2.75}{$r$}

		\foreach \x in {0,1,2,3} {
			\intervalll{\x}{\x+1}{1}
			\intervalll{\x+0.5}{\x+1.5}{1.2}
			\intervalll{\x}{\x+0.5}{1.4}
			\intervalll{\x+0.5}{\x+1}{1.4}
		}\foreach \x in {4} {
		\intervalll{\x}{\x+0.5}{1.4}}
	\end{scope}
	\node[xshift=-.6cm,yshift=-.8cm] at (boxC.west) {\textbf{(c)}};
\end{tikzpicture}
\caption{
(a)
A proper $K_{1,3}$-graph.
Triple $(\ell,\y,r)$ is surrounding.
Any representation positions $y$ between $\ell$ and $r$.
Component $\V{z}\setminus\V{\ell}$ complies with condition~\ref{it:surround:subset}.
Further, edge $\{z,\y\}$ may be replaced by edge $\{z,\ell\}$ or $\{z,r\}$.
(b)
A proper $K_{1,3}$-graph.
Triple $(\ell,\y,r)$ is not surrounding.
A `private' vertex in $\V{\y}\setminus N(\K_\ell)\cup N(\K_r)$ contradicts condition~\ref{it:surround:V:y}.
Indeed, any of $\ell,\y,r$ may realize the branching node.
(c)
Triple $(\ell,\y,r)$ is surrounding.
For $\{\K_\ell,\K_r\}=\KK{\y}$ condition~\ref{it:surround:2} allows private vertices in $\V{\y}$;
otherwise, this proper interval graph would have no surrounded nodes.
}
\label{figure:examples}
\end{figure}
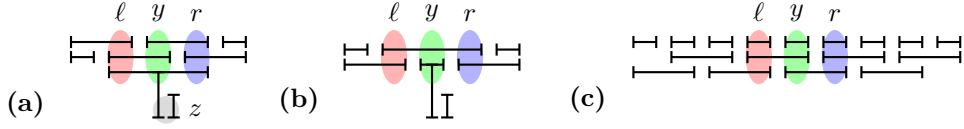

\begin{definition}
	Consider non-leaves \( \ell,\y,r \) of \( \Tsub \).
	There is a component $\K_{\ell}\in\KK{\y}$ containing $\V{\ell}\setminus\V{\y}$,
		likewise a component $\K_{r}\in\KK{\y}$ containing $\V{r}\setminus\V{\y}$.
	Then $(\ell,\y,r)$ is a \define{surrounding triple}, if the following conditions are met:
	\begin{enumerate}
		\item[\labeltext{(1)}{it:surround:2}]
			If $\{\K_\ell,\K_r\} = \KK{\y}$, then
			$\V{\y} = N(\K_\ell) \cup N(\K_r)$ or $N(\K_\ell) \cap N(\K_r) = \emptyset$;
		\item[(2)] if \( \{\K_\ell,\K_r\} \subsetneq \KK{\y} \),
		\begin{enumerate}
			\item[\labeltext{(2A)}{it:surround:V:y}]
				$\V{\y} = N(\K_\ell) \cup N(\K_r)$, and
			\item[\labeltext{(2B)}{it:surround:subset}]
				for every $\K \in \KK{\y} \setminus \{\K_\ell,\K_r\}$ we have: $N(\K) \subseteq N(\K_\ell) \cap N(\K_r)$; and
		\end{enumerate}
		\item[\labeltext{(3)}{it:surround:minimality}]
		for every \( \ell',r' \) that satisfy (1), (2A), and (2B) where \( \K_{\ell} = \K_{\ell'} \) and  \( \K_r = \K_{r'} \),
			we have \( \V{\ell'} \cap \V{\y} \subseteq \V{\ell} \cap \V{\y} \) and \( \V{r'} \cap \V{\y} \subseteq \V{r} \cap \V{\y} \).
	\end{enumerate}
\end{definition}

Note, that the definition does \stress{not} depend on the considered representation $\Tsub$.
Importantly, $\K_\ell\neq\K_r$ is (implicitly) required by condition~\ref{it:surround:2}.
We say that \define{$\y$ is surrounded}, if a triple $(\ell,\y,r)$ is a surrounding triple for some nodes $\ell,r$.
See \autoref{figure:examples} for examples.

For each node $\y$, the connected components $\K_\ell$ and $\K_r$ satisfy or falsify the first two conditions independently of the precise maximal cliques $\V{\ell}$ and $V_r$.
However, condition~\ref{it:surround:minimality} requires $\V{\ell}$ and $\V{r}$ to be the closest ones to $\V{\y}$.
In many cases condition~\ref{it:surround:minimality} implies that $\ell$ and $r$ directly neighbor $\y$.
In fact, for a surrounded node $\y$, there are sets of nodes $L$ and $R$ that exactly localize the nodes $\ell$ and $r$ forming a surrounding triple with~$\y$.
Formally, $L,R$ are \define{$\y$-\guards}:
A set of non-leaves \( L \subseteq V(\Tsub) \) is a \( \y \)-\define{\guard} if \( L \cup \{\y\} \) is connected, and \( \y \) is adjacent to a node $\ell \in L$ such that \( \{\ell\}=L \) or \( \ell \) is a branching node of \( \Tsub \); see \autoref{figure:L:R}(b).

\begin{lemma}\sv{[$(\star)$]}\label{lemma:surround:characterization}
	Let a tree $\Tsub$ be a \compact representation of a connected graph \( G \).
	Let \( \y \) be surrounded.
	There are distinct \( \y \)-guards \( L \) and \( R \) such that
		$(\ell,\y,r)$ is surrounding if and only if \( (\ell,r) \in (L \times R) \cup (R \times L) \).
	Moreover there is an $\Oh(t^3 n^3)$-time algorithm that determines sets $L,R$ for every surrounded node $\y$; where $t=|V(T)|$ and $n=|V(G)|$.
\end{lemma}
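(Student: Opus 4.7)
The plan is to first show that the unordered component pair $\{\K_\ell,\K_r\}$ is uniquely determined by the surrounded node~$\y$, then to identify $L$ and $R$ as the maximality-selected cliques in the two subtrees attached to~$\y$, and finally to realize this combinatorially over the maximal cliques of~$G$. For uniqueness, when $|\KK{\y}|=2$ only condition~\ref{it:surround:2} applies and $\KK{\y}$ itself is the only admissible pair. When $|\KK{\y}|\ge 3$, assume for contradiction that two distinct pairs $\{\K_1,\K_2\}$ and $\{\K_3,\K_4\}$ both satisfy~\ref{it:surround:V:y} and~\ref{it:surround:subset}. Feeding the $N(\K_i)\subseteq N(\K_\ell)\cap N(\K_r)$ constraint of one pair into the $\V{\y}=N(\K_\ell)\cup N(\K_r)$ constraint of the other collapses the neighborhoods, yielding $\V{\y}=N(\K_1)$ in every subcase. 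Geometrically this forces every $v\in \V{\y}$ to extend its model to the neighbor $x$ of~$\y$ in $M_{\K_1}$, so $\V{\y}\subseteq \V{x}$; since both are maximal cliques by~\ref{it:compactRep:uniqueMaximal}, they coincide, contradicting the bijection therein.

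Given the unique pair $\{\K_\ell,\K_r\}$, let $\Tsub_\ell,\Tsub_r$ be the subtrees of $\Tsub-\y$ containing $M_{\K_\ell}$ and $M_{\K_r}$, and let $x_\ell,x_r$ be the neighbors of~$\y$ in them. Conditions~\ref{it:surround:2}--\ref{it:surround:subset} depend only on the pair, so any non-leaf $\ell'$ of $\Tsub_\ell$ trivially satisfies them, and condition~\ref{it:surround:minimality} selects those $\ell'$ with $\V{\ell'}\cap \V{\y}$ maximal; I set $L$ to be this set of maximal candidates, and define $R$ symmetrically. Because $\V{\cdot}\cap \V{\y}$ decreases monotonically as one moves away from~$\y$ along $\Tsub$, the neighbor $x_\ell$ realizes the maximum, so $x_\ell\in L$, and the same monotonicity yields connectedness of $L\cup\{\y\}$. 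A short argument using~\ref{it:compactRep:uniqueMaximal} together with~\ref{it:compactRep:escapes} rules out $L$ extending beyond a degree-two $x_\ell$: a deeper node $z_1$ with equal intersection would either force $\V{x_\ell}\subseteq \V{z_1}$ (contradicting the maximality of $\V{x_\ell}$) or produce a vertex $u\in \V{x_\ell}\setminus \V{z_1}$ with $M_u=\{x_\ell\}$ that cannot escape the vertices of $\V{x_\ell}\cap \V{\y}\subseteq \V{z_1}$, contradicting~\ref{it:compactRep:escapes}. Thus $L$ is a $\y$-guard, distinct from $R$ since they lie in different subtrees of $\Tsub-\y$. The stated equivalence then follows: by uniqueness every surrounding triple uses the pair $\{\K_\ell,\K_r\}$, and condition~\ref{it:surround:minimality} forces $\ell\in L, r\in R$ (or swap); conversely, for any $(\ell,r)\in L\times R$ conditions~\ref{it:surround:2}--\ref{it:surround:minimality} hold by construction.

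For the algorithm, $L$ and $R$ are subsets of maximal cliques of~$G$ and can be computed directly from $G$ without knowing $\Tsub$. For each maximal clique~$\y$ (at most $n$), I compute $\KK{\y}$ of $G-\V{\y}$ in linear time; by Lemma~\ref{lemma:K:of:m:partition} it has at most $t$ components. I enumerate the $O(t^2)$ unordered pairs, check conditions~\ref{it:surround:2}--\ref{it:surround:subset} in $O(tn)$ per pair, and for the unique valid pair (if any) sift through the at most $n$ candidate cliques on each side to retain those with maximum $\V{\cdot}\cap \V{\y}$ in $O(n^2)$ total. Summing over~$\y$ fits well within the claimed $O(t^3 n^3)$ bound. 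The main obstacle throughout is the uniqueness step: without it, $L$ and $R$ could straddle several subtrees of $\Tsub-\y$ and fail the $\y$-guard connectivity requirement; its proof hinges delicately on the bijection between non-leaves and maximal cliques imposed by~\ref{it:compactRep:uniqueMaximal}.
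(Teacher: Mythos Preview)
Your overall plan---first pin down the unordered component pair $\{\K_\ell,\K_r\}$ uniquely, then extract $L,R$ by maximality---is a clean alternative to the paper's route, which instead works directly with the set of nodes $\ell$ that participate in some surrounding triple and establishes one structural property after another (that $\ell,r$ lie in different subtrees of~$\y$, that only two subtrees occur, that $M_{\K_\ell},M_{\K_r}$ neighbor~$\y$, that $L$ is connected, etc.). Your uniqueness argument for the pair is essentially correct, though the phrase ``the neighbor $x$ of $\y$ in $M_{\K_1}$'' presupposes that $M_{\K_1}$ neighbors~$\y$; the argument in fact goes through with $x$ taken to be the neighbor of~$\y$ in the direction of $M_{\K_1}$, whether or not $x\in M_{\K_1}$.

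There is, however, a genuine gap in how you define $L$. You set $L$ to be the non-leaves $\ell'$ of the subtree $\Tsub_\ell$ with $\V{\ell'}\cap\V{\y}$ maximal, and you justify this by saying that conditions~\ref{it:surround:2}--\ref{it:surround:subset} ``depend only on the pair, so any non-leaf $\ell'$ of $\Tsub_\ell$ trivially satisfies them.'' That inference is false: those conditions are stated in terms of $\K_{\ell'}$, the component of $G-\V{\y}$ containing $\V{\ell'}\setminus\V{\y}$, and for an arbitrary $\ell'\in\Tsub_\ell$ one need not have $\K_{\ell'}=\K_\ell$. The subtree $\Tsub_\ell$ can host several $M_\K$'s (the partition of Lemma~\ref{lemma:K:of:m:partition} is over all non-leaves, not per subtree), so when $x_\ell$ is a branching node there may exist $\ell'\in\Tsub_\ell\setminus M_{\K_\ell}$ achieving the same maximal intersection $\V{\ell'}\cap\V{\y}=\V{x_\ell}\cap\V{\y}$. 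Such an $\ell'$ lands in your $L$, yet $(\ell',\y,r)$ is not surrounding because $\K_{\ell'}\neq\K_\ell$ and, by your own uniqueness step, no other component pair works---so the claimed equivalence breaks in the backward direction. The correct definition restricts $L$ to nodes $\ell'$ with $\K_{\ell'}=\K_\ell$ (equivalently $\ell'\in M_{\K_\ell}$); your algorithm paragraph actually seems to do this (``on each side'' of the pair), but your characterization proof does not. With that restriction in place you additionally need that $M_{\K_\ell}$ neighbors~$\y$ to get $L\cup\{\y\}$ connected; this is the paper's step~(C), which requires a short argument via~\ref{it:surround:V:y} and~\ref{it:surround:subset} that you have not supplied.
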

\begin{proof}
	\newcommand{\characterizationA}{(A)\xspace} %
	\newcommand{\characterizationB}{(B)\xspace} %
	\newcommand{\characterizationI}{(C)\xspace} %
	\newcommand{\characterizationD}{(D)\xspace} %
	\newcommand{\characterizationE}{(E)\xspace} %
	Let us first show (I) the characterization by $\y$-guards $L$ and $R$,
	then (II) the algorithm to determine $L$ and $R$.

	\medskip

	(I)
	We characterize \( \y \)-guards \( L \) and \( R \) that contain nodes that surround a fixed \( \y \) by the following observations \characterizationA, \characterizationB, \characterizationI, \characterizationD, \characterizationE that constitute the forward direction~\forward.
	As a final step we show the backward direction \backward.

	\characterizationA We claim that every \( \ell,r \) that surround \( \y \) are from different subtrees.
	Assuming the contrary, let \( \ell, r \) be contained in the same subtree of \( \y \),
		rooted at neighbor \( \y' \) of \( \y \).
	Then there is a vertex \( u \in \V{\y} \setminus \V{\y'} \), which implies that \( \V{\y} \nsubseteq \V{\ell} \cup \V{r} \), or equally \( \V{\y} \nsubseteq N(\K_\ell) \cup N(\K_r) \).
	Then \( \KK{\y} = \{\K_\ell,\K_r\} \) as otherwise condition~\ref{it:surround:V:y} fails.
	Because of condition~\ref{it:surround:2} then \( N(\K_\ell) \cap N(\K_r) = \emptyset \).
	Recall that \( M_\K \) for \( \K \in \KK{\y} \) partitions the non-leaves of \( V(\Tsub) \setminus \{\y\} \) as seen in \autoref{lemma:K:of:m:partition}.
	Thus the non-leaf \( \y' \) belongs to either \( M_{\K_\ell} \) or \( M_{\K_r} \). %
	Also \( \y' \) does not belong to both sets \( M_{\K_\ell}, M_{\K_r} \), since otherwise we have the contradiction \( \emptyset \neq \V{\y'} \cap \V{\y} \subseteq N(\K_\ell) \cap N(\K_r) \).
	By symmetry, we may assume that \( \y' \in M_{\K_\ell} \) and \( \y' \notin M_{\K_r} \).
	Let \( z \) be the neighbor of \( \y' \) that is on the path from \( \y' \) to \( M_{\K_r} \).
	Then there is at least one vertex \( v \in \V{z} \cap \V{\y'} \) where \( v \in N(\K_r) \).
	However, since \( v \in \V{\y'} \), also \( v \in N(\K_\ell) \), a contradiction to \( N(\K_\ell) \cap N(\K_r) = \emptyset \).

	\characterizationB
	We claim that there are only two subtrees \( T_L, T_R \) of \( \y \) such that for every surrounding triple $(\ell,\y,r)$ nodes $\ell$ and $r$ are in the subtrees \( T_L \) and \( T_R \).
	Assuming the contrary, there are surrounding triples $(\ell,\y,r)$ and $(\ell',\y,r')$ where $\ell$ and $\ell'$ are from distinct subtrees of $\y$, though possibly \( r = r' \).
	In particular \( \K_\ell, K_{\ell'} \), and \( \K_r \) are distinct components of \( \KK{\y} \).
	By applying condition~\ref{it:surround:subset} for triples \( (\ell',\y,r') \) and \( (\ell,\y,r) \), we have that \( N(\K_\ell) \subseteq N(K_{\ell'}) \subseteq N(\K_r) \)
		and analogously \( N(\K_r) \subseteq N(K_{\ell'}) \subseteq N(\K_\ell) \), which implies that \( N(\K_\ell) = N(\K_r) \).
	There is a node \( u \in \V{\y} \setminus \V{\ell} \), and since \( N(\K_\ell) = N(\K_r) \) also \( u \notin \V{r} \).
	Then, however, \( u \) contradicts that \( \ell,r \) surround \( \y \) as it falsifies condition~\ref{it:surround:V:y}.

	\characterizationI
	We claim that for every nodes $\ell,r$ such that $(\ell,\y,r)$ is a surrounding triple have models \( M_{\K_\ell} \) and \( M_{\K_r} \) that neighbor \( \y \).
	Models \( M_{\K_\ell} \) and \( M_{\K_r} \) are from distinct subtrees of \( \y \) as implied by \characterizationA.
	Recall that the models of components \( \KK{\y} \) partition the non-leaves of \( \Tsub - \y \) as seen in \autoref{lemma:K:of:m:partition}.
	Assume, for the sake of contradiction, that say \( M_{\K_\ell} \) does not neighbor \( \y \).
	Then there is a component \( \K \in \KK{\y} \setminus \{\K_\ell,\K_r\} \) with model \( M_\K \) separating \( M_{K_{\ell}} \) and \( \y \).
	Consider a vertex \( u \in \V{\y} \setminus \V{r} \).
	To satisfy condition~\ref{it:surround:V:y} we have \( u \in \V{\ell} \).
	Then \( u \in N(\K) \) and hence \( u \in N(\K_\ell) \cap N(\K_r) \) by condition~\ref{it:surround:subset},
		which implies \( u \in \V{r} \). Contradiction.

	\characterizationD
	For a subtree $T_L$ of node $\y$ consider \( L \), the set of nodes $\ell$ which have some \( r \) such that \( \ell,r \) surround \( \y \).
	We claim that \( L \) is connected and neighbors \( y \).
	Consider nodes \( \ell \in L \) and \( \ell' \notin L \) where $(\ell,\y,r)$ is surrounding and \( (\ell,\ell',\y) \) is \( \Tsub \)-ordered.
	Recall from \characterizationI that \( \K_\ell \) neighbors \( \y \) and hence \( K_{\ell'} = \K_\ell \).
	It suffices to show that also \( (\ell',\y,r) \) is surrounding.
	Conditions \ref{it:surround:2}, \ref{it:surround:V:y} and \ref{it:surround:subset} are still true since they only depend on the components \( \K_{\ell'} \) and \( \K_r \).
	Assuming condition~\ref{it:surround:minimality} fails for \( (\ell',\y,r) \) because of nodes \( \ell'',r'' \),
		implies that also \( \ell'',r'' \) falsify condition~\ref{it:surround:minimality} for \( (\ell,\y,r) \), since \( \V{\ell} \cap \V{\y} \subseteq \V{\ell'} \cap \V{\y} \). %

	\characterizationE
	To show that \( L \) is a \( \y \)-guard, it remains to show that if a neighbor \( \ell \in L \) of \( \y \) is a subdivision node, then \( L=\{\ell\} \).
	Assuming the contrary, there is a path \( \ell^\star \ell \y \) in \( \Tsub \) such that \( \ell^\star,\ell \in L \) as implied by \characterizationD.
	Because \( N(K_{\ell^\star}) \neq \emptyset \), there is a node \( v \in \V{\ell^\star} \cup \V{\y} \).
	Further there is a node \( u \in \V{\ell} \setminus \V{\ell^\star} \), which in order to escape \( v \) has \( u \in \V{\y} \).
	Then, however, \( \V{\ell^\star} \cap \V{\ell} \subsetneq \V{\ell} \cap \V{\y} \) in contradiction to condition~\ref{it:surround:minimality}.

	Finally, let \( R \), analogously to \( L \), be the nodes \( r \) of the other subtree \( T_R \) such that \( r \) and some \( \ell \) surround \( \y \).
	By symmetry to \( L \), also \( R \) is a \( y \)-guard.
	According to \characterizationA and \characterizationB, every \( \ell,r \) that surround \( \y \) have \( \ell \in L \) and \( r \in R \) or vice versa \( \ell \in R, r \in L \).

	\backward
It remains to show the reverse, that every triple $(\ell',\y,r)$ for \( \ell' \in L \) and \( r \in R \) is a surrounding triple.
Assuming that $\ell' \in L$ and $r \in R$ there are some nodes $\ell \in L$ and $r' \in R$
	such that $(\ell,\y,r)$ and $(\ell',\y,r')$ are surrounding triples.
Because $\ell,\ell'$ are in the same subtree of $\y$ as seen in \characterizationI,
	we have that $\K_{\ell} = \K_{\ell'}$,
	likewise it follows that \( \K_r = \K_{r'} \).
Thus conditions~\ref{it:surround:2}, \ref{it:surround:V:y} and \ref{it:surround:subset} for $(\ell',\y,r)$ are directly satisfied.
Regarding condition~\ref{it:surround:minimality}, assume nodes $\ell'',r''$ that surround $\y$.
We may assume that \( \K_{\ell''} = \K_\ell \) and \( \K_{r''} = \K_r \).
Then \( \V{\ell''} \cap \V{\y} \subseteq \V{\ell'} \cap \V{\y} \) is true, since \( \ell',r' \) surround \( \y \).
Likewise \( \V{r''} \cap \V{\y} \subseteq \V{r} \cap \V{\y} \) because \( \ell,r \) surround \( \y \).
Therefore also condition~\ref{it:surround:minimality} is true for $(\ell',\y,r)$.

	\medskip

(II)
For simplicity, let us output empty $L,R$ if a node $\y$ is not surrounded.
We use the following condition which simplifies and is equivalent to condition~\ref{it:surround:minimality}.
\begin{itemize}
\item[$(3')$]
for every \( \ell',r' \) that satisfy (1), (2A), and (2B) where \( \K_{\ell} = \K_{\ell'} \) and  \( \K_r = \K_{r'} \),
	we have $|\V{\ell'} \cap \V{\y}| \leq |\V{\ell} \cap \V{\y}|$ and $|\V{r'} \cap \V{\y}| \leq |\V{r} \cap \V{\y}|$.
\end{itemize}
Condition (3) easily implies $(3')$ since the subset relation $\V{\ell'} \cap \V{\y} \subseteq \V{\ell} \cap \V{\y}$ implicates the cardinality relation $|\V{\ell'} \cap \V{\y}| \leq |\V{\ell} \cap \V{\y}|$.
For the reverse direction, recall that conditions~\ref{it:surround:2}, \ref{it:surround:V:y} and \ref{it:surround:subset} imply that there are two components $\K_L,\K_R \in \KK{\V{\y}}$
	such that $\ell,r$ form a surrounding triple $(\ell,\y,r)$ only if $(\ell,r) \in (L\times R) \cup (R \times L)$
	as seen in observation~\characterizationI.
By observation~\characterizationD its model $M_{\K_\ell}$ has to neighbor $\y$,
	and hence let us consider the neighbor $\ell^\star \in M_{\K_\ell}$ of $\y$.
It suffices to show that for every other node $\ell' \in M_{\K_L}$ the relation $|\V{\ell'} \cap \V{\y}| \leq |\V{\ell^\star} \cap \V{\y}|$ implies $\V{\ell'} \cap \V{\y} \subseteq \V{\ell^\star} \cap \V{\y}$.
Assuming the contrary, then there is a vertex $u$ where $\ell',\y \in M_u$ but $\ell^\star\notin M_u$,
	which contradicts connectivity of model $M_u$.
Therefore condition $(3')$ is equivalent to \ref{it:surround:minimality}.

Now we show how to determine $L,R$ for a fixed node $\y$ in time $\Oh(t^3n^2)$,
	which since there are only $\Oh(n)$ maximal cliques, hence nodes, shows the desired runtime.
First, for every other node $\ell\neq\y$ determine the size of the intersection $|\V{\ell}\cap\V{r}|$ in time $\Oh(n^2)$.
Then determine the components $\Gamma \in \KK{\y}$ together with their neighborhood $N(\K)\subseteq \V{\y}$ by breath-first-search in time $\Oh(n^2)$.
Note that there are at most $t=|V(T)|$ components in $\KK{\y}$ as seen in \autoref{lemma:K:of:m:partition}.
Try all $t^2$ possible pairs of components and check if they satisfy conditions \ref{it:surround:2}, \ref{it:surround:V:y} and \ref{it:surround:subset} in time $\Oh(tn)$ for every pair of components; hence $\Oh(t^3n)$ for this step.
Either we conclude that $\y$ is not surrounded, or we obtain components $\K_L$ and $\K_R$ such that every nodes $\ell,r$ such that $(\ell,\y,r)$ is surrounding have $(\ell,r) \in  (M_{\K_L} \times M_{\K_R}) \cup (M_{\K_R} \times M_{\K_L})$.
It remains to find those $\ell \in \K_\ell$ that maximize the cardinality of their intersection $|\V{\ell}\cap\V{\y}|$ as previously observed; and analogously for $r\in\K_r$.
To find those, we may simply iterate through $\ell \in K_\ell$ and remember those with maximal intersection.
Since we determined the intersections before, this step requires $\Oh(n)$ time.
To summarize, to determine $L,R$ for a fixed node $\y$, we need time $\Oh(t^3n^2)$;
thus for every node $\y$, a total time of $\Oh(t^3n^3)$.
\end{proof}

The \define{guards of \( \y \)} are such distinct \( \y \)-guards \( L \) and \( R \) that precisely characterize its surrounding triples.
It is worth noting that a node $\y$ that is surrounded by subdivision nodes has singleton guards $\{\ell\}$ and $\{r\}$, which then must be neighbors $\y$ in any representation.

\medskip

Surprisingly
there is also a quadratic bound in $|V(T)|$ on the number of \emph{not} surrounded nodes.
To show this, the main difficulty is that the conditions \ref{it:surround:V:y} and \ref{it:surround:subset} fail for a subdivision node $\y$ due to some \define{remote} component $\Gamma$,
	which is a connected component $\Gamma\in\KK{\y}\setminus\{\Gamma_\ell,\Gamma_r\}$.
Here, let $\y \in \Tsub(x,z)$ for some neighbors $x,z\in V(T)$.
To cope with these remote components, we use three ingredients:
\begin{itemize}
\item
A component $\Gamma \in \KK{\y}$ that falsifies the conditions in question relates to $\KK{x}$ or $\KK{z}$:
Its model $M_\Gamma$ must be outside of $\Tsub[x,z]$.
By examining the nodes $\y'$ that have $\Gamma$ as a component, it follows that either $\Gamma\in\KK{x}$ or $\Gamma\in\KK{z}$.
\item
We use \autoref{lemma:K:of:m:partition} on components $\Gamma \in \KK{x}$, likewise for $\KK{z}$:
The models of the components $\Gamma \in \KK{x}$ partition the non-leaves $\Tsub$,
	and each of its models $M_\Gamma$ contains an \eye. %
That means that $\KK{x}$ contains at most $|E(T)|$ components.
\item
A component $\Gamma\in\KK{x}$ can be remote for at most one node $\y$ on the path $\Tsub(x,z)$,
	and hence falsify the surround conditions for at most one $\y$ on that path.
This yields a simple $2|E(\Tsub)|$-bound for not surrounded nodes on that path; see also \autoref{figure:falsify:only:one}(c).
\end{itemize}
By incorporating these ideas in a more careful manner we obtain the following bound:

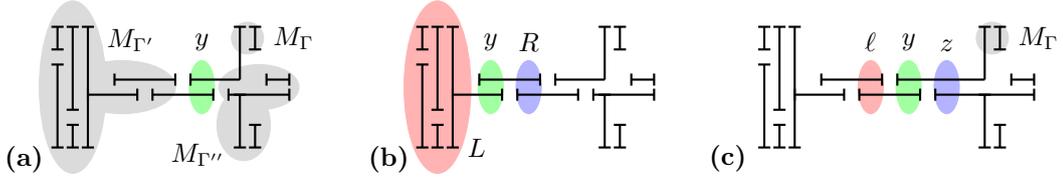
\begin{figure}[t]
\captionsetup{position=b}
\begin{tikzpicture}[scale=1]
\begin{scope}[local bounding box=boxA]
\iiclique{\ycolor}{2.25}{$\y$}

\node[ellipse,draw=\zcolor,fill=\zcolor,minimum height=65, minimum width=25] at (0.55,1.2) {};
\node[ellipse,draw=\zcolor,fill=\zcolor,label=above:$~~M_{\K'}$,minimum height=20, minimum width=40] at (1.2,1.2) {};

\node[ellipse,draw=\zcolor,fill=\zcolor,minimum height=27, minimum width=20] at (2.8,0.7) {};
\node[label=left:$M_{\K''}$]  at (2.8,0.3) {};
\node[ellipse,draw=\zcolor,fill=\zcolor,minimum height=17, minimum width=30] at (3,1.2) {};

\node[ellipse,draw=\zcolor,fill=\zcolor,label=right:$M_{\K}$,minimum height=12, minimum width=12] at (2.85,1.85) {};

\vintervall{0.4}{0.7}{0.55}
\vintervall{0.9}{2}{0.55}

\vintervall{0.4}{1.5}{0.35}
\vintervall{1.7}{2}{0.35}

\vintervall{0.4}{2}{0.75}\intervall{0.75}{1.4}{1.1}

\intervalll{1}{2}{1.3}
\intervalll{1.5}{2.5}{1.1}

\intervallca{2.1}{2.75}{1.3}{black}\vintervallca{2}{1.3}{2.75}{black}

\intervalll{2.5}{3.5}{1.1}\vintervall{1.1}{0.4}{2.75}

\intervalll{3}{3.5}{1.3}
\vintervall{0.4}{0.7}{2.95}
\vintervall{1.7}{2}{2.95}

	\end{scope}
	\node[xshift=-.2cm, yshift=-0.95cm] at (boxA.west) {\textbf{(a)}};
	\begin{scope}[local bounding box=boxB,xshift=4.8cm]
	
\iiclique{\ycolor}{1.25}{$\y$}
\iiclique{\rcolor}{1.75}{$R$}
\node[ellipse,draw=\lcolor,fill=\lcolor,minimum height=65, minimum width=25] at (0.55,1.2) {};
\node[label=right:$L$] at (0.7,0.4) {};

\vintervall{0.4}{0.7}{0.55}
\vintervall{0.9}{2}{0.55}

\vintervall{0.4}{1.5}{0.35}
\vintervall{1.7}{2}{0.35}

\vintervall{0.4}{2}{0.75}\intervall{0.75}{1.4}{1.1}

\intervalll{1}{2}{1.3}
\intervalll{1.5}{2.5}{1.1}

\intervallca{2.1}{2.75}{1.3}{black}\vintervallca{2}{1.3}{2.75}{black}

\intervalll{2.5}{3.5}{1.1}\vintervall{1.1}{0.4}{2.75}

\intervalll{3}{3.5}{1.3}
\vintervall{0.4}{0.7}{2.95}
\vintervall{1.7}{2}{2.95}

	\end{scope}
	\node[xshift=-.2cm, yshift=-0.95cm] at (boxB.west) {\textbf{(b)}};
	\begin{scope}[local bounding box=boxC, xshift=9.3cm]

\iiclique{\lcolor}{1.75}{$\ell$}
\iiclique{\ycolor}{2.25}{$\y$}
\iiclique{\rcolor}{2.75}{$z$}

\node[ellipse,draw=\zcolor,fill=\zcolor,label=right:$M_{\K}$,minimum height=12, minimum width=12] at (3.35,1.85) {};

\vintervall{0.4}{0.7}{0.55}
\vintervall{0.9}{2}{0.55}

\vintervall{0.4}{1.5}{0.35}
\vintervall{1.7}{2}{0.35}

\vintervall{0.4}{2}{0.75}\intervall{0.75}{1.4}{1.1}

\intervalll{1}{2}{1.3}
\intervalll{1.5}{2.5}{1.1}

\intervallca{2.1}{3.25}{1.3}{black}\vintervallca{2}{1.3}{3.25}{black}

\intervalll{2.5}{4}{1.1}\vintervall{1.1}{0.4}{3.25}

\intervalll{3.5}{4}{1.3}
\vintervall{0.4}{0.7}{3.45}
\vintervall{1.7}{2}{3.45}

	\end{scope}
	\node[xshift=-.4cm, yshift=-1cm] at (boxC.west) {\textbf{(c)}};
\end{tikzpicture}
\caption{
(a)
Graph $G-\V{\y}$ splits into three connected components.
Their models partition the non-leaves of $\Tsub-\{\y\}$.
Each model contains an \eye (of this subdivision of a double-star).
(b)~%
Sets $L,R$ exactly capture the nodes $\ell,r$ that (together with~$\y$) form a surrounding triple $(\ell,\y,r)$.
Here, $\y$ neighbors a subdivision node $r$ to its right.
In such a case a $\y$-guard $R\supseteq \{r\}$ may only contain $r$.
(c)
Subdivision node~$\y$ is not surrounded.
A remote component $\K$ falsifies condition~\ref{it:surround:subset}.
Note that $\KKI{\K}=\{\y,z\}$. 
Thus $\K$ does not falsify \ref{it:surround:subset} for another subdivision node $\y'$.
}
\label{figure:partition}
\label{figure:L:R}
\label{figure:falsify:only:one}
\end{figure}

\newcommand{\remote}{remote\xspace}
\begin{lemma}\sv{[$(\star)$]}
\label{lemma:bounding:not:surrounded}
	Let subdivision \( \Tsub \) of a tree \( T \) be a \compact representation of a connected graph \( G \).
	There are at most \( |E(T)|^2 + 1 \) non-leaves of\/ \( V(\Tsub) \) that are not surrounded.
\end{lemma}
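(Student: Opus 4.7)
The plan is to partition the set $U$ of non-surrounded non-leaves of $\Tsub$ by node type. Branching nodes of $\Tsub$ (degree $\geq 3$) coincide with branching nodes of $T$ and contribute at most $|V(T)|$ nodes to $U$. Every other non-leaf of $\Tsub$ is a subdivision node (degree $2$) lying in the interior of some subdivided edge-path $P_e$ corresponding to an edge $e=\{x,z\} \in E(T)$, so I focus on bounding $|U \cap P_e|$ for each such $e$.

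Fix a non-surrounded subdivision node $\y$ in the interior of $P_e$, and let $T_L^\y, T_R^\y$ denote the two subtrees of $\Tsub-\y$ containing $x$ and $z$ respectively. By \ref{it:compactRep:uniqueMaximal}, the sets $\V{x}\setminus\V{\y}$ and $\V{z}\setminus\V{\y}$ are non-empty, picking out canonical components $\Gamma_L^\y,\Gamma_R^\y\in\KK{\y}$ with models in $T_L^\y$ and $T_R^\y$. Since $\y$ is not surrounded, $(\Gamma_L^\y,\Gamma_R^\y)$ must violate one of \ref{it:surround:2}, \ref{it:surround:V:y}, \ref{it:surround:subset}. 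The principal case is a \emph{remote} component $\Gamma\in\KK{\y}\setminus\{\Gamma_L^\y,\Gamma_R^\y\}$ violating \ref{it:surround:subset}; the other cases are simpler and involve a bad vertex of $\V{\y}$ outside $N(\Gamma_L^\y)\cup N(\Gamma_R^\y)$.

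Because $\y$ has degree two in $\Tsub$, the model $M_\Gamma$ of a remote component lies entirely in $T_L^\y$ or $T_R^\y$; WLOG assume the former. Since $\Gamma$ and $\Gamma_L^\y$ are distinct components of $G-\V{\y}$ they share no vertex, and in particular $x\notin M_\Gamma$; combined with $N(\Gamma)\subseteq\V{\y}$ and the path structure of $P_e$, a short argument gives $\Gamma\cap\V{x}=\emptyset$ and $N(\Gamma)\subseteq\V{x}$. Hence $\Gamma\in\KK{x}$, i.e., every remote witness on $P_e$ lies in $\KK{x}\cup\KK{z}$.

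The final step is to show that each $\Gamma\in\KK{x}$ can serve as the remote witness for at most one non-surrounded $\y\in P_e$. Although $\Gamma$ may belong to $\KK{\y'}$ for several $\y'\in P_e$ (cf.\ \autoref{figure:falsify:only:one}(c), where $\KKI{\Gamma}$ has two elements), the actual failure of \ref{it:surround:subset}---that $N(\Gamma)$ escapes $N(\Gamma_L^{\y'})\cap N(\Gamma_R^{\y'})$---is tied to a unique position of $\y'$ on $P_e$ relative to where $M_\Gamma$ hangs off $x$. Combined with the analogous bound on the $z$-side and \autoref{lemma:K:of:m:partition} bounding $|\KK{v}|$ by the number of eyes of $T$, a careful double counting (each eye contributes to at most one incident edge of $T$) then yields $|U|\leq|E(T)|^2+1$. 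The main obstacle will be rigorously executing this uniqueness step: translating the intuition that ``$\Gamma$ blames a single $\y$'' into a provably injective charging along $P_e$, despite $\Gamma$ potentially belonging to $\KK{\y'}$ for a whole interval of $\y'\in P_e$.
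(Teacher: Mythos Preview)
Your high-level strategy coincides with the paper's: count a small set of special nodes trivially, and on each subdivision path charge every not-surrounded node to a ``remote'' component of $\KK{x}$ or $\KK{z}$, then invoke \autoref{lemma:K:of:m:partition}. Two concrete issues, however, keep the plan from going through as written.

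First, you anchor the paths at the endpoints $x,z$ of an edge of $T$. When $x$ is a leaf of $T$ you have $\V{x}=\emptyset$ by~\ref{it:compactRep:emptyLeaves}, so you cannot define $\Gamma_L^\y$ via $\V{x}\setminus\V{\y}$, nor conclude $N(\Gamma)\subseteq\V{x}$, nor even make sense of $\KK{x}$ as a target for the charging. The paper instead anchors at \emph{eyes} (branching nodes and neighbors of leaves), which are always non-leaves; the trivial count then absorbs all eyes (at most $|E(T)|+1$), and the per-path argument handles only non-eye subdivision nodes between consecutive eyes.

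Second, and this is the substantive gap you yourself flag, your notion of ``remote witness'' (a component violating~\ref{it:surround:subset}) is too weak to give an injective charging. The paper's key device is a sharper definition: $\Gamma$ is $\y$-\emph{remote} if $\Gamma\in\KK{\y}\setminus\{\K_\ell,\K_r\}$ \emph{and} additionally $\Gamma\notin\KK{\ell}$ or $\Gamma\notin\KK{r}$, where $\ell,r$ are the neighbors of $\y$. One then proves two things. (i) Every not-surrounded subdivision $\y$ on the path has a $\y$-remote witness --- and this includes the cases you dismiss as ``simpler'': failure of~\ref{it:surround:2} is impossible for a subdivision node by properness, and a failure of~\ref{it:surround:V:y} does not merely produce a bad private vertex but, combined with the escape condition, forces a $\y$-remote component as well. (ii) For any $\Gamma$, the set $S_\Gamma=\{\y':\Gamma\in\KK{\y'}\}$ has $M_\Gamma\cup S_\Gamma$ connected; hence $S_\Gamma\cap\Tsub[x,z]$ is a subpath with one end in $\{x,z\}$, and the other end is the unique node at which $\Gamma$ can be remote. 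This connectedness claim is exactly the injectivity mechanism you were missing.
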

\begin{proof}
	Regarding \eyes of \( V(\Tsub) \), our bound of not-surrounded nodes is simply the number of \eyes in \( \Tsub \), which still at most \( |E(T)|+1 \).
	In turn, we limit the number of not-surrounded non-\eye subdivision nodes as follows.
	For fixed \eyes \( x,\z \) that are connected by a path of subdivision nodes, we upper bound the number of not surrounded subdivision nodes of that path by \( |E(T)|-1 \).
	Thus beside \eyes, there are at most \( |E(T)| \cdot (|E(T)|-1) \) further not surround nodes, which yields the desired bound.

	Let \( \Tsub(x,\z) \) be the path of inner nodes connecting \( x \) and \( \z \).
	Consider the components of \( \KK{\y} \) for a node \( \y \in \Tsub(x,\z) \) with neighbors \( \ell,r \) such that \( (x,\ell,\y,r,\z) \) is \( \Tsub \)-ordered.
	Then the two components \( \K_\ell \) and \( \K_r \) with models containing \( \ell \) respectively \( r \) neighbor~\( \y \).
	Recall that the model of each component of \( \KK{\y} \) contains an \eye, as seen in \autoref{lemma:K:of:m:partition}.
	Thus we know that \( x \in M_{\K_\ell} \) and \( \z \in M_{\K_r} \).
	In turn, other components \( \K \in \KK{\y} \setminus \{\K_\ell,\K_r\} \) do not contain \( \Tsub[x,\z] \).
	We denote such a component \( \K \in \KK{\y}\setminus \{\K_\ell,\K_r\} \) where additionally \( \K \notin \KK{\ell} \) or \( \K \notin \KK{r} \) as \( \y \)-\define{\remote}.

	Our goal is to upper bound the number of not-surrounded subdivision nodes $\y\in \Tsub(x,z)$ by the number of components of $\KK{x}$ and $\KK{z}$.
	The first step is to observe that a not-surrounded subdivision nodes $\y$ implies a $\y$-remote component, and then observe that these \remote components are distinct for different $\y,\y' \in \Tsub(x,z)$ and bounded by the number of \eyes in $\Tsub$.

	\begin{claim}\label{clm:not-surroundedImliesRemoteComponent}
		Let \( \y \) be a not-surrounded subdivision node of \( \Tsub(x,\z) \), then there exists a \( \y \)-\remote component.
	\end{claim}
  	\begin{proof} %
		Consider a not-surrounded node \( \y \) of \( \Tsub(x,\z) \).
		Let \( \ell \) and \( r \) be the neighbors of \( \y \) such that \( (x,\ell,\y,r,\z) \) is \( \Tsub \)-ordered.
		In particular $(\ell,\y,r)$ is not surrounding as implied by \autoref{lemma:surround:characterization}.
		Surround condition~\ref{it:surround:minimality} cannot fail alone.
		Thus one of the conditions \ref{it:surround:2}, \ref{it:surround:V:y}, or \ref{it:surround:subset} fails.
		\begin{itemize}
			\item
			Assuming condition~\ref{it:surround:2} fails, then \( \V{\y} \nsubseteq N(\K_\ell) \cup N(\K_r) \) and \( N(\K_\ell) \cap N(\K_r) \neq \emptyset \).
			The first implies that there exists a vertex \( u \) with model \( M_u = \{\y\} \).
			The latter implies that there exists a vertex \( v \in N(\K_\ell) \cap N(\K_r) \) and hence with model \( M_v \) containing \( \{\ell,\y,r\} \).
			We reach a contradiction, since now \( u \) does not \escape \( v \).
			Thus, condition~\ref{it:surround:2} cannot fail.
			\item
			Assuming condition~\ref{it:surround:subset} fails, there is a component \( \K \in \KK{\y} \setminus \{\K_\ell,\K_r\} \) such that \( N(\K) \notin N(\K_\ell) \cap N(\K_r) \).
			However since \( M_\K \) does not neighbor \( \y \), we have that \( N(\K) \in \V{\ell} \) or \( N(\K) \in \V{r} \).
			This implies that \( \K \) is \( \y \)-\remote.
			\item
			Finally, assuming condition~\ref{it:surround:V:y} fails, we have
				\[ \V{y} \nsubseteq N(K_\ell) \cup N(\K_r) \text{, and } \K \in \KK{y} \setminus \{\K_\ell,\K_r\} . \]
			The first, implies that there exists a vertex \( u \in \V{y} \setminus (\V{\ell} \cap \V{r}) \).
			If \( \K \in \KK{\ell} \cap \KK{r} \), then there is a vertex \( v \in N(\K) \) with \( v \in \V{\ell} \cap \V{\y} \cap \V{r} \) such that \( u \) does not \escape \( v \).
			However \( \K \in \KK{\ell} \cup \KK{r} \), since \( \K \notin \{\K_\ell,\K_r\} \), which makes \( \K \) a \( \y \)-\remote component.
		\end{itemize}
		Therefore in every case, where \( \y \) is not surrounded a \( \y \)-\remote component exists.
	\renewcommand{\qed}{\hfill$\lhd$}\end{proof}

	\begin{claim}\label{clm:M_K:and:S_K:connected}
		Consider a connected $\K\subseteq V(G)$.
		We claim that $M_\K \cup \Org_\K \) for \( \Org_\K \coloneqq \{ \y \mid \K \in \KK{\y} \}$ is connected.
	\end{claim}
	\begin{proof} %
		As seen in \autoref{lemma:M:K:is:connected}, model $M_\K$ is connected.
		Assuming, for the sake of contradiction, that $M_\K\cup\Org_K$ is not connected then implies that there are $\Tsub$-ordered nodes $(\y_1,\y_2,\y_3)$ where $\y_1\in M_\K\cup\Org_k$ while $\y_2\notin M_k\cup\Org_K$ and $\y_3\in\Org_K$.
		Because $M_\K\neq\emptyset$, we may assume that $\y_1\in M_\K$.
		Further we may assume that $\y_1$ is the nearest node of $\y_3$ that is part of $M_\K$.

		Observe that $\K$ is a component of $G-\V{\y}$, for any node $\y$, if and only if $\V{\y} \cap \K = \emptyset$ and $N(\K) \subseteq \V{\y}$.
		Since $\y_2 \notin M_\K \cup \Org_K$, there must be a vertex $u\in N(\K)\setminus\V{\y_2}$.
		On the other hand $u\in\V{\y_3}$ because $u\in N(\K)$.
		Finally, since $u\in N(\K)$ its model $M_u$ must intersect $M_\K$ and thus $u\in\V{\y_1}$,
		which is a contradiction to the connectivity of $M_u$.
	\renewcommand{\qed}{\hfill$\lhd$}\end{proof}

	Thus a \( \y \)-\remote component is also component of either \( \KK{x} \) or \( \KK{\z} \) but not both.
	Further, no component $\K$ is $\y$-\remote and $\y'$-\remote for distinct subdivision nodes \( \y,\y' \in \Tsub(x,\z) \).
	Indeed \( \Org_K \cap \Tsub[x,\z] \) is a connected sub-path with end nodes \( \y \) and one of \( x,\z \).
	Since the sub-path cannot have end node \( \y' \notin \{\y,x,\z\} \), the component \( \K \) is not \( \y' \)-\remote.

	Therefore instead of bounding the not-surrounded subdivision nodes $\y\in\Tsub(x,z)$ we may instead bound the $\y$-\remote components these nodes $y$.
	Then again each $\y$-\remote component is either a component  \( \KK{x} \) or \( \KK{\z} \) but not both.
	Each component \( \K \in \KK{x} \setminus \KK{\z} \) contains an \eye and each \eye is contained in at most one component of \( \KK{x} \), as seen in \autoref{lemma:K:of:m:partition}.
	Further a component \( \K \in \KK{x} \setminus \KK{\z} \) contains only \eyes of subtrees of \( x \) that do not contain \( \z \).
	For \( \K \in \KK{z} \setminus \KK{x} \) we have the analogous observations.
	Thus are at most as many not-surrounded \( \y \in \Tsub(x,\z) \) as there are \eyes unequal to \( x,\z \) in \( \Tsub \), which is \( |E(\Tsub)|-1 \).
\end{proof}

\newcommand{\RRs}{{\mathrm{R}}}
\newcommand{\RR}[1]{{\RRs(#1)}}

\newcommand{\gr}{{\bar{r}}} %
\newcommand{\unsur}{\overline{\mathcal{S}}}
\newcommand{\freeset}{F}
\newcommand{\freesetr}{F_r}
\newcommand{\TT}{\mathcal{T}}
\newcommand{\ambig}{\overline{\mathcal{U}}}
\newcommand{\tuple}{{B}}
\newcommand{\pot}{\Phi}
\newcommand{\EU}{E^{\uparrow}}
\newcommand{\free}{free\xspace}

\newcommand{\chain}{chain\xspace}

\newcommand{\treeopss}[1]{\rho^{\star}[{#1}]}
\newcommand{\template}{template\xspace}
\newcommand{\treeop}[1]{R{[#1]}} %
\newcommand{\treeops}[1]{\rho^\downarrow{[#1]}} %
\newcommand{\treeopsss}[1]{\rho{[#1]}} %
\newcommand{\ybar}{{{z}}}

\newcommand{\HH}{\mathcal{H}} %
\newcommand{\inner}{\mathcal{I}}
\newcommand{\innerr}{I}

\newcommand{\Yc}{\Gamma}

\subsection{Chains: Paths in any Representation}
\label{subsection:chains}

	\newcommand{\verteq}{\rotatebox{90}{$\,=$}}
	\newcommand{\vertdots}{\rotatebox{90}{$\,\dots$}}
	\newcommand{\vlangle}{\rotatebox{270}{$\langle$}}
	\newcommand{\vrangle}{\rotatebox{270}{$\rangle$}}
	\newcommand{\equalto}[2]{\underset{\let\scriptstyle\textstyle\substack{{\mkern4mu\verteq}\\{#2}}}{#1}} %

As observed previously, singleton guards $\{\ell\}$ and $\{r\}$ of a node $\y$ must neighbor $\y$.
If a path of nodes $\y_1,\dots,\y_s$ is made of aligned guards, i.e., $\{\y_{i-1}\}$ and $\{\y_{i+1}\}$ are the guards of $\y_i$, then it is a path in any representation $\Tsub$.
In this subsection we define such paths as \define{chains}.
A \define{chain} captures a maximum length path $\y_1,\dots,\y_s$ with aligned guards.
They also include the initial and final guard $\Y_0$ and $Y_{s+1}$, their \define{terminals}.

\begin{definition}
	A \define{\chain} is a maximal length $s\geq1$ sequence of sets of non-leaf nodes
	\[
		\YY = \langle \Y_0,\{\y_1\},\dots,\{\y_s\},\Y_{s+1} \rangle
	\]
	where ${\y_i}$ has guards $\Y_{i-1}$ and $\Y_{i+1}$ for every $i\in[s]$; and
		where $\Y_i \coloneqq \{\y_i\}$ for $i\in[s]$.
\end{definition}

To avoid lengthy statements, let us implicitly use $\Y_i \coloneqq \{\y_i\}$ from now on.
Let $\innerr(\YY)=\{\y_1,\dots,\y_s\}$ be the set of inner nodes of a chain $\YY$.
Let $\HH(G)$ be the set of chains of a (connected) graph $G$.

By \autoref{lemma:surround:characterization} such a chain implies that $\y_1,\dots,\y_s$ is a path in any representation $\Tsub$.
Also there are unique realizations of the terminals $\y_0\in\Y_0\cap N_{\Tsub}(\y_1)$ and $\y_{s+1}\in\Y_{s+t}\cap N_{\Tsub}(\y_{s})$ in a given $\Tsub$.
Let $\y_0\y_1\dots\y_s\y_{s+1}$ be the \define{corresponding} path of $\YY$ in the tree $\Tsub$.

The terminals define how the chains may attach to each other.
In the very simple case a terminal $Y_0$ may only consist of a single non-surrounded node, and hence any other chain must attach to that node.
Otherwise, as we show later, only the following option remains:
Terminal $Y_0$ contains some surrounded node $\y_i'$ which is part of another chain $\langle \Y_0', \dots, \Y_{s'+1}' \rangle$.
Interestingly $Y_0$ then contains the whole path $\y_1',\dots,\y_s'$.
This allows us to freely change the attachment of the chain $\langle \Y_0, \{\y_1\},\dots\rangle$ to the chain $\langle \dots,\{\y_i'\},\dots \rangle$ without breaking the connectivity of the representation, though possibly the properness.
Similarly, the intersection $\V{x} \cap \V{\y_i}$ for an outside-of-path node $x$ is equal for every $i\in[s]$, and therefore may be reattached in the same sense.

\begin{lemma}\sv{[$(\star)$]}
	\label{lemma:rehang:possible}
	Consider a chain $\langle Y_0,\dots,\{\y_i\},\dots,Y_{s+1}\rangle$.
	A neighbor $x \in N(\y_i)\setminus(\Y_{i-1}\cup\Y_{i+1})$
		has $\V{x} \cap \V{\y_i} = \V{x} \cap \V{\y_j}$, for every $j \in [s]$.
	Furthermore, $Y'\cap \innerr(\YY) \in \{\emptyset, \innerr(\YY)\}$ for every terminal $\Y'$ of any chain. %
\end{lemma}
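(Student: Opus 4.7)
For the first part of the lemma, I would establish both inclusions in $\V{x} \cap \V{\y_i} = \V{x} \cap \V{\y_j}$. The easy direction, $\V{x} \cap \V{\y_j} \subseteq \V{x} \cap \V{\y_i}$, uses only connectivity of $M_u$ in $\Tsub$: since $x \notin \Y_{i-1} \cup \Y_{i+1}$, the node $x$ sits in a subtree of $\y_i$ distinct from those rooted at $\y_{i-1}$ and $\y_{i+1}$, so the unique $\Tsub$-path between $x$ and $\y_j$ must pass through $\y_i$. For the converse, take $u \in \V{x} \cap \V{\y_i}$; the component $\K_x \in \KK{\y_i}$ containing the nonempty set $\V{x} \setminus \V{\y_i}$ is distinct from $\K_{\y_{i-1}}$ and $\K_{\y_{i+1}}$, so the surround condition~\ref{it:surround:subset} applied to $\y_i$ gives $N(\K_x) \subseteq N(\K_{\y_{i-1}}) \cap N(\K_{\y_{i+1}})$. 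Since $\V{x}$ is a clique and $\V{x} \setminus \V{\y_i} \subseteq \K_x$, we have $u \in N(\K_x)$, so $u$ has a neighbour in both $\K_{\y_{i\pm 1}}$; by the connectivity argument of \autoref{lemma:M:K:is:connected}, $M_u$ is forced to extend through $\y_{i\pm 1}$, and iterating along the chain produces $u \in \V{\y_j}$ for every $j \in [s]$.

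For the second part, suppose $\y_k \in Y' \cap \innerr(\YY)$; the goal is $\innerr(\YY) \subseteq Y'$. Let $Y'$ be a guard of an inner node $w_1$ of some chain $\YY''$. First I note $\YY'' \neq \YY$: the two terminals of $\YY$ itself lie in off-chain subtrees of $\y_1$ and $\y_s$ and are thus disjoint from $\innerr(\YY)$. In particular $w_1 \notin \innerr(\YY)$. By the refined characterisation $(3')$ from the proof of \autoref{lemma:surround:characterization}, the guard $Y' = L$ is exactly the set of nodes $\ell$ in a fixed subtree $T_{w_0}$ of $w_1$ (inside one component $M_{\K_L}$) that maximise $|\V{\ell} \cap \V{w_1}|$. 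It therefore suffices to prove that $\V{\y_j} \cap \V{w_1}$ is the \emph{same set} for every $j \in [s]$; then every $\y_j$ realises the maximum already attained by $\y_k$, and hence $\y_j \in Y'$.

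Fix any $u \in \V{\y_k} \cap \V{w_1}$; $M_u$ is a connected subtree of $\Tsub$ containing the full path from $\y_k$ to $w_1$. Let $\y_m$ be the node of this path closest to $w_1$ that still lies on the chain's $\Tsub$-path $\y_0 \cdots \y_{s+1}$, and let $x'$ be its successor towards $w_1$. When $\y_m \in \innerr(\YY)$, the successor $x'$ sits in a subtree of $\y_m$ disjoint from those carrying $\Y_{m-1}$ and $\Y_{m+1}$, so $x' \notin \Y_{m-1} \cup \Y_{m+1}$; applying the first part at $\y_m$ yields $\V{x'} \cap \V{\y_m} = \V{x'} \cap \V{\y_j}$ for every $j \in [s]$, and since $u \in \V{x'} \cap \V{\y_m}$ we obtain $u \in \V{\y_j}$. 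This proves $\V{\y_k} \cap \V{w_1} \subseteq \V{\y_j} \cap \V{w_1}$, and running the same argument starting from any $u' \in \V{\y_j} \cap \V{w_1}$ provides the reverse inclusion.

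The hard part will be the boundary case $\y_m \in \{\y_0, \y_{s+1}\}$, where $w_1$ lies in an off-chain subtree beyond a terminal of $\YY$ and the first part cannot be invoked at $\y_m$. I expect to dispatch this using maximality of both chains: if, say, $w_1$ is reached via $\y_{s+1}$, then maximality of $\YY$ rules out $\{\y_s\}$ being a guard of $\y_{s+1}$, forcing any $\y_{s+1}$-guard that reaches $\y_s$ to be non-singleton (so $\y_s$ is branching); together with connectedness of $Y'$ and the analogous maximality of $\YY''$ applied at $w_0$, this structural pressure should propagate $Y'$ along $\innerr(\YY)$ in both directions. The symmetric case at $\y_0$ is handled in the same way.
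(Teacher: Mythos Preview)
For the first part your argument coincides with the paper's: both isolate the third component $\Gamma\in\KK{y_i}$ containing $V_x\setminus V_{y_i}$, use condition~(2B) to place $u$ in $N(\Gamma_{y_{i-1}})\cap N(\Gamma_{y_{i+1}})$ and hence in $V_{y_{i-1}}\cap V_{y_{i+1}}$ by connectivity of~$M_u$, and then iterate. Your ``iterating along the chain'' is exactly the paper's ``transitivity'', justified because $N(\Gamma)\subseteq V_{y_{i+1}}$ together with $\Gamma\cap V_{y_{i+1}}=\emptyset$ makes the same $\Gamma$ reappear as a third component of $\KK{y_{i+1}}$, so (2B) re-applies at each successive inner node.

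For the second part your route differs from the paper's, and the boundary case is a genuine gap. The paper never localises at a branch-off node~$y_m$ and never computes $V_{y_j}\cap V_{w_1}$; instead it derives from part~1 the single-step implication that, for the \emph{same} neighbour~$x$ of~$y_i$, if $(x',x,y_i)$ is a surrounding triple then so is $(x',x,y_{i+1})$ --- the point being that all four surround conditions for the middle node~$x$ depend only on the components in $\KK{x}$ and on the intersection $V_x\cap V_{y_i}$, both of which are unchanged --- and then asserts that chaining this over~$i$ (together with its symmetric counterpart $i\to i-1$) yields the second claim directly. Your interior case $y_m\in I(\YY)$ is handled correctly, but when the $\Tsub$-path from $y_k$ to $w_1$ exits the chain through~$y_0$, part~1 is unavailable at~$y_0$ (there is no surrounding triple of~$\YY$ centred there), and connectivity of~$Y'$ gives only $y_0,\ldots,y_k\in Y'$, not~$y_{k+1}$. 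The appeal to ``maximality of both chains'' does not engage the right objects: maximality of~$\YY$ at~$Y_{s+1}$ controls whether $\{y_s\}$ can guard~$y_{s+1}$, which says nothing about the guard~$Y'$ of the unrelated node~$w_1$; the analogous constraint from~$\YY''$ lives on the far side of~$w_1$. Neither of these produces $u\in V_{y_{k+1}}$ for $u\in V_{y_k}\cap V_{w_1}$, so the argument does not close.
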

\begin{proof}
	Clearly the lemma is true with the restriction of $i=j$ and assuming $\innerr(\YY)=\{\y_i\}$.
	We show that $\V{x} \cap \V{\y_i} = \V{x} \cap \V{\y_{i+1}}$
		and if for some $x'$ triple $(x',x,\y_i)$ is surrounding, also $(x',x,\y_{i+1})$ is surrounding;
		in other words, the statement for neighboring nodes \( \y_i \) and \( \y_j = \y_{i+1} \) where $\y_{s+1}$ is the unique node in $N_{\Tsub}(\y_s)\cap \Y_{s+1}$.
	By the transitivity of the claim and the symmetry of the chain, the general statement follows inductively.
	(Notably this proves an even stronger statement:
	A non-empty intersection $\Y'\cap \innerr(\YY)$ implies that also $\y_0,\y_{s+1} \in Y'$
		where $\y_{0}$ is the unique node in $N_{\Tsub}(\y_1)\cap \Y_{0}$.)

	Because \( (x,\y_i,\y_{i+1}) \) is \( \Tsub \)-ordered, it follows that \( \V{x} \cap \V{\y_i} \subseteq \V{x} \cap \V{\y_{i+1}} \).
	It remains to show the reverse subset relation.
	If \( \V{x} = \emptyset \), clearly also \( \V{x} \cap \V{\y_{i+1}} \subseteq \V{x} \cap \V{\y_{i}} \).
	Otherwise let \( \K \) be the component of \( \KK{\y_i} \) that contains the non-empty \( \V{x} \setminus \V{\y_i} \).
	Note that $\K$ is a third component of $\KK{\y_i}$ since unequal to $K_{\y_{i-1}},K_{\y_{i+1}}$ (where $\y_0\in Y_0$ is the node that neighbors $\y_1$).
	Thus condition~\ref{it:surround:subset} applies for the triple $(\y_{i-1},\y_i,\y_{i+1})$.
	It follows that \( N(\K) \subseteq \V{\y_{i-1}}\cap\V{\y_{i+1}} \), which makes \( \K \) in particular a component of \( \KK{\y_{i+1}} \).

	If $\V{x} \cap \V{\y_{i+1}} \not\subseteq \V{x} \cap \V{\y_{i}}$, there must be a vertex \( u \in  \V{x} \cap \V{\y_i} \setminus \V{\y_{i+1}} \).
	Then $\K$ as a component of $\KK{\y_{i+1}}$ must contain vertex $u$.
	Because of condition~\ref{it:surround:V:y} also $u\in \V{\y_{i-1}}$,
		and hence $u$ neighbors at least one vertex $v \in \V{\y_{i-1}} \setminus \V{\y_{i}}$.
	Thus $\K$ also contains vertex $v$.
	A contradiction since $\K$ as a component of $\KK{\y_{i}}$ cannot contain $v$ as it is from another subtree of $\y_i$.
	Therefore $\V{x} \cap \V{\y_{i+1}} \subseteq \V{x} \cap \V{\y_{i}}$

	Finally, because of the equal intersections \( \V{x} \cap \V{\y_i} = \V{x} \cap \V{\y_{i+1}} \) and since $(x',x,\y_i)$ is surrounding, also
		$(x',x,\y_{i+1})$ is surrounding.
\end{proof}

In the following subsection we aim for a bound on the number of chains.
We note here that chains behave in a reasonable way:
Each surrounded node $\y$ is part of exactly one chain, because otherwise it contradicts the classification by $\y$-\guards as seen in \autoref{lemma:surround:characterization}.
Clearly, a chain does not contain a node more than once, since $\y_i$-guards $Y_{i-1},Y_{i+1}$ are in different subtrees of $\y_i$, for every $i\in[s]$.
As the next step, we observe that terminals only consist of either a not-surrounded node or a non-singleton guard, i.e., are from
\begin{itemize}
	\item \( \unsur(G) \coloneqq \left\{ \{\y_0\} \mid \y_0 \in \CC(G) \text{ is not surrounded} \right\} , \) or
	\item \( \ambig(G) \coloneqq \left\{ Y_0 \mid Y_0 \text{ is guard of some } \y_1 \in \CC(G), |Y_0| > 1 \right\} . \)
\end{itemize}

\begin{lemma}\sv{[$(\star)$]}
\label{lemma:terminals}
Let $\Tsub$ be a \compact representation of a connected graph $G$.
Then every terminal $\Y_0,\Y_{s+1}$ of a chain of $G$ is part of\/ $\unsur(G)$ or $\ambig(G)$.
\end{lemma}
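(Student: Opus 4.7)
The plan is to split on $|Y_0|$ (the other terminal $Y_{s+1}$ is handled symmetrically). The easy case is $|Y_0| > 1$: by the definition of a chain, $Y_0$ is a guard of the first inner node $y_1 \in \CC(G)$, which directly places $Y_0 \in \ambig(G)$.

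The substantive case is $|Y_0|=1$, where I write $Y_0 = \{y_0\}$ and must show that $y_0$ is \emph{not} surrounded. I will argue by contradiction: if $y_0$ were surrounded then by \autoref{lemma:surround:characterization} it would have a well-defined pair of guards, and so $y_0$ would live as an inner node of some (maximal) chain $\YY'$, i.e., $y_0 \in \innerr(\YY')$.

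The key tool is the stronger statement noted parenthetically inside the proof of \autoref{lemma:rehang:possible}: for any terminal $Y'$ of any chain, a non-empty intersection $Y' \cap \innerr(\YY')$ forces the unique $\Tsub$-realizations of the two terminals of $\YY'$ to lie inside $Y'$. I will plug in $Y' := Y_0 = \{y_0\}$ (a terminal of our chain $\YY$): since $y_0 \in \innerr(\YY')$ the intersection is non-empty, so both terminal-realizations of $\YY'$ are forced to equal $y_0$, which means $y_0$ lies in each of the two terminals of $\YY'$.

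The contradiction then falls out from the observation that no inner node of $\YY'$ can belong to a terminal of $\YY'$: each terminal of $\YY'$ is a guard of either the first or the last inner node of $\YY'$, hence by the guard definition lives in the subtree of $\Tsub$ on the far side of that inner node, disjoint from the remaining inner path; so $y_0 \in \innerr(\YY')$ cannot possibly lie in either terminal of $\YY'$, which is the desired contradiction. The main obstacle I foresee is correctly extracting the stronger statement from the parenthetical remark inside \autoref{lemma:rehang:possible}'s proof, since it implicitly identifies the abstract terminal sets of $\YY'$ with specific realized nodes in $\Tsub$, and one must be careful that this identification is valid for $\YY'$ independently of the length or branching structure of its inner path.
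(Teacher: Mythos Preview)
Your approach is correct but takes a different route from the paper. The paper argues directly: assuming the singleton terminal $\{y_{s+1}\}$ has $y_{s+1}$ surrounded with guards $Y_s'$ and $Y_{s+2}$, it shows in three hands-on steps that (1)~$y_s \in Y_s'$, (2)~$Y_{s-1} \cap Y_s' = \emptyset$, and (3)~$Y_s' = \{y_s\}$; the last step contradicts the \emph{maximality} of the chain (it could be extended by one more inner node). You instead leverage the parenthetical strengthening inside the proof of \autoref{lemma:rehang:possible}: plugging the singleton terminal $Y_0=\{y_0\}$ and the chain $\YY'$ containing $y_0$ as an inner node into that statement forces both terminal realizations of $\YY'$ to coincide with $y_0$, which is impossible since those realizations are distinct nodes on the corresponding path (equivalently, an inner node of $\YY'$ would land in a terminal set of $\YY'$, which is excluded by the subtree separation of guards). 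Your argument is shorter and conceptually cleaner, at the cost of depending on a remark whose inductive justification is only sketched in the paper; the paper's proof is more self-contained and only invokes \autoref{lemma:rehang:possible} in the concrete neighbor setting where the hypothesis $x \in N(y_i) \setminus (Y_{i-1}\cup Y_{i+1})$ is visibly met. Your caveat about carefully extracting the parenthetical statement is well placed; once that is granted, your contradiction goes through without further work.
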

\begin{proof}
It suffices to consider $\Y_{s+1}$ due to the symmetry of chains.
If a terminal $\Y_{s+1}$ of a chain contains more then one maximal clique, then by definition $\Y_{s+1}\in\ambig(G)$.
Thus it remains to show that $\Y_{s+1}=\{\y_{s+1}\}$ implies that $\y_{s+1}$ is not surrounded.

Consider a chain $\langle\dots,\Y_{s-1},\{\y_s\},\{\y_{s+1}\}\rangle$.
Assume, for the sake of contradiction, that $\y_{s+1}$ is surrounded,
	and let $\Y_{s}'$ and $\Y_{s+2}$ be its guards.
By symmetry we may assume that $\y_{s}\notin\Y_{s+2}$.
Node $\y_{s+1}$ as a singleton $\y_s$-\guard neighbors $\y_s$.
Similarly there is a node $\y_{s-1} \in Y_{s-1}$ that neighbors $\y_s$,
	and there is a node $\y_{s+2} \in Y_{s+2}$ that neighbors $\y_{s+1}$.
Recall that for some node sets $L,R$ to be some $\y$-\guard, they must be separated by node $\y$.
Thus $\y_{s-1},\y_s,\y_{s+1},\y_{s+2}$ is a path in $\Tsub$.

In the following, we show that $\Y_{s}'=\{\y_s\}$, which contradicts that $\{\y_{s+1}\}$ is a terminal of this chain.
Our steps are (1) that $\y_s\in Y_{s'}$, then (2) that $Y_{s-1} \cap \Y_{s'} = \emptyset$, and finally (3) that $\{\y_s\}=Y_{s'}$.

(1)
We show that $\y_s\in Y_{s}'$.
Node $\y_{s+1}$ is contained in some chain $\langle \dots, \Y_{s}',\allowbreak\{\y_{s+1}\},\Y_{s+2},\allowbreak \dots \rangle$.
Assuming $\y_s\notin Y_{s}'$, makes $\y_s$ not part of this chain while neighboring $\y_{s+1}$, such that \autoref{lemma:rehang:possible} applies:
Because $(\y_{s-1},\y_s,\y_{s+1})$ is surrounding, also $(\y_{s-1},\y_s,\y_{s+2})$ is surrounding.
This contradicts that $\y_s$ has surroundings $\Y_{s-1}$ and $\{\y_{s+1}\}$.

(2)
We show that $Y_{s-1} \cap \Y_{s}' = \emptyset$.
Assume, for the sake of contradiction, that $Y_{s-1} \cap \Y_{s}' \neq \emptyset$.
Because $Y_{s'}$ as a $\y_{s+1}$-guard is connected, at least $\y_{s-1}\in Y_{s'}$.
Thus $\y_{s-1}$ and $\y_s'$ satisfy the minimality condition~\ref{it:surround:minimality},
	which implies that $\V{\y_{s-1}}\cap\V{\y_{s+1}} = \V{\y_s'}\cap\V{\y_{s+1}}$.
Considering a vertex $u \in \V{\y_s'} \setminus \V{\y_{s-1}}$, implies that $u \notin \V{\y_{s+1}}$.
Thus triple $(\y_{s-1},\y_s,\y_{s+1})$ is surrounding while $\V{y_s}$ has a `private' vertex $u\notin\V{\y_{s-1}}\cup\V{\y_{s+2}}$.
This implies $|\KK{\y_s}|=2$ as otherwise surround condition \ref{it:surround:V:y} fails.
Then surround condition \ref{it:surround:2} requires that $\V{\y_{s-1}}\cap\V{\y_{s+1}}=\emptyset$.
Now consider a vertex $v \in \V{\y_{s}} \cap \V{\y_{s+1}}$.
Then $v \notin \V{\y_{s-1}}$.
This however is a contradiction to $\V{\y_{s-1}}\cap\V{\y_{s+1}} = \V{\y_s'}\cap\V{\y_{s+1}}$, as observed before.

(3)
Finally, we show that $\Y_s'=\{\y_s\}$.
Assuming the contrary, there is a node $\y_s' \in Y_{s'}\setminus\{\y_s,\y_{s+1}\}$.
Because $Y_{s'}$ is connected, we may assume that $\y_s'$ neighbors $\y_s\in\Y_{s'}$.
Then $\y_s$ has at least three non-leaf neighbors $\y_{s-1}$, $\y_{s+1}$ and $\y_{s'}$,
	such that $\KK{\y_s}$ contains the three components $\K_{s-1},\K_{s+1}$ and $\K_{s}'$ containing the non-empty vertex sets of these respective nodes.
Now, consider a vertex $u\in\V{\y_s}\setminus\V{\y_{s+1}}$.
To satisfy condition~\ref{it:surround:V:y} for triple $(\y_{s-1},\y_s,\y_{s+1})$, we have $u \in \V{\y_{s+1}}$.
Note that $\V{\y_{s'}}\cap\V{\y_{s+1}}=\V{\y_{s}}\cap\V{\y_{s+1}}$ in order to satisfy the minimality condition~\ref{it:surround:minimality} surrounding $\y_{s+1}$.
Thus also $u\in\V{\y_{1'}}$.
Then however surround condition~\ref{it:surround:subset} for triple $(\y_{s-1},\y_{s},\y_{s+1})$ applies to component $K_{s'}$.
Hence also $u\in\V{\y_{s+1}}$, which contradicts our initial choice of $u\in\V{\y_s}\setminus\V{\y_{s+1}}$.
\end{proof}

Further, let the family of inner nodes be $\inner(G) \coloneqq  \left\{ \innerr(\YY) \mid \YY \in \HH(G) \right\}$.
Note here that $\inner(G)\cup\unsur(G)$ partition the maximal cliques $\CC(G)$.

\subsection{Template: Fixing the Topology of Chains}
\label{sec:template}

The set of chains~$\HH(G)$ of a (connected) graph~$G$ already considerably prescribes many paths that are present in any proper representation $\Tsub$ of~$G$.
What remains are two problems of a more global flavor:
For a chain there may be a vast range of possible connections.
Simultaneously we have to assure properness, i.e., that any vertices $u$ and $v$ escape each other.
To cope with these tasks we define a preliminary representation, a \define{template}.
A template considerably fixes the topology of a tree $\Tsub$ representing $G$.
It narrows down the possible representations such that we can focus on the properness.
At the same time, our final algorithm has to guess a template, thus its possibilities should be bounded by our parameter, the size of $T$.

To fix the relative positions of chains, a template locates the terminals of a chain, $Y_0$ and $Y_{s+1}$, on some template tree $\Ttemp$.
A concrete realization $\Tsub$ of that template is a subdivision of $\Ttemp$.
It realizes a chain between its terminals as prescribed by the template.
More precisely, $\ttemp$ maps the nodes $\ty{}$ of $\Ttemp$ to the terminals of chains.
	To avoid ambiguity, let $\ttemp(\ty{})$ not map to a mere terminal $Y_0$, if $Y_0\in\ambig(G)$ (as it may be huge), but narrow down the mapping to some set of inner nodes of $\inner(G)$.
	In other words we fix the neighborhood of a chain on the `chain-level'.
	Note that any $Y_0\in\ambig(G)$ is a superset of some set of inner nodes, as seen in \autoref{lemma:rehang:possible}.
For convenience, let us also fix a mapping $\htemp$ of chains $\langle\Y_0,\{\y_1\},\dots,\{\y_s\},\Y_{s+1}\rangle$ onto $\Ttemp$:
Let $\htemp$ map to paths $\ty{0},\dots,\ty{s'+1}$ in $\Ttemp$, which should be conforming with the terminals, which is $\ttemp(\ty{0})\subseteq Y_0$ and $\ttemp(\ty{s'+1})\subseteq Y_{s+1}$.
If the chain does not contain any branching node, it suffices to represent the terminals.
Then $\htemp$ maps simply to the single-edge path $\ty{0},\ty{s'+1}$ (for example $\YY'$ in \autoref{figure:template}(a),(b)).
In the other extreme, every inner node may be a branching node (respectively used as a terminal of another chain), thus possibly $s'= s$.

In more detail, a chain $\YY$ may correspond to a path  $\y_0\dots\y_{s+1}$ with an inner branching node $\y_0'$ which is the endpoint of a path $\y_0'\y_1'\dots$ corresponding to another chain $\YY' = \langle Y_0',\dots,\rangle$.
Thus, the chain $\YY$ must be mapped to a path with an inner node $\ty{j}$ that is an endpoint of the path $\ty{j},\typ{1} \dots, \typ{s''+1}$ that is the image of the other chain $\YY'$ (for example $\YY$ in \autoref{figure:template}a)b)).
As seen before, then $\innerr(\YY) \subseteq Y_0' \in \ambig(G)$.
Hence, the mapping of that terminal is $\ttemp(\ty{j})= \innerr(\YY)$.
We require this behavior for inner nodes like $\lambda_j$.

\begin{figure}[t]
\begin{tikzpicture}[scale=1]
\begin{scope}[local bounding box=boxA]

\node[ellipse,draw=\ycolor,fill=\ycolor,minimum height=20, minimum width=30] at (-0.8,2.3) {};
\node[ellipse,draw=\lcolor,fill=\lcolor,minimum height=20, minimum width=30] at (-0.8,0.5) {};

\newcommand{\shift}{1.2}
\vintervall{0.4}{0.7}{0.55-\shift}
\vintervall{0.9}{2.4}{0.55-\shift}

\vintervall{0.4}{1.5}{0.35-\shift}
\vintervall{1.7}{2.4}{0.35-\shift}

\vintervall{0.4}{1.9}{0.15-\shift}
\vintervall{2.1}{2.4}{0.15-\shift}

\vintervall{0.4}{2.4}{0.75-\shift}

	\node[label=left:$\YY$] at (0.2-\shift,1.3) {};


\node[ellipse,draw=\zcolor,fill=\zcolor,minimum height=70, minimum width=30] at (0.45,1.4) {};
\iiclique{\rcolor}{2.25}{}

\vintervall{0.4}{0.7}{0.55}
\vintervall{0.9}{2.4}{0.55}

\vintervall{0.4}{1.5}{0.35}
\vintervall{1.7}{2.4}{0.35}

\vintervall{0.4}{1.9}{0.15}
\vintervall{2.1}{2.4}{0.15}

\vintervall{0.4}{2.4}{0.75}\intervall{0.75}{1.4}{1.1}

\intervalll{1}{2}{1.3}
\intervalll{1.5}{2.5}{1.1}
\intervalll{2}{2.5}{1.3}
	\node[label=above:$\YY'$] at (1.55,1.2) {};

%
%
%

	\end{scope}
	\node[xshift=-.1cm, yshift=-1.1cm] at (boxA.west) {\textbf{(a)}};
	\begin{scope}[local bounding box=boxB,xshift=3.7cm]

\draw (0,0.1) rectangle (0.9,0.2) [];
\draw (0,0.3) rectangle (0.9,0.75) [fill=\lcolor];
	\node[label=left:$\ty{0}$] at (0.2,0.5) {};
\draw (0,0.85) rectangle (0.9,1.95);
	\node[label=left:$\ty{1}$] at (0.2,1.3725) {};
\draw (0,2.05) rectangle (0.9,2.5) [fill=\ycolor];
	\node[label=left:$\ty{2}$] at (0.2,2.25) {};
\draw (0,2.6) rectangle (0.9,2.7) [];

\draw (2,1) rectangle (2.45,1.4) [fill=\rcolor];
	\node[label=above:$\ty{}$] at (2.25,1.2) {};
\draw (2.55,1) rectangle (3.4,1.4) [fill=\zcolor];
\draw (3.5,1) rectangle (4,1.4) [fill=\zcolor];
\draw (4.1,1) rectangle (4.2,1.4) [];

\draw (2.9,1.5) rectangle (3.4,2.1) [fill=\zcolor];
\draw (2.9,2.2) rectangle (3.4,2.3) [];

\draw (2.9,0.3) rectangle (3.4,0.9) [fill=\zcolor];
\draw (2.9,0.1) rectangle (3.4,0.2) [];

\vintervall{0.4}{0.7}{0.55}
\vintervall{0.9}{2.4}{0.55}

\vintervall{0.4}{1.5}{0.35}
\vintervall{1.7}{2.4}{0.35}

\vintervall{0.4}{1.9}{0.15}
\vintervall{2.1}{2.4}{0.15}

\vintervall{0.4}{2.4}{0.75}\intervall{0.75}{1.4}{1.1}

\intervalll{1}{2}{1.3}
\intervalll{1.5}{2.5}{1.1}

\intervallca{2.1}{3.05}{1.3}{black}\vintervallca{2}{1.3}{3.05}{black}

\intervalll{2.5}{4}{1.1}\vintervall{1.1}{0.4}{3.05}

\intervalll{3.5}{4}{1.3}
\vintervall{0.4}{0.7}{3.25}
\vintervall{1.7}{2}{3.25}

	\end{scope}
	\node[xshift=-.25cm, yshift=-1.1cm] at (boxB.west) {\textbf{(b)}};
	\begin{scope}[local bounding box=boxC, xshift=10.3cm]

\newcommand{\rheight}{2.2}
\node[ellipse,draw=\rcolor,fill=\rcolor,minimum height=15, minimum width=10] at (-0.50,\rheight) {};
\node[ellipse,draw=\ycolor,fill=\ycolor,minimum height=15, minimum width=10] at (0.00, \rheight) {};
\node[ellipse,draw=\lcolor,fill=\lcolor,minimum height=15, minimum width=10] at (0.5, \rheight) {};
\node[] at (0,\rheight) {$\gr=\ty{},\ty{2},\ty{0}\dots$};

\renewcommand{\rheight}{1.4}
\node[ellipse,draw=\rcolor,fill=\rcolor,minimum height=20, minimum width=20] at (0.5,\rheight) {};
\node[] at (0,\rheight) {$\big\langle Y_0',\dots,\{\y_2'\} \big\rangle^\gr $}; 

\renewcommand{\rheight}{0.6}
\node[ellipse,draw=\lcolor,fill=\lcolor,minimum height=15, minimum width=10] at (-0.65, \rheight) {};
\node[ellipse,draw=\ycolor,fill=\ycolor,minimum height=15, minimum width=10] at (0.5, \rheight) {};
\node[] at (0,\rheight) {$\big\langle Y_0,\dots,Y_3 \big\rangle^\gr $};

	\end{scope}
	\node[xshift=-.2cm, yshift=-1.1cm] at (boxC.west) {\textbf{(c)}};
\end{tikzpicture}
\caption{
a)
Depiction of the chains of almost the graph from \autoref{figure:partition}a) with marked terminals.
b)
A template prescribes the positioning of terminals, chains and $\unsur$.
Empty boxes represent leaves.
The chain $\YY$ is mapped to path $\ty{0}\ty{1}\ty{2}$.
The chain $\YY'$ is mapped to the single edge path $\ty{1}\ty{}$ (an edge despite the gap in the picture).
We have $\ttemp(\lambda_1)=\innerr(\YY)$, and indeed any attachment of $\YY'$ to an inner node of $\YY$ is possible.
For the chain $\YY'$ a mapping to single edge suffices since here it is realized by a path of subdivision nodes.
c)
A sample root-ordering (colors mark the mapping of $\ttemp$),
and resulting orientation of chains $\YY'$ and $\YY$.
Here $\ty{2}$ (mapped to $Y_3$) is a tie-breaker for $\YY$.
}
\label{figure:template}
\end{figure}

\begin{definition}
	Let \( G \) be a connected chordal graph.
	A \define{template} of a tree $T$ (w.r.t.\ $G$) is a triple $(\Ttemp,\ttemp,\htemp)$ where
	\begin{itemize}
		\item \( \Ttemp \) is a re-subdivision of \( T \),
		\item \( \ttemp \) is a mapping of the non-leaves of \( \Ttemp \) to \( \unsur(G) \cup \inner(G) \),
		\item \( \htemp \) is a bijection of the chains \( \HH(G) \) to an edge-disjoint set of non-trivial (i.e., containing at least one edge) paths between non-leaves of \( \Ttemp \), and
		\item for every chain $\langle Y_0,\dots,Y_{s+1}\rangle \in \HH(G)$ mapped to a path \( \ty{0},\dots,\ty{s'+1} \)
		we have that
		$\ttemp(\ty{0}) \subseteq Y_0$ and $\ttemp(\ty{s'+1}) \subseteq Y_{s+1}$
				and \( \ttemp(\ty{i}) = \innerr(\YY) \) for every $i \in [s']$.
	\end{itemize}
	Consider a tree $\Tsub$ where each non-leaf $\y$ is identified with a maximal clique $\V{\y}$.
	Then $\Tsub$ \define{realizes} the template \( (\Ttemp,\ttemp,\htemp) \) if
		$\Tsub$ results from subdividing $\Ttemp$ and
	\( \V{\ty{}} \in \ttemp(\ty{}) \) for every non-leaf $\ty{}$ of $\Ttemp$.
\end{definition}

Notably the image of $\htemp$ does not necessarily cover every edge between non-leaves.
Namely, non-surrounded nodes $y$ and $y'$ might be neighbors.
The next lemma establishes that every tree $\Tsub$ that is a \compact representation realizes some template, as we intended.

\begin{lemma}\sv{[$(\star)$]}
\label{lemma:template:exists}
If $\Tsub$ is a re-subdivision of a tree $T$ and a \compact representation of a connected graph \( G \), then
$\Tsub$ realizes some template $(\Ttemp,\ttemp,\htemp)$ of $T$.
\end{lemma}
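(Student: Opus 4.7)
The plan is to read off the template directly from $\Tsub$ by designating a subset $N$ of non-leaves of $\Tsub$ as template-nodes and suppressing the rest. Let $N$ consist of all branching nodes of $\Tsub$ together with all non-surrounded non-leaves of $\Tsub$. Form $\Ttemp$ from $\Tsub$ by iteratively suppressing every non-leaf not in~$N$; each such node has degree~$2$ (it is neither a branching node nor a leaf), so $\Ttemp$ is still a tree, has the same leaves as $\Tsub$, and preserves the topology relative to $T$. Hence $\Ttemp$ is a re-subdivision of $T$ and $\Tsub$ is a subdivision of $\Ttemp$.

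Next, define $\ttemp$ on each $\ty{} \in N$ via the clique $\V{\ty{}}$: if $\V{\ty{}}$ is not surrounded, put $\ttemp(\ty{}) = \{\V{\ty{}}\} \in \unsur(G)$; otherwise $\V{\ty{}}$ is, by Lemma~\ref{lemma:surround:characterization}, inner to a unique chain $\YY$, and we set $\ttemp(\ty{}) = \innerr(\YY) \in \inner(G)$. Then $\V{\ty{}} \in \ttemp(\ty{})$ holds by construction. Define $\htemp$ by mapping each chain $\YY \in \HH(G)$ with corresponding $\Tsub$-path $\y_0, \y_1, \dots, \y_s, \y_{s+1}$ (given by Lemma~\ref{lemma:surround:characterization}) to the path in $\Ttemp$ consisting of exactly those nodes of this $\Tsub$-path that belong to $N$.

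The remaining verifications rest on two observations. First, both endpoints of any chain path lie in $N$: if a terminal belongs to $\unsur(G)$, its realization is non-surrounded; if it belongs to $\ambig(G)$, then by the definition of a guard its realization is a branching node of $\Tsub$. Hence $\htemp(\YY)$ is a well-defined path with at least one edge; and its internal template-nodes are surrounded and therefore uniquely inner to $\YY$ by Lemma~\ref{lemma:surround:characterization}, giving $\ttemp(\ty{i}) = \innerr(\YY)$ for every $i \in [s']$. Second, every edge of a chain path has at least one surrounded endpoint, which is uniquely inner to a single chain; thus different chain paths are edge-disjoint in $\Tsub$, and both this disjointness and the injectivity of $\htemp$ are preserved when we pass to $\Ttemp$ by suppression.

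The main subtlety is verifying $\ttemp(\ty{0}) \subseteq Y_0$ (symmetrically for $Y_{s+1}$) in the case where $Y_0 \in \ambig(G)$ and the realization $\y_0$ is itself surrounded, hence inner to some other chain $\YY''$, so that $\ttemp(\ty{0}) = \innerr(\YY'')$. Here Lemma~\ref{lemma:rehang:possible} provides precisely what is needed: since $\y_0 \in Y_0 \cap \innerr(\YY'')$ is nonempty, that lemma forces $\innerr(\YY'') \subseteq Y_0$, and so $\ttemp(\ty{0}) \subseteq Y_0$. All other consistency requirements of the template are immediate from the construction.
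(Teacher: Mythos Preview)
Your proof is correct and follows essentially the same approach as the paper: you form $\Ttemp$ by keeping exactly the branching nodes and the non-surrounded non-leaves of $\Tsub$ (suppressing the surrounded subdivision nodes), define $\ttemp$ and $\htemp$ in the natural way, and verify the template axioms using Lemma~\ref{lemma:rehang:possible} for the $\ambig(G)$ terminal case. Your write-up is in fact slightly more explicit than the paper's on edge-disjointness and on why the realization of an $\ambig(G)$ terminal is a branching node.
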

\begin{proof}
Let \( \Ttemp \) be the tree \( \Tsub \) after contracting every not surrounded subdivision node.
Hence $\Tsub$ is a subdivision of \( \Ttemp \), and every triple of nodes $(\ty{1},\ty{2},\ty{3})$ of $\Ttemp$ is $\Ttemp$-ordered if and only if it is $\Tsub$-ordered.
Consider a chain $\langle Y_0,\dots,Y_{s+1} \rangle$ and its corresponding path $\y_0\y_1\dots,\y_s,\y_{s+1}$ in $\Tsub$.
Because each end node $\y_0$ and $\y_{s+1}$ is either not surrounded or is a branching node, they still exists in $\Ttemp$.
	Let \( \htemp \) map the chain \( \langle Y_0,\dots,Y_{s+1} \rangle \) to the non-trivial unique path in the tree \( \Ttemp \) between \( \y_0 \) and \( \y_{s+1} \).
	Note that each two sets of inner nodes $I_1,I_2$ of some distinct chain, hence $I_1,I_2\in\inner(G)$, are disjoint.
	Thus $\htemp$ maps each chain $\langle Y_0,\dots,Y_{s+1} \rangle$ to a contraction of the corresponding path, which makes \( \htemp \) a bijection to an edge-disjoint set of paths.
For every not surrounded non-leaf \( \y \in V(\Ttemp) \), let \( \ttemp(\y) = \{\y\} \in \unsur(G) \), which satisfies \( \V{\y} \in \ttemp(\y) \).
For every surrounded node \( \y_i \in V(\Tsub) \cap V(\Ttemp) \) and its chain \( \langle \cdot,\{y_1\},\dots,\{\y_s\},\cdot \rangle \), with \( 1 \leq i \leq s \),
	let \( \ttemp(\y_i) = \{\y_1,\dots,\y_s\} \in \inner(G) \), which also satisfies \( \V{\y_i} \in \ttemp(\y_i) \).

It remains to show  for every chain \( \langle Y_0,Y_1=\{\y_1\},\dots,Y_s=\{\y_s\},Y_{s+1}\rangle \) is mapped to a path \( \ty{0},\dots,\ty{k+1} \) in the tree \( \Tsub \)
	such that \( \ttemp(\ty{0}) \subseteq Y_0 \) and \( \ttemp(\ty{k}) \subseteq Y_{s+1} \) as well as \( \ttemp(\ty{i}) = \{y_1,\dots,\y_s\} \) for \( i \in [k] \).
By our construction \( \ttemp(\ty{i}) = \{\y_1,\dots,\y_s\} \) for \( i \in [k] \) as desired.
Let \( \y_0\y_1\dots\y_s\y_{s+1} \) be the corresponding path of \( \{\Y_0,\dots,\Y_{s+1}\} \) in \( \Ttemp \).
If \( Y_0 \in \unsur(G) \), then \( Y_0 = \{\y_0\} \) and hence we have that \( \ttemp(\ty{0}) = \y_0 \in Y_0 \) as desired.
In the remaining case \( Y_0 \in \ambig(G) \) which implies another chain \( \langle Y_0',\dots,Y_{s'+1}'\rangle \) where \( \y_{0} = y_{i'} \) for some \( i' \in [s'] \).
Let \( \y_0' \dots \y_{s'+1}' \) be its corresponding path.
Then \( \y_1 \) has guards \( Y_0 \) and \( Y_2 \) where \( Y_2 \) is a superset of \( \y_2 \) and \( Y_0 \) is a superset of \( \{\y_0',\y_1',\dots,y_{s'+1}'\} \) as seen in \autoref{lemma:rehang:possible}.
Thus we have \( \ttemp(\ty{0}) = \{\y_1',\dots,\y_{s'}'\} \subseteq Y_0 \) as desired.
Symmetrically it follows that \( \ttemp(X_{s+1}) \subseteq Y_{s+1} \).
\end{proof}

As formalized in the next lemma, we can enumerate the possible templates in \cFPT, since the number of chains is quadratically bounded in $V(T)$.

\begin{lemma}
\label{lemma:try:templates}
There are $2^{\Oh(t^2 \log t)} $ possible templates of a tree $T$ w.r.t.\ a connected chordal graph~$G$, which can be enumerated in time $2^{\Oh(t^2 \log t)} \cdot n^3$; where $t=|V(T)|$, $n=|V(G)|$.
\end{lemma}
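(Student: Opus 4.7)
My plan is to bound each ingredient of a template---the re-subdivision $\Ttemp$, the node map $\ttemp$, and the chain map $\htemp$---using only that $\unsur(G)$, $\inner(G)$, and $\HH(G)$ all have cardinality $\Oh(t^2)$. Once these bounds are established, a Cayley-style count of labelled trees together with polynomial-sized codomains for $\ttemp$ and $\htemp$ will yield the claimed $2^{\Oh(t^2 \log t)}$ estimate.

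The first and most delicate step is to prove $|\HH(G)| = \Oh(t^2)$. By Lemma~\ref{lemma:bounding:not:surrounded}, $|\unsur(G)| \leq t^2 + 1$. To bound the chains, for any candidate template $(\Ttemp, \ttemp, \htemp)$ I would form a ``skeleton'' $\Ttemp^\ast$ by contracting every inner node of a chain image that has degree two in $\Ttemp$. Its non-leaves are a subset of the branching nodes of $\Ttemp$ (at most $t$, since re-subdivisions of $T$ preserve the at-most-$t$ branching structure) together with the non-surrounded non-leaves of $\Ttemp$ ($\Oh(t^2)$); its leaves number at most $t$. Hence $|V(\Ttemp^\ast)| = \Oh(t^2)$. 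Since $\htemp$ maps chains to edge-disjoint non-trivial paths---which remain edge-disjoint non-trivial paths in $\Ttemp^\ast$---we get $|\HH(G)| \leq |E(\Ttemp^\ast)| = \Oh(t^2)$. If $G$ admits no template at all, the bound holds vacuously and the enumeration can safely reject early whenever $|\HH(G)|$ exceeds the universal constant multiple of $t^2$. Similarly $|\inner(G)| = |\HH(G)| = \Oh(t^2)$.

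With these cardinalities in hand, the counting becomes routine: labelled-tree topologies on $k = \Oh(t^2)$ nodes number at most $k^{k-2} = 2^{\Oh(t^2 \log t)}$ by Cayley; the map $\ttemp$ has $(|\unsur(G)|+|\inner(G)|)^{\Oh(t^2)} = 2^{\Oh(t^2 \log t)}$ options; and $\htemp$ chooses an ordered pair of non-leaves for each of $\Oh(t^2)$ chains, again $2^{\Oh(t^2 \log t)}$ options. Their product gives the stated template count.

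For enumeration, I would first precompute $\unsur(G)$, $\ambig(G)$, $\inner(G)$, and $\HH(G)$ in $\Oh(t^3 n^3)$ time via Lemma~\ref{lemma:surround:characterization} applied to each of the $\Oh(n)$ maximal cliques. Then I would iterate through the $2^{\Oh(t^2 \log t)}$ candidate triples, verifying the four consistency conditions of the template definition in $\mathrm{poly}(t)$ time per candidate. The main obstacle I anticipate is the clean $\Oh(t^2)$ bound on $|\HH(G)|$, requiring the skeleton argument described above; the remainder reduces to routine combinatorial counting and polynomial-time verification.
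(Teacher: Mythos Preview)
Your approach is correct and close in spirit to the paper's, but the details differ in two places worth noting.

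For bounding $|\HH(G)|$, the paper works directly with a compact representation $\Tsub$ (not a template) and simply counts chain--endpoints: each chain's corresponding path in $\Tsub$ has two endpoints; a branching-node endpoint consumes one of the at most $2|E(T)|$ incidences at branching nodes, while a subdivision-node endpoint must be a singleton terminal, hence lies in $\unsur(G)$, and can serve at most two chains. This yields $|\HH(G)| \le 2|E(T)| + |\unsur(G)| = \Oh(t^2)$ in one line. Your skeleton argument reaches the same bound but is phrased against an arbitrary ``candidate template'', which is slippery: the template definition does not force every degree-$2$ non-leaf outside chain interiors to be mapped into $\unsur(G)$, nor does it force $\ttemp$ to be injective there, so your claim that the surviving degree-$2$ nodes number $\Oh(t^2)$ does not follow from the definition alone. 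It \emph{does} follow if you take the specific template produced by Lemma~\ref{lemma:template:exists} from a compact $\Tsub$---or, more cleanly, just argue on $\Tsub$ directly as the paper does.

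For the enumeration, the paper takes a concrete route: subdivide each edge of $T$ into a path of length $\Oh(t^2)$, obtaining a tree $T^{-1}$ on $\Oh(t^3)$ nodes that is automatically a subdivision of $T$; then try all $\Oh(t^2)^{\Oh(t^2)}$ placements of the chains and of $\unsur(G)$ onto $T^{-1}$; finally contract the unused subdivision nodes to recover $\Ttemp$. Your Cayley-style count over all labelled trees on $\Oh(t^2)$ nodes is a valid upper bound and yields the same $2^{\Oh(t^2\log t)}$, but it over-generates many trees that are not re-subdivisions of $T$ and must be filtered after the fact, whereas the paper's construction produces only re-subdivisions by design. Both routes work, and the per-candidate verification is polynomial either way.
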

\begin{proof}
Recall that the algorithm \ref{lemma:surround:characterization} from computes the guards $L$ and $R$ for every node $\y$ in time $\Oh(t^3n^3)$ where $t=|V(T)|$ and $n=|V(G)|$.
Upon discovering that $\y_1$ has a guard $Y_2$ let us link $\{\y_1\}$ to $Y_2$.
This way obtain the set of chains $\HH$ as double linked list with single links to the terminals.
At the same time we have the not surrounded sets $\{\y\}\in\unsur(G)$ as the set of $\y$ with empty guards.
It remains to try all possible mappings of $\unsur(G)$ and $\HH$ onto a re-subdivision $\Ttemp$.

How many chains are there?
Assume a yes-instance, hence some re-subdivision $\Tsub$ of $T$ is a compact representation of $G$.
Consider the chains where at least one terminal is realized by a branching node.
A branching node can only realize the terminal of at most its degree many chains.
Thus there are at most $2|E(T)|$ many chains where at least one terminal is realized by a branching node.
The remaining chains have terminals that are realized by subdivision nodes,
	hence have both terminals in $\unsur(G)$ as implied by \autoref{lemma:surround:characterization}.
Then every subdivision node of $\unsur(G)$ can be the terminal of at most two chains,
	which yields the bound of $|\unsur(G)|$ on the remaining chains.
In total we obtain the bound of $t+|\unsur(G)| = \Oh(t^2)$ on the number of chains.

To try all mappings of chains onto possibly $\Ttemp$, let us try all mappings to a tree $T^{-1}$ which is $T$ with sufficiently many subdivisions.
Since there are at most $\Oh(t^2)$ different chains and not-surrounded nodes it suffices to replace every edge of $T$ by a path of length $\Oh(t^2)$.
Now try all mappings of chains $\HH$ and not surrounded $\unsur(G)$ onto this super-preliminary tree $T^{-1}$, which are $\Oh(t^2)^{\Oh(t^2)} = 2^{\Oh(t^2 \log t)}$ mappings to try.
Then for each mapping, contract every path of unused inner subdivision nodes, which results in a template and is easily possible in time $\Oh(n^3)$.
\end{proof}

\subsection{Normalized Representation: Achieving Properness}\label{sec:norm}

We now consider a fixed template and  focus on the properness.
The remaining leeway is to locally change the branching nodes of a particular chain.
We use a construction which only fails if the considered template does not allow a \compact representation.
The result is a \define{normalized} representation.

Any representation $\Tsub$ can be normalized by a bottom-up process:
Move each branching node $\y_i$ up as much as possible within the local subtree, i.e.\ as long as the subtree remains compact.
By moving up, we mean replacing $\y_i$ by $\y_j$ as a branching node that is closer to a global root in the chain.
The set of nodes that potentially replace $\y_i$ behave in a linear fashion, and hence allow this greedy approach.
Thus we may assume that a normalized representation exists for a yes-instance.
Our algorithm though has to construct a representation from scratch.
By incorporating this idea in a more careful manner we may assemble each subtree of a normalized $\Tsub$ bottom-up.
Here we attach the inductive subtrees in the most conservative way;
then the same normalization step as before yields the desired new subtree.
Again, the linear behavior of the potential replacements enable this greedy approach.

To start, let us define the root of a template.
Since chains may not `align' towards a picked root $\gr_1$, we have to work with additional tie-breakers $\gr_2,\gr_3,\dots$.
\begin{definition}
A \define{root-ordering} $\gr$ is an ordering $\gr_1,\gr_2,\dots$ of nodes $V(\Ttemp) \cap \{ \ty{} \mid \ttemp(\ty{}) \in \unsur \}$.
\end{definition}

The specific root-ordering will not be of importance and we may pick one arbitrarily.
We assume in the following that every tree and template comes with a root-ordering.
See \autoref{figure:template} for an example.

\begin{definition}
A root-ordering $\gr$ and a template $(\Ttemp,\ttemp,\htemp)$ define an orientation for every chain $\YY=\langle Y_0,\dots,Y_{s+1} \rangle$ as follows.
Let $\htemp(\YY)$ map to a path in $\Ttemp$ with end nodes $\ty{0}$ and $\ty{s+1}$ where $\ttemp(\ty{0})\subseteq \Y_0$ and $\ttemp(\ty{s+1}) \subseteq \Y_{s+1}$.
Let $k$ be the smallest index such that $(\ty{0},\ty{s+1},\gr_k)$ or $(\gr_k,\ty{0},\ty{s+1})$ is $\Ttemp$-ordered.
If $(\ty{0},\ty{s+1},\gr_k)$  is $\Ttemp$-ordered, then \define{$\YY$ is oriented towards $\Y_{s+1}$}, which we denote by writing $\langle Y_0,\dots,Y_{s+1} \rangle^\gr$.
\end{definition}

Note that the index $k$ always exists, since every neighbor of a leaf is not surrounded.

Let $\treeop{\y_i,\y_j}\Tsub$ be the tree resulting from replacing branching node $\y_i$ by $\y_j$.
Its local version is $\treeopsss{\y_i,\y_j}\Tsub$.
The models living in the more restrict subtree $\treeops{\y_i,\y_j}\Tsub$ are critical:
	Their properness is at stake.
We define the possible replacements of a node $\y_i$ resulting in a proper representation as the potential $\pot(\Tsub,\y_i)$.

\begin{definition}
Let~$\gr$ be a root-ordering.
Consider a branching node \( y_i \in V(\Tsub) \) and its chain
$ \langle Y_0,\dots,\{\y_i\},\dots,\{\y_j\},\dots,Y_{s+1} \rangle^\gr $ where $\y_0$ realizes $\Y_0$.
For integers $i\leq j<s$, let
\begin{itemize}
	\item \( \treeop{\y_i,\y_j}\Tsub \) be the tree \( \Tsub \) where $\y_i$ replaces $\y_j$ as a branching node,
			i.e., edge $\{\y_i,z\}$ is replaced by a new edge $\{\y_j,z\}$, for every node \( z \in N_{\Tsub}(\y_i) \setminus (\Y_{i-1}\cup\Y_{y+1}) \);
	\item \( \treeopsss{\y_i,\y_j}\Tsub \) be the tree consisting of the subtree of $\treeop{\y_i,\y_j}\Tsub$ rooted at $\y_0$ (w.r.t.\ global root $\gr_1$) and path $\y_0,\dots,\y_s$ where for every chain node $\y_{i'}\in\{\y_1,\dots,\y_s\}$ and non-chain neighbor $z' \in N_{\treeop{\y_i,\y_j}\Tsub} \setminus (Y_{i'-1} \cap Y_{i'+1})$ a new leaf node $\y_{i'}'$ added adjacent to $\y_{i'}$;
	\item \( \treeops{\y_i,\y_j}\Tsub \) be the tree consisting of the subtree of $\treeop{\y_i,\y_j}\Tsub$ rooted at $\y_0$ (w.r.t.\ global root $\gr_1$) and path $\y_0,\dots,\y_{j-1}$.
\end{itemize}
\end{definition}

For convenience, let \( \treeops{\y_i}\Tsub \coloneqq \treeops{\y_i,\y_i}\Tsub \) as well as \( \treeopsss{\y_i}\Tsub \coloneqq \treeopsss{\y_i,\y_i}\Tsub \).

\begin{definition}
	We define the \define{potential} $\pot(\Tsub,\y_i)$ (w.r.t.\ a template $(\Ttemp,\ttemp,\htemp)$ and root-ordering $\gr$) of non-leaf node \( \y_i \).
	\begin{itemize}
		\item For a not-surrounded node \( \y_i \), let \( \pot(\Tsub,\y_i) = \{\y_i\} \).
		\item For a surrounded branching node \( \y_i \), consider its chain \( \langle\cdot, \dots,\{\y_i\},\dots,\{\y_s\}, \cdot \rangle^\gr \).
		The potential $\pot(\Tsub,\y_i)$ contains every node $y_j \in \{ \y_i,\dots,\y_s \}$
		where the tree $\treeop{\y_i,\y_j}\Tsub$ is such that
		every vertex $u$ with model $M_u \subseteq V(\treeops{\y_i,\y_j}\Tsub)$ escapes every other vertex $v$.
	\end{itemize}
\end{definition}

A simple example is that $\y_i\in\pot(\Tsub,\y_i)$ for \compact representations $\Tsub$, as the considered replacement does nothing.
In contrast, $\pot(\Tsub,\y_i)=\emptyset$ indicates non-properness for the subtree of $\y_{i-1}$.
Indeed, the potential of $\y_i$ captures exactly the possible replacements of $\y_i$ as a branching node.

If some replacement $\y_j$ of $\y_i$ already is a branching node, the topology changes and $\treeop{\y_i,\y_j}\Tsub$ does not realize the same template.
To avoid such issues, we require (without loss of generality) a minimal representation:
	A tree \( \Tsub \) is \define{minimal} if
		there is no \compact representation \( \Tsub' \) of \( G \) that is a re-subdivision of \( \Tsub \) with fewer branching nodes.
Clearly, if there is a representation $\Tsub$ of $G$, we may also assume that it is minimal.
In particular, the contraction would result in different candidate re-subdivision of $T$, which we consider separately.

We may compute it locally, meaning it suffices to consider the subtree $\treeopsss{\y_i}$.
Since the potential $\pot(\Tsub,\y_i)$ is a connected subsequence of $\langle \y_i,\dots,\y_s \rangle$, we either view it as a set or as such a subsequence.
Further, if the potential is $\langle \y_i,\dots,\y_j\rangle$, then replacing $\y_i$ with the last node $\y_j$ makes the resulting potential at $\y_j$ singleton.
Finally, the potential is independent from later replacements, assuming a bottom-up (i.e., leaf-to-root) procedure.

\begin{lemma} %
	\label{lemma:pot:is:connected}
Let $\Tsub$ be a minimal \compact representation of a connected graph $G$.
We observe the following for a chain $\langle\cdot,\dots,\{\y_i\},\dots,\{\y_j\},\dots,\{\y_s\},\cdot\rangle^\gr$ for $i\leq j \leq s$:
\begin{enumerate}
\item
If $\y_j\in\pot(\Tsub,\y_j)$, then $\treeop{\y_i,\y_j}\Tsub$ is a minimal \compact representation of $G$.
\item
locality, $\pot(\Tsub,\y_i) = \pot(\treeopsss{\y_i}\Tsub,\y_i)$,
\item
connectivity,
$\pot(\Tsub,\y_i)$ is connected in $\Tsub$, and hence some subsequence $\langle\y_i,\dots,\y_j\rangle$,
\item
linearity,
$\pot(\Tsub,\y_i)=\langle\y_i,\dots,\y_j\rangle$ if and only if $\pot(\treeop{\y_i,\y_j}\Tsub,\y_j) = \langle \y_j \rangle$.
\item
independence,
$\pot(\treeop{\y_i,\y_j}\Tsub,x) \subseteq \pot(\Tsub,x)$ for every node $x \in V(\Tsub)$ where $(x,\y_i, \gr_1)$ is $\Tsub$-ordered.
\end{enumerate}
\end{lemma}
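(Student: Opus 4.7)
My plan is to prove the five parts in order, relying throughout on two structural facts: the operation $\treeop{\y_i,\y_j}$ only reassigns the external neighbours of $\y_i$ to $\y_j$, leaving the non-leaf set and the collection of maximal cliques untouched; and by \autoref{lemma:rehang:possible} the intersections $\V{x}\cap\V{\y_{i'}}$ with an external neighbour $x$ of the chain coincide for all $i'\in[s]$. The second fact certifies that models remain connected under the move (so the operation is well-defined on models) and that every vertex straddling $\y_i$ and an external subtree automatically lies in each $\V{\y_{i'}}$ between $\y_i$ and $\y_j$.

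For (1), I verify the three compactness conditions. Conditions \ref{it:compactRep:emptyLeaves} and \ref{it:compactRep:uniqueMaximal} are immediate since we neither add leaves nor alter the bijection between non-leaves and maximal cliques. For \ref{it:compactRep:escapes}, I split ordered pairs $(u,v)$ according to whether $M_u$ lies in $V(\treeops{\y_i,\y_j}\Tsub)$: such $u$ escape every $v$ by the hypothesis; the remaining $u$ have some $a\in M_u$ outside $\treeops{\y_i,\y_j}\Tsub$, and the region of the tree above $\y_j$, modulo the controlled relocation, still provides an escape witness near $a$. Minimality is preserved because no new branching node is introduced. For (2), the potential depends only on models $M_u\subseteq V(\treeops{\y_i,\y_j}\Tsub)$, and the artificial leaves added in $\treeopsss{\y_i}\Tsub$ faithfully encode the external attachments of chain nodes; hence the same escape checks decide membership in both potentials.

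I expect (3), connectivity, to be the main obstacle, since it underlies the greedy bottom-up construction in the algorithm. The plan is to show that $\y_j\in\pot(\Tsub,\y_i)$ implies $\y_{j-1}\in\pot(\Tsub,\y_i)$ whenever $j>i$. Vertices $u$ with $M_u\subseteq V(\treeops{\y_i,\y_{j-1}}\Tsub)$ form a subset of those with $M_u\subseteq V(\treeops{\y_i,\y_j}\Tsub)$, so the hypothesis already supplies an escape witness for each such $u$ in the $\y_j$-tree. The delicate step is transporting each such witness to the $\y_{j-1}$-tree: the relevant edges sit strictly inside $\treeops{\y_i,\y_{j-1}}\Tsub$ or on the chain boundary, where they coincide in both trees; and the stabilised intersections of \autoref{lemma:rehang:possible} preclude any new model containment from emerging when the relocation is shortened. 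Combined with $\y_i\in\pot(\Tsub,\y_i)$, which is trivial, this gives the contiguous subsequence.

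With (3) in hand, (4) follows from the composition identity $\treeop{\y_j,\y_k}\treeop{\y_i,\y_j}\Tsub=\treeop{\y_i,\y_k}\Tsub$ for $k\geq j$: maximality of $j$ in $\pot(\Tsub,\y_i)$ forces any $\y_k$ with $k>j$ to fail the escape check in the composed move, so $\pot(\treeop{\y_i,\y_j}\Tsub,\y_j)=\langle\y_j\rangle$; the reverse direction uses (1). For (5), by (2) the potential at $x$ is a local property of $\treeopsss{x}\Tsub$, which for $x$ with $(x,\y_i,\gr_1)$ $\Tsub$-ordered lies strictly below $\y_i$. The modification $\treeop{\y_i,\y_j}$ rearranges only the region at or above $\y_i$, so the edges governing the escape checks for $\pot(\cdot,x)$ are unaffected; any escape certification in the modified tree therefore also certifies membership in $\pot(\Tsub,x)$, yielding the claimed inclusion.
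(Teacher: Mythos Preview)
Your plan for parts (1)--(4) is essentially sound and close to the paper's argument, though you organise (3) differently: the paper first proves a \emph{monotonicity} step $\pot(\treeop{\y_i,\y_j}\Tsub,\y_j)\subsetneq\pot(\Tsub,\y_i)$ and a preliminary $(4')$, then derives connectivity from these, whereas you argue directly that $\y_j\in\pot(\Tsub,\y_i)$ forces $\y_{j-1}\in\pot(\Tsub,\y_i)$. Both routes work; yours is slightly more direct, the paper's packages the same inequality for reuse in (4). One small omission in your (1): to conclude minimality you need that none of $\y_{i+1},\dots,\y_j$ is already a branching node in $\Tsub$, which the paper deduces from the minimality of $\Tsub$ itself (otherwise rehanging to the first such node and contracting would give a compact representation with fewer branching nodes).

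Part (5), however, has a genuine gap. Your argument rests on the claim that $\treeopsss{x}\Tsub$ ``lies strictly below $\y_i$'' and is therefore untouched by $\treeop{\y_i,\y_j}$. This is false in the important case where $x$ belongs to the \emph{same} chain as $\y_i$ (say $x=\y_{i'}$ with $i'<i$): by definition $\treeopsss{\y_{i'}}\Tsub$ contains the entire chain path $\y_0,\dots,\y_s$ together with a new leaf at each chain node possessing a non-chain neighbour. The operation $\treeop{\y_i,\y_j}$ relocates exactly such a non-chain neighbour from $\y_i$ to $\y_j$, so the artificial leaf moves from $\y_i$ to $\y_j$ and the local tree changes. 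An escape witness of the form $(\y_j,\text{leaf})$ in the modified local tree need not transfer to $(\y_i,\text{leaf})$ in the original one when the vertex in question lies in $\V{\y_j}\setminus\V{\y_i}$. Thus locality alone does not deliver the inclusion. The paper instead proves (5) \emph{independently} of (1)--(4) by a direct contradiction: assuming some $x_{j'}$ enters the potential after the rehang, it tracks the unique edge at which the escape status differs and shows this forces $(x,\y_j,\y_i,\gr_1)$ to be $\Tsub$-ordered, contradicting the assumed orientation $\langle\cdot,\dots,\{\y_i\},\dots,\{\y_j\},\dots\rangle^{\gr}$ of the chain. This is precisely where the root-ordering $\gr$ (with its tie-breakers) is used; your argument never invokes it.
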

\begin{proof}
In the following we show the claims (1) to (5) with some intermediate steps.
The minimality of the representation $\Tsub$ is only needed to show that the modified tree $\treeop{\y_i,\y_j}\Tsub$ remains minimal, as part of (1), and for (2), locality.

\medskip

(1)
Every node is a leaf in \( \Tsub \) if and only if it is a leaf in \( \treeop{\y_i,\y_j}\Tsub \).
Thus the representation still has unique maximal cliques at every non-leaf node, and no model contains a leaf.
By \autoref{lemma:rehang:possible} we have that \( \V{z} \cap \V{y_i} = \V{z} \cap \V{y_j} \) for every non-chain child \( z \in N_{\Tsub}(\y_i) \setminus (\Y_{i-1} \cup \Y_{i+1}) \).
Thus every model still induces a connected subgraph in \( \treeop{\y_i,\y_j}\Tsub \).

It remains to show that every vertex \( u \in V(G) \) \escapes every vertex \( v \in V(G) \) at some edge in \( E(\treeop{\y_i,\y_j}\Tsub) \).
Let vertex \( u  \) strongly \escapes \( v \) at some edge \( e \in E(\Tsub) \).
If also \( e \in E(\treeop{\y_i,\y_j}\Tsub) \), then clearly \( u \) \escapes \( v \) at \( e \) also in the tree \( \treeop{\y_i,\y_j}\Tsub \).
Otherwise, we have either the case that \( e= (\y_i,\ybar) \) or \( e=(\ybar,\y_i) \) for some non-chain child \( \ybar \in N_{\Tsub}(\y_i) \setminus (\Y_{i-1} \cup \Y_{i+1}) \).
\begin{itemize}
	\item
	Consider that \( u \) strongly \escapes \( v \) at edge \( (\ybar,\y_i) \).
	By \autoref{lemma:rehang:possible} we have that \( \V{\ybar} \cap \V{\y_i} = \V{\ybar} \cap \V{\y_j} \).
	Thus still \( u,v \in \V{\ybar} \) while we have that \( v \notin \V{v_j} \) since \( v \notin \V{v_i} \).
	Hence \( u \) \escapes \( v \) at \( (\ybar,\y_j) \).
	\item
	Else \( u \) strongly \escapes \( v \) at edge \( (\y_i,\ybar) \).
	Then \( u,v \in \V{\y_i} \) and \( v \notin \V{\ybar} \).
	If vertex \( u \in \V{\y_j} \), then \( u \) \escapes \( v \) at edge \( (\y_j,\ybar) \).
	Thus consider that \( u \notin \V{\y_j} \).
	Because \( u \in \V{\y_i} \setminus \V{\y_j} \), we have that its model \( M_u \subseteq V( \treeops{\y_i,\y_j}\Tsub ) \).
	Since \( \y_j \in \pot(\Tsub,\y_i) \), by definition \( u \in V( \treeops{\y_i,\y_j}\Tsub ) \) escapes \( v \) at some edge in \( \treeops{\y_i,\y_j}\Tsub \).
\end{itemize}
Thus \( \treeop{\y_i,\y_j}\Tsub \) is a \compact representation of $G$.
Note that this also implies the reflexivity $\y_i\in\pot(\Tsub,\y_i)$.

Now it remains to show that $\treeop{\y_i,\y_j}\Tsub$ is also \stress{minimal}.
It suffices to show that $\{\y_{i+1},\dots,\y_{j}\}$ contains no branching node.
Assuming otherwise, there is a first branching node $\y_k\in\{\y_{i+1},\dots,\y_{j}\}$, meaning every node of $\{\y_{i+1},\dots,\y_{k-1}\}$ is a subdivision node.
Then $\treeop{\y_i,\y_k}\Tsub$ is a \compact representation of $G$.
However, also $\treeop{\y_i,\y_k}\Tsub$ results from a series of subdivisions of $\Tsub$ where path $\y_i,\dots,\y_k$ is contracted; thus has one less branching node than $\Tsub$.
A contradiction to the minimality of $\Tsub$.

\medskip

(2)
We show that $\pot(\Tsub,\y_i) = \pot(\treeopsss{\y_i}\Tsub,\y_i)$.
Observe that the vertices that need to escape some other vertex in the modified trees  $\treeop{\y_i,\y_j}\Tsub$ and $\treeop{\y_i,\y_j}\treeopsss{\y_i}\Tsub$, for some $\y_j\in\{\y_i,\dots,\y_s\}$, are the same:
	They are defined via the subtrees $\treeops{\y_i,\y_j}\Tsub $ and $\treeops{\y_i,\y_j}\treeopsss{\y_i}\Tsub$,
	which are equal.
We continue to show both directions

Consider that $\y_j \in \pot(\Tsub,\y_i)$ for some node $\y_j\in\{\y_i,\dots,\y_s\}$.
Then every vertex $u \in \treeops{\y_i,\y_j}\Tsub$ escapes every vertex $v \in V(G)$.
Note that none of the vertices $\{\y_{i+1}, \dots, \y_j\}$ is a branching node in tree $\Tsub$ since a branching node would contradict minimality as discussed in (1).
We have to show that every vertex $u \in V(\treeops{\y_i,\y_j}\treeopsss{\y_i}\Tsub)$ escapes every vertex $v \in V(G)$.
Let $u$ escape $v$ at some edge $(x,y) \in E(\Tsub)$,
	which must be incident to a vertex from $V(\treeops{\y_i,\y_j}\Tsub)$.
If even $(x,y)\in E(\treeops{\y_i,\y_j}\Tsub)$, then because of our initial observation $u$ also escapes $v$ in tree $\treeopsss{\y_i}\Tsub$.
Thus it remains to consider that $x \in V(\treeops{\y_i,\y_j}\Tsub)$ and $y \notin V(\treeops{\y_i,\y_j}\Tsub)$.
Recall that none of the vertices $\{\y_{i+1}, \dots, \y_j\}$ is a branching node in tree $\Tsub$.
Then by construction of $\treeopsss{\y_i}\Tsub$, there is a new leaf $\y'$ neighboring $x$ with $\V{\y'}=\emptyset$.
Thus $u$ escapes $v$ at edge $(x,\y') \in E(\treeopsss{\y_i}\Tsub)$.

For the other direction, consider that $\y_j \in \pot(\treeopsss{\y_i}\Tsub)$ for some node $\y_j\in\{\y_i,\dots,\y_s\}$.
Then every vertex $u \in V(\treeops{\y_i,\y_j}\treeopsss{\y_i}\Tsub)$ escapes every other vertex $v \in V(G)$.
We have to show that every vertex $u \in V(\treeops{\y_i,\y_j}\Tsub)$ escapes every other vertex $v \in V(G)$.
Let $u$ escape $v$ at some edge $(x,y) \in E(\Tsub)$,
	which must be incident to a vertex from $V(\treeops{\y_i,\y_j}\Tsub)$.
If even $(x,y) \in E(\treeops{\y_i,\y_j}\Tsub)$, then clearly $u$ also escapes $v$ in tree $\Tsub$.
Thus it remains to consider that $x \in V(\treeops{\y_i,\y_j}\Tsub)$ and $y \notin V(\treeops{\y_i,\y_j}\Tsub)$.
Then by construction of $\treeopsss{\y_i}\Tsub$, node $\y$ must be newly added leaf to $x$ due to an edge $(x,\y') \in E(\Tsub)$ which was removed.
Thus $u$ escapes $v$ at edge $(x,\y') \in E(\Tsub)$.

\medskip

(Monotonicity) We have that $\pot(\treeop{\y_i,\y_j}\Tsub,\y_j) \subsetneq \pot(\Tsub,\y_i)$ for $i<j$.
By definition $\y_i \notin \pot(\treeop{\y_i,\y_j}\Tsub,\y_j$ which shows the inequality.
The subset relation follows from the following observation.
The set of vertices whose escape has to be guaranteed increases since the node set of the considered tree increases, which is $V(\treeops{\y_i}\Tsub)\subseteq V(\treeops{\y_i,\y_j}\Tsub)$,
hence the condition has to apply for more vertices.
Meanwhile the set of edges where an escape is possible remains equal $E(\treeopss{\y_i}\Tsub) = E(\treeopss{\y_i\,\y_j}\Tsub)$.

\medskip

$(4')$
As a first step for (4), we show that $j$, the maximal index such that $\y_j\in\pot(\treeop{\y_i,\y_j},\y_i)$, has $\pot(\treeop{\y_i,\y_j}\Tsub,\y_j) = \langle \y_j \rangle$.
Because of (1) tree $\treeop{\y_i\,\y_j}\Tsub$ is a \compact representation of $G$, and thus we have the reflexivity $\y_j\in\treeop{\y_i\,\y_j}\Tsub$.
Further monotinicity implies that $\pot(\treeop{\y_i,\y_j}\Tsub,\y_j)$ is a subsequence of $\{\y_i,\dots,\y_j\}$, and thus only consists of $\{\y_j\}$.

\medskip

(3)
Let $j$ be maximal such that $\y_j\in\pot(\treeop{\y_i,\y_j},\y_i)$.
Then because of $(4')$ we have that $\pot(\treeop{\y_i,\y_j}\Tsub,\y_j) =\{\y_j\}$.
Because of the monotinicity $\pot(\treeop{\y_i,\y_{j-1}}\Tsub,\y_{j-1})=\{\y_{j-1},\y_j\}\subsetneq\{\y_j\}$.
Inductively the monotinicity implies $ \pot(\Tsub,\y_i) = \pot(\treeop{\y_i,\y_{i}}\Tsub,\y_{i})=\{\y_{i},\dots,\y_{j}\}$.

\medskip

(4)
Because of $(4')$ and (3) we have the \forward-direction that if $\pot(\Tsub,\y_i)=\langle\y_{i},\dots,\y_{j'} \rangle$ then $\pot(\treeop{\y_i,\y_j}\Tsub,\y_j) = \langle \y_j \rangle$.
For the \backward-direction, assuming $\pot(\treeop{\y_i,\y_j}\Tsub,\y_j) = \langle \y_j \rangle$, implies due to monotinicity that $\pot(\Tsub,\y_i)=\langle \y_{i},\dots,\y_{j'} \rangle$ for an $j'\geq j$.
It remains to show that $j'=j$.
By reflexivity $y_{j'} \in \pot(\treeop{\y_i,\y_j}\Tsub,\y_{j'})$.
Then again by monotonicity $\y_{j'} \in \pot(\treeop{\y_i,\y_{j'}}\Tsub,\y_{j'}) \subseteq \pot(\treeop{\y_i,\y_j}\Tsub,\y_{j})$.
Since however $\treeop{\y_i,\y_j}\Tsub=\langle\y_j\rangle$ we have that $j'=j$ as desired.

\medskip

(5)
We prove (5) independently of conditions (1)-(4):
Consider a node $x=x_{i'}$ in a chain $\langle X_0,\dots,\{x_{i'}\},\dots,X_{s'+1}\rangle^\gr$ (possibly the same chain as of $\y_i$) and where $(x_{i'},\y_i,\gr_1)$ is $\Tsub$-ordered.
Assume that the rehang operation makes some node $x_{j'}$ appear in the potential of $x_{i'}$, which is
\[
	x_{j'} \in \pot(\treeop{y_i,y_j}\Tsub,x_{i'}) \setminus \pot(\Tsub,x_{i'}) .
\]

Then there is a vertex \( u \) that escapes some vertex \( v \) in the tree \( \treeop{x_{i'},x_{j'}}\treeop{y_i,y_j}\Tsub \) but not in the tree \( \treeop{x_{i'},x_{j'}}\Tsub \).
Thus \( u \) escapes \( v \) at edge \( (x_{i'},z') \) but does not escape at edge \( (x_{j'},z') \) for some \( z' \in N_{\Tsub}(x_{i'}) \setminus (X_{i'-1} \cup X_{i'+1}) \).
This means $\y_i \in M_u$ while $\y_j \notin M_u$.
Similarly, \( u \) escapes \( v \) at edge \( (\y_j,z) \) but does not escape at edge \( (\y_i,z) \) for some \( z \in N_{\Tsub}(\y_{i'}) \setminus (\Y_{i'-1} \cup \Y_{i'+1})  \).
That means $x_{i'}\in M_u$ while $x_{j'}\in M_u$.

Because of the connectivity of model $M_u$ then $(x_{i'},x_{j'},\y_i)$ and $(x_{i'},x_j,\y_i)$ is $\Tsub$-ordered.\footnote{Later in this work, we define minimality, under which we would be able to assume that $(x_{i'},x_{j'},\y_j,\y_i)$ is $\Tsub$-ordered.}
Recall that $(x_{i'},\y_i,\gr_1)$ is $\Tsub$-ordered.
Thus $(x_{i'},\y_j,\y_i,\gr)$ is $\Tsub$-ordered.
That makes $1$ the defining lowest index of the orientation of chain of $\y_i$, which we denote as ${}^\gr\langle \dots,\{\y_i\},\dots,\{\y_j\},\dots\rangle$.
A contradiction that $\y_j$ is in the potential $\pot(\Tsub,y_i)$.
\end{proof}

Consider a tree $\Tsub$ that realizes a template $(\Ttemp,\ttemp,\htemp)$, and has some root-ordering~$\gr$.
We say $\Tsub$ is \define{normalized} for a node \( y \) (w.r.t.\ to $(\Ttemp,\ttemp,\htemp)$ and $\gr$) if \( \pot(\Tsub,y) = \langle y \rangle \).
By the locality property, this is equivalent to $\pot(\treeopsss{y}\Tsub,y) = \langle \y \rangle$, hence it suffices to consider the local subtree.
	The whole tree \( \Tsub \) is normalized if it is normalized for every branching node.
Now the independence of the potential as explored earlier allows normalizing any representation by a bottom-up procedure.
Thus, in a yes-instance, we may assume a normalized representation.

\newcommand{\X}{X}
\newcommand{\x}{x}

\begin{lemma}\sv{[$(\star)$]}
	\label{lemma:realization:algorithm}
	There is an $\Oh(n^3)$ time algorithm that, given a connected chordal $n$-vertex graph~\( G \) and a template $(\Ttemp, \ttemp, \htemp)$,
		decides whether there is a minimal \compact representation of~\( G \) that realizes $(\Ttemp,\ttemp,\htemp)$,
		and if one exists, it outputs one that is also normalized. %
\end{lemma}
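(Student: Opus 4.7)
The plan is a bottom-up construction along $\Ttemp$ that exploits the four properties of $\pot(\cdot,\cdot)$ established in Lemma~\ref{lemma:pot:is:connected}: locality, connectivity, linearity, and independence. I would process the non-leaves of $\Ttemp$ in post-order with respect to the root-ordering $\gr$ (treating $\gr_1$ as the root, and using later $\gr_k$ as tie-breakers for residual components). When a non-leaf $\ty$ is reached, all of its descendants have already been assigned cliques and normalized. If $\ttemp(\ty) \in \unsur(G)$, then $\V{\ty}$ is forced to the unique clique. Otherwise $\ttemp(\ty) = \innerr(\YY)$ for some chain $\YY^\gr = \langle Y_0, \{\y_1\}, \ldots, \{\y_s\}, Y_{s+1}\rangle$; keeping a pointer at the next unused chain clique $\y_i$ (initialized from the terminal $Y_0$), I compute $\pot(\Tsub, \y_i)$ using the partial tree. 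By locality this only depends on $\treeopsss{\y_i}\Tsub$. If it is empty, the algorithm returns ``no''; otherwise by connectivity it equals a contiguous sequence $\langle \y_i, \ldots, \y_j\rangle$, and we set $\V{\ty} := \V{\y_j}$, inserting $\y_i, \ldots, \y_{j-1}$ as subdivision nodes below $\ty$ on the chain. Linearity then guarantees that $\ty$ is normalized in the extended partial tree.

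Correctness follows from an exchange argument. Suppose a minimal \compact realization $\Tsub^\star$ of $(\Ttemp, \ttemp, \htemp)$ agrees with our partial tree up to the first disagreement at $\ty$; let $\V{\ty} = \V{\y_{k^\star}}$ in $\Tsub^\star$. Since the portion of $\Tsub^\star$ at and below $\ty$ is itself \compact, $\y_{k^\star}$ lies in the potential computed by our algorithm. Replacing $\y_{k^\star}$ in $\Tsub^\star$ by the extremal $\y_j$ picked by the algorithm preserves \compactness: linearity says the new local potential becomes $\langle \y_j\rangle$, and independence certifies that no potential at an ancestor of $\ty$ shrinks from nonempty to empty, so the portion of $\Tsub^\star$ above $\ty$ remains a valid attachment. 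Iterating this step converts $\Tsub^\star$ into the algorithm's output. Conversely, if the algorithm ever encounters an empty potential at some $\ty$, the same exchange applied in reverse shows that no realization exists at all, since any hypothetical $\Tsub^\star$ would have to use a clique in the (empty) potential. Since each non-leaf of $\Ttemp$ receives a distinct clique consumed exactly once along the chain assigned to it by $\htemp$, the constructed tree is automatically minimal for this template.

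The main obstacle is meeting the $\Oh(n^3)$ bound: a naive potential computation costs $\Omega(n^2)$ per candidate $\y_j$, and there can be $\Oh(n)$ candidates along each chain and $\Oh(n)$ chains. I would amortize as follows. Extending the test subtree from $\treeops{\y_i, \y_j}\Tsub$ to $\treeops{\y_i, \y_{j+1}}\Tsub$ traps only those vertices whose models become entirely contained in the enlarged subtree, and each vertex of $V(G)$ is trapped at most once during the entire bottom-up pass. Testing the strong escape of a newly trapped $u$ against every other vertex takes $\Oh(n)$ by scanning the frontier of $\treeops{\y_i,\y_{j+1}}\Tsub$, yielding $\Oh(n^2)$ total escape work. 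Combined with $\Oh(n)$ per-node bookkeeping (there being $\Oh(n)$ non-leaves because $G$ is chordal) and $\Oh(n^2)$ maintenance of the intersection tables $\V{x} \cap \V{\y}$ needed for the potential checks, the whole algorithm runs in $\Oh(n^3)$ time.
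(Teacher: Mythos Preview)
Your high-level plan matches the paper's: process the non-leaves of $\Ttemp$ bottom-up with respect to $\gr$, force the clique at $\unsur$ nodes, and at a surrounded branching node attach the children at the most conservative chain position $\y_i$, compute the potential, and realize $\ty$ at the extremal $\y_j$. The paper phrases correctness slightly differently---it first proves, as a separate claim, that any minimal compact realization can be normalized by iteratively applying parts (1), (4), (5) of Lemma~\ref{lemma:pot:is:connected}, and then shows the algorithm reconstructs exactly this normalized representation---but your exchange argument amounts to the same thing. One quibble: your appeal to ``independence'' for ancestors is misplaced. Property (5) concerns potentials at nodes \emph{below} the rehang point. What your exchange step actually needs is $\y_j \in \pot(\Tsub^\star,\y_{k^\star})$; this follows from locality (the subtree below $\ty$ in $\Tsub^\star$ coincides with your partial tree) together with the monotonicity argument inside the proof of Lemma~\ref{lemma:pot:is:connected}, after which part (1) gives compactness of the rehung tree directly, with no statement about ancestor potentials required.

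The genuine gap is in the running-time analysis. The assertion that testing strong escape of a newly trapped $u$ against all other vertices costs $O(n)$ ``by scanning the frontier'' is not justified: $u$ may escape $v$ at \emph{any} edge incident to $M_u$, not only at frontier edges, and the set of vertices that $u$ fails to escape is $\{v : M_u \cup N(M_u) \subseteq M_v\}$, which is not computable in $O(n)$ per $u$ without further data structures you have not described. The paper obtains the $O(n^3)$ bound differently: it maintains, for each node $\y$ of the tree under construction, an $O(n^2)$ table marking all ordered pairs $(u,v)$ for which $u$ does not yet escape $v$ within the current subtree up to $\y$. This table is computed from the children's tables in $O(n^2)$ time per node (with a little care when $\y$ has several children), and the iteration along a chain to locate the extremal $\y_j$ reuses and updates these tables one chain node at a time, so that no chain node is processed more than a constant number of times overall. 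Since there are $O(n)$ non-leaves, the total is $O(n^3)$. Your amortization sketch does not supply an equivalent mechanism, and the claimed $O(n^2)$ ``total escape work'' would in fact give an $O(n^2)$ algorithm, which is stronger than what is proved.
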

\begin{proof}
First, let us prove that we may assume a normalized representation for a yes-instance.

\begin{claim}
	\label{lemma:normalized}
Consider a tree $\Tsub$ that realizes a template $(\Ttemp,\ttemp,\htemp)$ and is a minimal \compact representation of a connected graph~\( G \).
	Then there is a tree \( \Tsub' \) that is a \underline{normalized} (w.r.t.\ to $(\Ttemp,\ttemp,\htemp)$ and to some root-ordering~$\gr$) minimal \compact representation of a graph \( G \) and realizes \( (\Ttemp,\ttemp,\htemp) \).
\end{claim}
\begin{proof}
	Fix a bottom-up ordering $\ty{1},\ty{2},\dots$ of the non-leaf nodes of the template tree \( \Ttemp \) where the children of \( \ty{k} \) (regarding the root $\gr_1$) occur before $\ty{k}$ for every \( k \geq 1 \).
	We prove by induction over \( k \geq 0 \),
		that there is a minimal \compact tree representation \( T_k^\star \) of \( G \) that is normalized for $\ty{1},\dots,\ty{k}$.
	Eventually this yields a normalized minimal \compact tree representation \( \Tsub' \) of a graph \( G \) that realizes template \( (\Ttemp,\ttemp,\htemp) \).

	For the induction base, tree \( T^\star_{0} \coloneqq \Tsub^\star \) is a \compact representation of \( G \) that realizes the template \( (\Ttemp,\ttemp,\htemp) \).
	No condition for nodes $\ty{1},\dots,\ty{k}$ has to be satisfied since \( k=0 \).

	For the induction step, consider \( T^\star_{{k-1}} \) from the induction hypothesis, which is a minimal \compact representation of \( G \) and normalized for $\ty{1},\dots,\ty{k-1}$.
	Let node $\y_i\coloneqq\ty{k}$ and consider its potential $\pot(T_{{k-1}}^\star, \y_i) =\allowbreak \langle \y_i,\dots,\y_j \rangle$ (possibly with $\y_i=\y_j$).
	Then by \autoref{lemma:pot:is:connected} tree $T_{k}^\star \coloneqq \treeop{\y_i,\y_j}T_{k-1}^\star$ is a minimal \compact representation which is normalized for the realizations of $\ty{1},\dots,\ty{k-1}$,
		assuming that $(\ty{k'},\ty{k},\gr_1)$ is $T_{k-1}^\star$-ordered for every node $\ty{k'}$ with $k'<k$.
	This is the case because of our bottom-up ordering of the non-leaves of $\Ttemp$ and because the relative position of the branching nodes is the same for every representation that realizes $(\Ttemp,\ttemp,\htemp)$.
	More precisely, since every representation $\Tsub'$ of $G$ that realizes $(\Ttemp,\ttemp,\htemp)$ is a subdivision of $\Ttemp$, every triple of nodes $(\ty{1},\ty{2},\ty{3})$ is $\Tsub'$-ordered if and only if it is $\Ttemp$-ordered.
	Finally $T_{k}^\star = \treeop{\y_i,\y_j}T_{k-1}^\star$ is also normalized for $\ty{k}$ since by linearity of the potential $\treeop{\y_i,\y_j}T_{k-1}^\star=\langle\y_j\rangle$ as seen in \autoref{lemma:pot:is:connected}.
\renewcommand{\qed}{\hfill$\lhd$}\end{proof}

Now we prove the algorithmic statement.
\newcommand{\ttarget}{\Tsub'}
Assuming a yes-instance, there is minimal \compact representation \( \ttarget \) of \( G \) that realizes template $(\Ttemp,\ttemp,\htemp)$.
By \autoref{lemma:normalized}, we may assume that \( \ttarget \) is normalized.
Our algorithm outputs a representation isomorphic to $\Tsub'$, thus a normalized one as desired.
If, however, our construction fails at some point, we correctly conclude that no such representation exists.

We fix an ordering $\sigma=\ty1,\ty2,\dots$ of the non-leaf nodes of the template tree $\Ttemp$,
	which follows the ordering within in a chain and otherwise is bottom-up.
Pick a node $\ty{k}$ where every non-leaf child of $\ty{k}$ has been added before, and append it to the ordering.
If there is a chain $\langle Y_0,\dots,Y_{s'+1} \rangle^\gr$ mapped by $\htemp$ to a path of form $\ty{0},\ty{k},\ty{k,1},\dots,\ty{k,s'+1}$, append nodes $\ty{k,1},\dots,\ty{k,s'+1}$ as well.
Then continue to picking a new node until all nodes are ordered.

For $k\geq1$, let $T_k$ be the subtree of $\Tsub'$ induced by $\ty{1},\dots,\ty{k}$, every subdivision node between nodes from $\ty{1},\dots,\ty{k}$ and leaves neighboring $\ty{1},\dots,\ty{k}$.
	Let $T_k'$ be the tree $T_k$ where the node $\y$ closest to root $\gr_1$ has an additional leaf node $\y'$ adjacent to $\y$ (where possibly $\y \neq \y_k$).
	Considering this tree $T_k'$ with an extra leaf node may be necessary when considering the potential $\pot$ where then vertices may escape at edge $(\y,\y')$.

By induction over $k \geq 1$, we prove that a tree isomorphic to $T_k$
	is polynomial time computable given \( G \), $(\Ttemp,\ttemp,\htemp,)$ and $\gr$.
Eventually this yields to a representation \( \Tsub \) isomorphic to \( \ttarget \), thus normalized minimal \compact and realizing $(\Ttemp,\ttemp,\htemp)$, as desired.

\medskip

(Induction base, when $\ty{k}$ neighbors a leaf (w.r.t.\ to root $\gr_1$))
Node $\ty{k}$ represents a not-surrounded node $\ttemp(\ty{k}) = \{\ty{k}\} \in \unsur(G)$.
Then \( T_k \) consists only of $\ty{k}$ adjacent to some leaf.
Thus this tree is prescribed by $(\Ttemp,\ttemp,\htemp)$ and hence no computation is required.

\medskip

(Induction step $\unsur(G)$)
Consider that $\ty{k} \in V(\Ttemp)$ is a not-surrounded node, which is $\ttemp(\ty{k}) = \{\ty{k}\} \in \unsur(G)$.
Let $\ty{c_1},\dots,\ty{c_z}$ be the children of $\ty{k}$ in the tree $\Ttemp$.
By the induction hypothesis, the subtrees $T_{{c_1}},\dots,T_{{c_z}}$ are polynomial time computable.
Tree $T_k$ must realize $\{\ty{k}\}$ with the only possibility being $\y_k$.
We show how to construct a tree isomorphic $T_k$, given fixed child subtrees.
After construction we validate that $\pot(T_k,\ty{k})=\langle \ty{k} \rangle$, i.e., that the constructed tree is indeed a \compact representation, in polynomial time.

Consider the adjacency of $\lambda_{c_1}$ and $\lambda_{k}$ in the template tree.
A simple case is that also $\ttemp(\lambda_{c_1})=\{\lambda_{c_1}\}\in\unsur(G)$.
Then in the tree $T_k$ the two realizing nodes $\lambda_{c_1},\lambda_{k}$ must be adjacent,
	and we define $\overrightarrow{c_1}\coloneqq\lambda_{c_1}\lambda_k$.

Otherwise there is a chain $\YY_1$ mapped to a path containing the edge $\lambda_{c_1}\lambda_k$.
More so, since $\ttemp(\lambda_k)\in\unsur(G)$ this chain has terminal $\{\lambda_k\}$, i.e.\ the chain has format $\langle \dots,\{\lambda_k\}\rangle^\gr$.
Here we may assume the orientation of the chain because of how the ordering $\sigma$ is defined.
Note that $\htemp(\YY_1)$ possibly only contains edge $\{\lambda_{c_1},\lambda_k\}$,
	in which case $\ttemp(\lambda_{c_1})$ is the other terminal of $\YY_1$;
	but only in this particular case.
We further distinguish:

If $\ttemp(\ty{c_1})$ is the other terminal of the chain $\YY_1$, it has format
$\langle Y_0', \{\y_{c_1}^1\},\dots, \{\y_{c_1}^{s_{1}}\}, \{\ty{k}\} \rangle^\gr$ where $\ttemp(\ty{c_1})\subseteq Y_0'$
(not necessarily equality since possibly $\ttemp(\lambda_{c_1})\in\inner(G)$ being subset of a larger guard.)

Otherwise, if $\ttemp(\ty{c_1})$ is not the other terminal of $\YY_1$,
then $\ttemp(\ty{c_1}) = \inner(\YY_1)$.
The root $\ty{c_1}\in\ttemp(\lambda_{c_1})$ of tree $T_{c_1}$ must be among these inner nodes.
Hence chain $\YY_1$ has format $\langle \cdot, \dots,\{\ty{c_1}\}, \{\y_{c_1}^1\},\dots, \{\y_{c_1}^{s_{1}}\}, \{\ty{k}\} \rangle^\gr$.

In both cases, path $\ty{c_1},\y_{c_1}^1,\dots, \y_{c_1}^{s_{1}},\ty{k}$ connects the realizations of $\lambda_{c_1}$ and $\lambda_k$, which we denote as $\overrightarrow{c_1}$.
That is the only possibility how $T_k$ connects the root of subtree $T_{c_1}$ with $\ty{k}$.
Let the paths \( \overrightarrow{c_2},\dots, \overrightarrow{c_z} \) be defined analogously.
Clearly, the same observations apply to these paths.
Then the subtree of $T_k'$ of $\ttarget$ rooted at $\ty{k}$ consists of subtrees $T_{{c_1}},\dots,T_{{c_z}}$ together with paths $\overrightarrow{c_1},\dots, \overrightarrow{c_z}$.

\medskip

(Induction step $\inner(G)$)
It remains the interesting case where template node $\ty{k}$ is a surrounded branching node.
This means that $\ttemp(\ty{k}) = \{\y_1,\dots,\y_s\}$ for some set of $s\geq1$ inner nodes from $\inner(G)$.
Here the realization of $\ty{k}$ without knowing $T_k$ is not immediately clear.
Let $\YY=\langle\Y_0,\{\y_1\},\dots,\{\y_s\},\Y_{s+1}\rangle^\gr$ be the unique chain containing these inner nodes.
The template maps $\YY$ to a non-trivial path $\ty{0},\dots,\ty{s'+1}$ in $\Ttemp$ containing $\ty{k}$.
Thus we have the following situation:
\[
	(\ty{0},\dots,\ty{c_0},\ty{k},\ty{c_0'},\dots,\ty{s'+1})
	\; = \;
	\htemp\big(\langle Y_0,\{\y_1\} \dots,\{\y_i\},\dots,\{\y_s\}
	,Y_{s+1} \rangle^\gr\big) .
\]
Every of those inner template nodes $\lambda_{i'}$ has $\ttemp(\lambda_{i'})=\{\y_1,\dots,\y_s\}$.
Let us assume that $\ttemp(\ty{0})\subseteq Y_0$, such that the directions of increasing indexes match.

Note that $\ty{c_0}$ is ordered before $\ty{k}$ because of how the ordering $\sigma$ is defined.
Our algorithm may determine $\ty{c_0}$ as the child in $\Ttemp$ where $(\ty{0},\ty{c_0},\ty{k},\ty{s'+1})$ is \( \Ttemp \)-ordered (possibly $\ty{0} = \ty{c_0}$).
Tree $T_k$ realizes $\ty{k}$ with some inner node $\y_j$ with $j\in[s]$.
Our task is to determine $j$ without knowing $T_k$.
Let $\ty{c_1},\dots,\ty{c_z}$ be the (possibly non-existent, possibly containing $\ty{c_0'}$) remaining children of $\ty{k}$ in $\Ttemp$.
Again, by the induction hypothesis, the subtrees $T_{{c_1}},\dots,T_{{c_z}}$ are polynomial time computable.

Tree $T_{c_0}$ realizes $\ty{c_0}$ with some node $\y_{i-1}$ for $i \in \{2,\dots,s\}$ where $\y_0\in\Y_0$.
Because $T_{c_0}$ is a subtree of $T_k$, this limits the possible realizations of $\y_j$ to $\{\y_i,\dots,\y_k\}$.
Let $\overrightarrow{c_0}$ be the path $(\y_{i-1},\y_{i},\dots,\y_s)$.

Consider the adjacency $\lambda_{c_1}\lambda_k$.
A simple case is that $\ttemp(\lambda_{c_1})=\{\lambda_{c_1}\}\in\unsur(G)$.
Then in the tree $T_k$ the two realizing nodes $\lambda_{c_1},\lambda_{k}$ must be adjacent,
	and we define $\overrightarrow{c_1}=\lambda_{c_1}\lambda_k$.
Note that we define the path with a variable $\ty{k}$ (as named to coincide with the template node since it shares the same variability).
For example $\overrightarrow{z_1}(\y_j)$ is the path contained in the tree $T_k$ (which we aim to construct).

Node $\ttemp(\ty{c_1})$ is part of a chain with terminal $Y_{s'+1}'$ where $\ttemp(\ty{k})\subseteq Y_{s'+1}'$.
Now either $\ty{c_1}$ is the other terminal of chain $\YY_1$, or is the set of inner nodes $\inner(\YY_1)$ and hence realized as one of them.
Thus the format of the chain is either
\begin{itemize}
\item
$\langle Y_0',\{\y_{c_1}^1\},\dots, \{\y_{c_1}^{s_{1}}\}, Y_{s'+1}' \rangle^\gr$ where $\ttemp(\ty{c_1}) \subseteq Y_0'$, or
\item
$\langle \cdot,\dots, \{\y_{c_1}\},\{\y_{c_1}^1\},\dots, \{\y_{c_1}^{s_{1}}\}, Y_{s'+1}' \rangle^\gr$,
 	where $\y_{c_1}$ is the realization of $\ty{c_1}$ in the tree $T_{c_1}$.
\end{itemize}
Let $\overrightarrow{c_1}(\ty{k})$ be the path $(\ty{c_1},\y_{c_1}^1,\dots,\y_{c_1}^{s_{1}},\ty{k})$.
Note that we define the path with a variable $\ty{k}$; with a name coinciding with the template node, since it shares the same variability.
For example $\overrightarrow{c_1}(\y_j)$ is the path contained in the unknown tree $T_k$.
We define the paths $\overrightarrow{c_2}(\ty{k}),\dots, \overrightarrow{c_z}(\ty{k})$ for the other children analogously.
Clearly, the same observations apply.

We similarly define the tree $T(\ty{k})$.
Tree $T(\ty{k})$ is the tree containing the subtrees $T_{c_0},T_{z_1},\dots,T_{c_z}$ together with paths $\allowbreak\overrightarrow{c_1}(\ty{k}),\dots,\overrightarrow{c_z}(\ty{k})$ and path $\y_{i-1},\y_i,\dots,\y_s$.
Namely, $T_k$ is the subtree of $T(\y_j)$ rooted at $\y_j$.
	Similarly as for $T_k'$, we define $T'(\ty{k})$.
	Again there is a node $\ty{k'}$ closest to the global root $\gr_1$,
		and we let $T_k'$ be the tree $T_k$ where the realization $\y$ of $\ty{k'}$ has an additional leaf $\y'$ adjacent to $\y$.
Then $T_k'$ is the subtree of $T'(\y_j)$ rooted at $\y_j$.
Again, considering this tree $T_k'$ with an extra leaf node may be necessary when considering the potential $\pot$ where then vertices may escape at edge $(\y,\y')$.

Now it remains to determine $\y_j$ without knowing $T_k$.
For that purpose, consider the tree $T'(\y_i)$, the tree with the most conservative realization of $\ty{k}$.
Applying the rehang operation yields $ \treeop{\y_i,\y_j}T'(\y_i) = T'(\y_j) $.
Since $\ttarget$ is normalized and because of locality
\[
	\langle\y_j\rangle= \pot(\ttarget,\y_j)
	= \pot(\treeopsss{\y_i} \ttarget, \y_j)
	= \pot(\treeopsss{\y_i} T'(\y_j), \y_j)
	= \pot(T'(\y_j),\y_j) .
\]
Then by the linearity of the potential we have that $\pot(T'(\y_i),\y_i)=\langle\y_i,\dots,\y_j\rangle$.
This is how we algorithmically determine $\y_j$, assuming a yes-instance.
Thus the desired tree $T_k(\y_j)$ is polynomial time computable given graph $G$, template $(\Ttemp,\ttemp,\htemp)$ and $\gr$.

If our algorithm observes that \( \pot(T'(\y_i),\y_i) = \emptyset \) at some point, no normalized representation \( \ttarget \) can exist, and our algorithm may output no.

\medskip

Let us show how to obtain a $\Oh(n^3)$ runtime.
As stated before, no decision making is needed for the leaves and branching nodes $\ty{k}$ where $\ttemp(\ty{k})\in\unsur$.
Though at a branching node $\ty{k}$ where $\ttemp(\ty{k})\in\inner$ we have to determine the potential $\pot(T'(\y_i),\y_i) = \pot(\treeopsss{\y_i}T'(\y_i),\y_i) = \langle \y_i,\dots,\y_j \rangle$ as observed before.
To do this we iterate over the chain from $\y_i$ towards the terminal $Y_{s+1}$.
At some node $\y_{j+1}$ we may observe for the first time that the subtree $\treeopsss{\y_i,\y_j}T'(\y_i)$ contains vertices $u,v$ that falsify the membership of $\y_{j+1}$ in the potential,
	that is $u$ has model $M_u\subseteq V(\treeops{\y_i,\y_j}T'(\y_i))$ and does not escape $v$.
This then implies $\pot(\treeopsss{\y_i}T'(\y_i),\y_i) = \langle \y_i,\dots,\y_j \rangle$.

To check for pairs of vertices $u,v$ that potentially show $\y_{j+1} \notin \pot(\treeopsss{\y_i,\y_j}T'(\y_i),\y_i)$, let us throughout our construction maintain an $\Oh(n^2)$ size table for every node $\y_i$ that marks every ordered pair $(u,v)$ if $u$ does not escape $v$ when we restrict the models to $M_u \cap (\treeops{\y}T'(\y_i) \cup \{\y_i,\y_{i-1}\})$ respectively $M_v \cap (\treeops{\y}T'(\y_i) \cup \{\y_i,\y_{i-1}\})$, where $\y_{i-1}$ is the child of $\y_i$.
We determine this table inductively for the nodes of our constructed subtrees $T_1,\dots,T_k$.
For a node $\y$ with one child $\y'$ we can update every table entry from the table of $\y'$ by considering $\V{\y} \cup \V{\y'}$ and the differences $\V{\y} \setminus \V{\y'}$ and $\V{\y'} \setminus \V{\y}$, in time $\Oh(n^2)$.
	The table of $\y$ marks that $u$ does not escape $v$ if $\y'$ marks that $u$ does not escape $v$ and $v \in \V{\y}$, or if $u \in \V{\y}\setminus\V{\y'}$ and $v \in \V{\y}\cup\V{\y'}$.
If a node $\y$ has children $\y_1,\dots,\y_z$
first determine two tables for $\y$, one as if $\y_1$ were the only child of $\y$, and one as if $\y_2$ were the only child of $\y$, and so on.
Then we combine these tables in time $\Oh(n^2 \cdot z)$ as follows.
We mark that $u$ does not escape $v$ for node $\y$, if every child vertex set $\V{\y_i}, i \in [z]$ contains $v$ and, if it contains also $u$, has a mark for $u$ not escaping $v$.
Note that we can ignore the factor $z$, the number of children, by counting the number of parent nodes instead of child nodes:
	Every non-leaf child node has at most one parent and thus causes at most a constant many additional $\Oh(n^2)$-time operations.
	We may ignore here the children with empty vertex-set, as they are trivial to handle.

While trying different realizations of a branching node, our algorithm notices when $\y_{j+1} \notin \pot(\treeopsss{\y_i}T'(\y_i),\y_i)$,
	because then there is a mark for that $u$ does not escape $v$ and $u \in \V{\y_{j}} \setminus \V{\y_{j+1}}$ while $v \in \V{\y_j}\cup \V{\y_{j+1}}$.
Here we have to potentially try every node of a chain as a branching node.
However once $\y_j$ is determined as a branching node, we no longer have to try the nodes $\y_i,\dots,\y_j$ as a potential branching node; hence no node is tried more than once as a branching node.
Also we may combine the tables of the $z-1$ non-chain children once to a preliminary table, and use this table when trying $\y_i,\dots,\y_j$ as a branching.
Finally since we perform this update step for every of the $\Oh(n)$ non-leaf nodes of $\ttarget$ at most a constant time, we obtain the runtime of $\Oh(n^3)$.
\end{proof}

\newcommand{\commentt}{
\begin{proof}[Proof (Sketch).]
\newcommand{\ttarget}{\Tsub'}
Assuming a yes-instance, there is minimal \compact representation \( \ttarget \) of \( G \) that realizes template $(\Ttemp,\ttemp,\htemp)$.
We may also assume that $\ttarget$ is normalized (proven in the appendix).
Our algorithm outputs a representation isomorphic to $\ttarget$, thus a normalized one as desired.
If, however, our construction fails at some point, we correctly conclude that no such representation exists.
In the rest of the proof we fix an arbitrary root-ordering~$\gr$.

We fix an ordering $\sigma=\ty1,\ty2,\dots$ of the non-leaf nodes of the template tree $\Ttemp$,
	which follows the ordering within in a chain and otherwise is bottom-up.
Pick a node $\ty{k}$ where every non-leaf child of $\ty{k}$ has been added before, and append it to the ordering.
If there is a chain $\langle Y_0,\dots,Y_{s'+1} \rangle^\gr$ mapped by $\htemp$ to a path of form $\ty{0},\ty{k},\ty{k,1},\dots,\ty{k,s'+1}$, append nodes $\ty{k,1},\dots,\ty{k,s'+1}$ as well.
Then continue to picking a new node until all nodes are ordered.

For $k\geq1$, let $T_k$ be the subtree of $\Tsub'$ induced by $\ty{1},\dots,\ty{k}$, every subdivision node between nodes from $\ty{1},\dots,\ty{k}$ and leaves neighboring $\ty{1},\dots,\ty{k}$.
By induction over $k \geq 1$, we prove that a tree isomorphic to $T_k$
	is polynomial time computable given \( G \), $(\Ttemp,\ttemp,\htemp,)$ and $\gr$.
Eventually this yields to a representation \( \Tsub \) isomorphic to \( \ttarget \), thus normalized minimal \compact and realizing $(\Ttemp,\ttemp,\htemp)$, as desired.

(Induction base, when $\ty{k}$ neighbors a leaf (w.r.t.\ to root $\gr_1$))
The node $\ty{k}$ represents a not-surrounded node $\ttemp(\ty{k}) = \{\ty{k}\} \in \unsur(G)$.
Then \( T_k \) consists only of $\ty{k}$ adjacent to some leaf.
Thus, this tree is prescribed by $(\Ttemp,\ttemp,\htemp)$ and hence no computation is required.
The induction step where $\ty{k} \in V(\Ttemp)$ is a node with $\ttemp(\ty{k}) = \{\ty{k}\} \in \unsur(G)$ is similar, and omitted here.

(Induction step $\inner(G)$)
We consider the case where the template node $\ty{k}$ is a surrounded branching node.
This means that $\ttemp(\ty{k}) = \{\y_1,\dots,\y_s\}=\innerr(\YY)$ for some chain $\YY$.
The template maps $\YY$ to a non-trivial path $\ty{0},\dots,\ty{s'+1}$ in $\Ttemp$ containing $\ty{k}$:
\[
\ty{0}\dots\ty{c_0}\ty{k}\ty{c_0'}\dots\ty{s'+1}
\; = \;
\htemp\big(\langle Y_0,\{\y_1\} \dots,\{\y_i\},\dots,\{\y_s\}
,Y_{s+1} \rangle^\gr\big) .
\]
For each of those inner template nodes $\lambda_{i'}$, we have $\ttemp(\lambda_{i'})=\{\y_1,\dots,\y_s\}$.
Let us assume that $\ttemp(\ty{0})\subseteq Y_0$ such that the directions of increasing indices match.

Note that $\ty{c_0}$ is ordered before $\ty{k}$ because of how the ordering $\sigma$ is defined.
Our algorithm may determine $\ty{c_0}$ as the child in $\Ttemp$ where $(\ty{0},\ty{c_0},\ty{k},\ty{s'+1})$ is \( \Ttemp \)-ordered (possibly $\ty{0} = \ty{c_0}$).
The tree $T_k$ realizes $\ty{k}$ with some inner node $\y_j$ with $j\in[s]$.
Our task is to determine $j$ without knowing $T_k$.
Let $\ty{c_1},\dots,\ty{c_z}$ be the (possibly non-existent, possibly containing $\ty{c_0'}$) remaining children of $\ty{k}$ in $\Ttemp$.
By the induction hypothesis, the subtrees $T_{c_0},T_{{c_1}},\dots,T_{{c_z}}$ are polynomial time computable.

The tree $T_{c_0}$ realizes $\ty{c_0}$ with some node $\y_{i-1}$ for $i \in \{2,\dots,s\}$ where $\y_0\in\Y_0$.
Because $T_{c_0}$ is a subtree of $T_k$, this limits the possible realizations of $\y_j$ to $\{\y_i,\dots,\y_s\}$.
Let $\overrightarrow{c_0}$ be the path $(\y_{i-1},\y_{i},\dots,\y_s)$.

Consider the adjacency $\lambda_{c_1}\lambda_k$.
A simple case is that $\ttemp(\lambda_{c_1})=\{\lambda_{c_1}\}\in\unsur(G)$.
Then in the tree $T_k$ the two realizing nodes $\lambda_{c_1}$ and $\lambda_{k}$ must be adjacent,
	and we define $\overrightarrow{c_1}(\lambda_k)$ to be the path $\lambda_{c_1}\lambda_k$.
Note that we define the path with a variable $\ty{k}$ (as named to coincide with the template node since it shares the same variability).
For example $\overrightarrow{c_1}(\y_j)$ is the path contained in the tree $T_k$ (which we aim to construct).

Otherwise, the node $\ttemp(\ty{c_1})$ is part of a chain with terminal $Y_{s'+1}'$ where $\ttemp(\ty{k})\subseteq Y_{s'+1}'$.
Now either $\ty{c_1}$ is the other terminal of chain $\YY_1$, or it is the set of inner nodes $\innerr(\YY_1)$ and hence realized as one of them.
Thus the format of the chain is either
\begin{itemize}
\item
$\langle Y_0',\{\y_{c_1}^1\},\dots, \{\y_{c_1}^{s_{1}}\}, Y_{s'+1}' \rangle^\gr$ where $\ttemp(\ty{c_1}) \subseteq Y_0'$, or
\item
$\langle \cdot,\dots, \{\y_{c_1}\},\{\y_{c_1}^1\},\dots, \{\y_{c_1}^{s_{1}}\}, Y_{s'+1}' \rangle^\gr$,
 	where $\y_{c_1}$ is the realization of $\ty{c_1}$ in the tree $T_{c_1}$.
\end{itemize}
Let $\overrightarrow{c_1}(\ty{k})$ be the path $(\ty{c_1},\y_{c_1}^1,\dots,\y_{c_1}^{s_{1}},\ty{k})$,
	similarly as before with variable $\ty{k}$.
For example $\overrightarrow{c_1}(\y_j)$ is the path contained in the unknown tree $T_k$.
We define the paths $\overrightarrow{c_2}(\ty{k}),\dots, \overrightarrow{c_z}(\ty{k})$ for the other children analogously.
Clearly, the same observations apply.

We define the tree $T(\ty{k})$ similarly.
Namely, $T(\ty{k})$ is the tree containing the subtrees $T_{c_0},T_{c_1},\dots,T_{c_z}$ together with paths $\allowbreak\overrightarrow{c_1}(\ty{k}),\dots,\overrightarrow{c_z}(\ty{k})$ and path $\y_{i-1},\y_i,\dots,\y_s$.
Then $T_k$ is the subtree of $T(\y_j)$ rooted at $\y_j$.
Thus it remains to determine $\y_j$ without knowing $T_k$.

For that purpose, consider the tree $T(\y_i)$, the tree with the most conservative realization of $\ty{k}$.
Applying the rehang operation yields $ \treeop{\y_i,\y_j}T(\y_i) = T(\y_j) $.
	Assume that node $\y_k$ of all the nodes of $T_k$ has the smallest distance to the global root $\gr$ (the general case is handled by a slight modification to $T(\y_i)$, see appendix).
Then, since $\ttarget$ is normalized and because of locality, we have $ \langle\y_j\rangle= \pot(\ttarget,\y_j)= \pot(\treeopsss{\y_i} \ttarget, \y_j) = \pot(\treeopsss{\y_i} T(\y_j), \y_j) = \pot(T(\y_j),\y_j) $.

Then by the linearity of the potential we have that $\pot(T(\y_i),\y_i)=\langle\y_i,\dots,\y_j\rangle$.
This is how we algorithmically determine $\y_j$, assuming a yes-instance.
Thus the desired tree $T_k(\y_j)$ is polynomial time computable given graph $G$, template $(\Ttemp,\ttemp,\htemp)$ and $\gr$.
If our algorithm observes that \( \pot(T(\y_i),\y_i) = \emptyset \) at some point, it contradicts the existence of a normalized representation \( \ttarget \), and our algorithm returns no.
\end{proof}
}

Now we outline our \cFPT algorithm for the parameter $t = |V(T)|$.
We assume without loss of generality that $G$ is a chordal graph and $T\neq K_1$ as the problem is trivial otherwise.
Note that chordality can be tested in linear time~\cite{RoseTL1976}.
If $G$ is not connected, each proper interval graph component always be represented using a subdivision of an edge incident to a leaf of $T$. Thus,
these components, which can be recognized in linear time~\cite{Corneil04,DengHH1996}, can be excluded from the further consideration.
Each of the remaining components is not a proper interval graph and, as such, contains a vertex whose model includes a branching node of $T$.
Thus, if these components number more than the number of branching nodes, $G$ has no $T$-representation.
Assume that this is not the case.
We guess an assignment of the connected components of~$G$ to connected subtrees of~$T$ representing them.
Two such subtrees may share an edge (which can be needed to represent both components of $G$ using the end-nodes of this shared edge).
Note that are at most $2^{\Oh(t \log t)}$ possible mappings, and then we can deal with every component of $G$ and the corresponding subtree of $T$ separately.
From  now on, we assume that $G$ is connected.
By \autoref{lemma:compact:representation}, we may look for a \compact representation $\Tsub$;
	further, it suffices that $\Tsub$ is minimal.
Therefore, there is a template $(\Ttemp,\ttemp,\htemp)$ that allows a representation of $G$ as seen in \autoref{lemma:template:exists}.
We compute the chains of $G$ and try every template in time $2^{\Oh(t^2 \log t)}\cdot n^3$ where $n= |V(G)|$, as seen in \autoref{lemma:try:templates}.
Pick an arbitrary root-ordering $\gr$.
Then test in polynomial time whether a minimal \compact representation of \( G \) realizing this template by using \autoref{lemma:realization:algorithm}.
In a positive case, applying \autoref{lemma:compact:representation} leads to a proper representation.
This implies our main result (restated here).

\algorithm*

\section{Recognition Hardness}
\label{sec:rec-hard}

In this section, we discuss the \cNP-completeness of the natural problems of deciding, for a given graph $G$ and a given or fixed graph $H$, whether $G$ is a proper $H$-graph.
As discussed in the introduction, if $H$ is part of the input and we drop the ``proper'' part, it is known that this decision problem is \cNP-complete even when $H$ is restricted to being a tree~\cite{KKOS15}. 
This proof also applies to the case of recognizing proper $H$-graphs when $H$ is part of the input. 
With this in mind we turn to the case when the graph $H$ is fixed.  
Again, when dropping the ``proper'' condition, it is known that $H$-graph recognition is \cNP-complete whenever $H$ is not a cactus or, equivalently, when $H$ contains the diamond graph as a minor~\cite{ChaplickTVZ17}. 
This hardness proof does not directly carry over to the proper case, but we will show that it can be adapted to show a similar hardness result for proper $H$-graphs. 
Namely, we establish \cref{thm:rec-hard-intro}, restated next. 

\recHard*

\begin{figure}[b]
\centering
\includegraphics{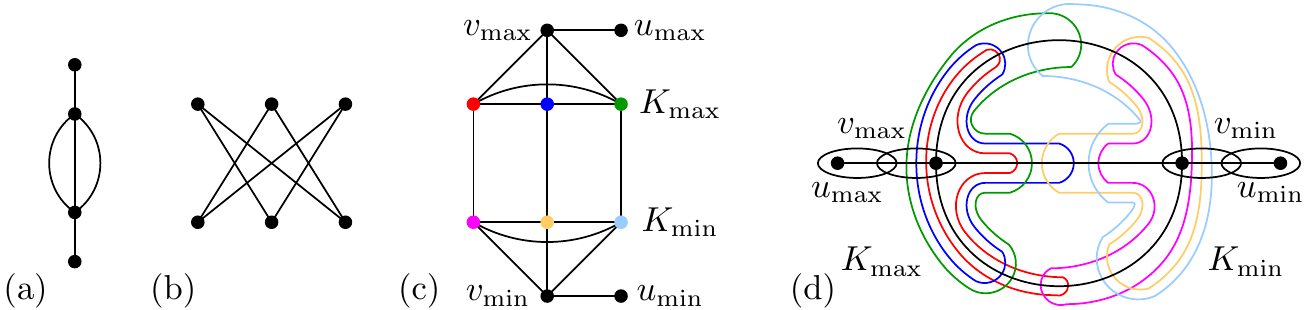}
\caption{(a) A graph $\Dplus$ for which proper $\Dplus$-graph recognition is \cNP-complete. (b) The Hasse diagram of a poset $P$. (c) The graph $G$, which is the incomparability graph of $P$ together with $4$ extra vertices. (d) A $\Dplus$-representation of $G$.}
\label{fig:diamond-plus}
\end{figure}

Prior to providing the proof, we introduce the problem (denoted \hid) used in our reduction.
Let $I$ be a collection of closed intervals on the real line. 
A poset $\calP_I=(I,<)$ can be defined on $I$ by considering intervals $x,y \in I$ and setting $x < y$ if and only if the right endpoint of $x$ is strictly to the left of the left endpoint of $y$. 
A partial order $\calP$ is called an \emph{interval order} when there is an $I$ such that $\calP = \calP_I$. 
The \emph{interval dimension} of a poset $\calP=(P,<)$, is the minimum number of interval orders whose intersection is $\calP$, i.e., for elements $x,y\in P$, $x < y$ if and only if $x$ is before $y$ in all of the interval orders. 
The problem \hid is testing if a given \emph{height one} poset has interval dimension at most three and was shown to be \cNP-complete by Yannakakis~\cite{yannakakis1982}. 

We introduce a bit more background and terminology before presenting the proof. 
The \emph{incomparability graph} $G_\calP$ of a poset $\calP=(P,<)$  is the graph with $V(G) = P$ and $uv \in E(G_\calP)$  if and only if $u$ and $v$ are not comparable in $\calP$.
Note that if $\calP$ has height one, every element is either minimal or maximal and, as such, $G_\calP$ is the \emph{complement of a bipartite graph}. 
Thus, the vertices $V(G_\calP)$ naturally partition into two cliques $K_{\max}$ and $K_{\min}$ containing the maximal and the minimal elements of $\calP$, respectively. 
With these definitions in place we now prove the theorem of this section. 
The main idea, just as in~\cite{ChaplickTVZ17}, is that the three parallel edges will correspond to the three interval orders and will certify that the poset has interval dimension at most 3. 
The key new aspects of this proof in comparison with~\cite{ChaplickTVZ17} are that we have to take extra care to ensure that the graph we construct has a proper representation and to extract the three interval orders from any proper representation of our constructed graph.

\begin{proof}[Proof of \cref{thm:rec-hard-intro}]
We use rely on the \cNP-hardness of \hid. 
Namely, from a given poset $\calP$ we construct a graph $G$ where $G$ is a $\Dplus$-graph if and only if $\calP$ is a yes-instance of \hid. 
Essentially, the three paths connecting the two degree~3 nodes in $\Dplus$ will encode the three
interval orders whose intersection is $\calP$. 
To construct $G$, we start from the incomparability graph $G_{\calP}$ and add four vertices $u_{\min}, v_{\min}$ and $u_{\max}, v_{\max}$ where the neighborhood of $v_{\min}$ is $K_{\min} \cup \{u_{\min}\}$ and the neighborhood of $v_{\max}$ is $K_{\max} \cup \{u_{\max}\}$. 

There are two differences between this graph $G$ and the one used in~\cite[Theorem 10]{ChaplickTVZ17}. First, we only add one vertex to each of $K_{\min}$ and $K_{\max}$ instead of a special 7-vertex tree. Second, we add vertices $u_{\min}$ and $u_{\max}$ which are only adjacent to $v_{\min}$ and $v_{\max}$ (respectively) whereas in~\cite{ChaplickTVZ17}, no such vertices are added. 
We will refer to properties easily gained from~\cite[Theorem 10]{ChaplickTVZ17} as property $(*)$ and use it to avoid unnecessarily repeating their arguments.
Let $a_{\min}$ and $a_{\max}$ be the two degree~4 nodes of $\Dplus$, and let $b_{\min}$ and $b_{\max}$ be their respective degree~1 neighbors. 
The rest of the proof concerns the two natural directions; see Figure~\ref{fig:diamond-plus}.

\subparagraph{Part 1:} Suppose that $\calP$ is a yes-instance of \hid, and let $I_1, I_2, I_3$ be the three interval orders certifying this. 
By $(*)$, the subgraph $G \setminus \{u_{\min},u_{\max}\}$ can be represented by $\{D_v : v \in V(G \setminus \{u_{\min},u_{\max}\})\}$ on a subdivision of the three parallel edges of $\Dplus$ (i.e., on the diamond subgraph of $\Dplus$) with the following properties: 
\begin{itemize}
\item the model of $v_{\min}$ is simply the node $a_{\min}$ and the model of $v_{\max}$ is $a_{\max}$, and
\item for any distinct elements $u,v$ of $\calP$, $D_u \neq D_v$. 
\end{itemize}
In particular, it only remains to modify these models $\{D_v : v \in V(G)\}$ to ensure that they are proper and to create models for $u_{\min}$ and $u_{\max}$. 
This is accomplished as follows via the two thus far unused edges in $\Dplus$, i.e., $a_{\min}b_{\min}$ and $a_{\max}b_{\max}$. 

Note that, for $u \in K_{\min} \cup \{v_{\min}\}$ and $v \in K_{\max} \cup \{v_{\max}\}$, $D_u$ is not contained in $D_v$ since $D_u$ intersects $D_{v_{\min}}$ but $D_v$ does not (symmetrically $D_v$ not contained in $D_u$ due to $v_{\max}$). 
Thus, it suffices for us to modify the models of $K_{\min} \cup \{v_{\min}\}$ so that they become proper (and do this in such a way that it does not affect their relationship with those in $K_{\max} \cup \{v_{\max}\}$. 
We only describe this modification for the elements of $K_{\min} \cup \{v_{\min}\}$ as it is symmetric for $K_{\max} \cup \{v_{\max}\}$. 

Let $\calP_{\min} = (K_{\min} \cup \{v_{\min}\}, <)$ be a poset where, for elements $u,v$, we have $u < v$ if and only if $D_u$ is a proper subset of $D_v$. 
Now consider any linear extension $\pi = (v_1, \ldots, v_\ell)$ of $\calP_{\min}$, and subdivide the $b_{\min}a_{\min}$ edge of $\Dplus$ into a path $b_{\min}=b_0, b_1, b_2, \ldots, b_\ell, b_{\ell+1} = a_{\min}$. 
For each $v_i \in K_{\min} \cup \{v_{min}\}$ we define $\Dplus_v$ as the union of $D_v$ and the path $b_i, \ldots, b_{\ell+1} = a_{\min}$. 
It is easy to see that this results in a proper $\Dplus$-representation for the elements of $K_{\min} \cup \{v_{\min}\}$ and that we have not altered the relationship between models of elements of $K_{\min} \cup \{v_{\min}\}$ and models of elements of $K_{\max} \cup \{v_{\max}\}$. 
Finally, we set $\Dplus_{u_{\min}}$ to be the edge $b_{\min}b_{1}$, i.e.,  $\Dplus_{u_{\min}}$ only intersects $\Dplus_{v_{\min}}$ and $\Dplus_{u_{\min}}$ is not contained $\Dplus_{v_{\min}}$ since $\Dplus_{v_{\min}}$ is the path $b_1, \ldots, b_{\ell+1}$. 
Thus, by symmetrically applying this construction for the vertices of $K_{\max} \cup \{v_{\max},u_{\max}\}$, our graph $G$ is a proper $\Dplus$-graph as claimed.

\subparagraph{Part 2:}
Suppose that $G$ is a proper $\Dplus$-graph and let $\{\Dplus_v : v \in V(G)\}$ be a corresponding representation on a subdivision $\Dplus'$ of $\Dplus$. 
We will modify this proper $\Dplus$-representation into a $\Dplus$-representation where we can easily extract the needed three interval orders. 

We start by adjusting the models of $u_{\min}$, $v_{\min}$, $u_{\max}$, $v_{\max}$, so that: 
\begin{itemize}
\item $\Dplus_{u_{\min}}$ and $\Dplus_{u_{\max}}$ each contain exactly two nodes of $\Dplus'$---the nodes in $\Dplus_{u_{\min}}$ are referred to as $x_{\min},y_{\min}$ and those in $\Dplus_{u_{\max}}$ are referred to as $x_{\max},y_{\max}$; and
\item each of $x_{\min},y_{\min}$ $x_{\max},y_{\max}$ have degree two in $\Dplus'$. 
\end{itemize}
First, observe that, since $v_{\min}$ is the only neighbor of $u_{\min}$, the nodes of $\Dplus_{u_{\min}}$ partition into two non-empty sets $X$ and $Y$, where $X$ contains the nodes which are also contained in $\Dplus_{v_{\min}}$ and $Y$ contains the nodes which are only contained in $\Dplus_{u_{\min}}$. 
Clearly, there must be an $x \in X$ and a $y \in Y$ such that $x$ and $y$ are adjacent in $\Dplus'$. 
Now we subdivide the edge $xy$ to become a path $x,x_{\min},y_{\min},y$ and replace $\Dplus_{u_{\min}}$ by the edge $x_{\min}y_{\min}$ (and similarly replace $\Dplus_{u_{\max}}$). 
Our models now satisfy the above conditions. We now label the degree 3 node of $\Dplus$ which is closest to $x_{\min}$ as $a_{\min}$, and the other degree 3 node as $a_{\max}$. 

Note that, since $y_{\min}$ and $y_{\max}$ each have degree two, they must be subdivision nodes of $\Dplus'$. 
Furthermore, the edges $y_{\min}z$ ($z \neq x_{\min}$) and $y_{\max}z'$ ($z' \neq x_{\max}$) are not contained in the model of any vertex of $G$. 
In particular, we have a proper representation of $G$ on $\Dplus'' = \Dplus' \setminus \{y_{\min}z, y_{\max}z'\}$. 
Observe that, in $\Dplus''$, we have at most three distinct paths which start at $x_{\min}$ and end at $x_{\max}$. Moreover, by restricting the models of the vertices of $K_{\min} \cup K_{\max}$ to the nodes on these three paths, we still have a $\Dplus''$ representation of $G$, i.e., if a pair of models intersect, then they must intersect on at least one of these paths. 
Let these (at most) three paths be $P_1, P_2, P_3$ (these will become our three interval orders). 
In particular, if, for each element $p$ of $K_{\min}$, $\Dplus_{p} \cap P_i$ is a prefix of $P_i$ and (symmetrically) for each element $q$ of $K_{\max}$, $\Dplus_{p} \cap P_i$ is a suffix of $P_i$, then 
these three paths would naturally provide our desired three interval orders. 
We refer to this as property $(P)$. 
So, to complete the proof, it suffices to modify the current proper $\Dplus''$-representation to be a $\Dplus''$-representation satisfying property $(P)$. 

We describe this transformation for each element of $K_{\min}$ and it can be performed symmetrically for the elements of $K_{\max}$. 
For each $p \in K_{\min}$, we replace $\Dplus_{p}$ by $\Dplus_{p} \cup (\Dplus_{v_{\min}}) \setminus \{y_{\min}\}$. 
It is easy to see that this preserves the property of being a $\Dplus$-representation of our graph $G$. 
Moreover, this ensures that $\Dplus_{p} \cap P_i$ contains a prefix of $P_i$ (for each $i$). 

Now, suppose that $\Dplus_{p} \cap P_i$ is not a path. 
Clearly, since $\Dplus$ only contains two degree 3 nodes and since $\Dplus_p$ is connected, $\Dplus_p \cap P_i$ must contain both $a_{\min}$ and $a_{\max}$, i.e., $P_i \setminus \Dplus_{p}$ consists of precisely two paths. 
Let $c_1, \ldots, c_t$ be the first path in $P_i \setminus \Dplus_{p}$, i.e., the connected component which does not contain $x_{\max}$. 
Additionally, let $c_0$ ($\neq c_2$) be the neighbor of $c_1$ on $P_i$ and let $c_{t+1}$ ($\neq c_{t-1}$) be the neighbor of $c_t$.
Consider an element $q$ of $K_{\max}$ where $\Dplus_{q}$ contains some $c_j$ for $j \in [1,t]$. 
Observe that $\Dplus_q$ must also contain either $c_0$ or $c_{t+1}$, i.e., $\Dplus_q \cap \Dplus_q \neq \emptyset$. 
Thus, we can simply add the path $c_0, c_1, \ldots, c_t, c_{t+1}$ to $\Dplus_p$ without introducing new intersections with the models of the elements of $K_{\max}$, i.e., by doing this, $\Dplus_p \cap P_i$ becomes a prefix of $P_i$ as needed. 

Therefore, by applying these modifications to each $p \in K_{\min}$ and (analogously) to each $q \in K_{\max}$, we obtain a $\Dplus''$-representation of $G$ satisfying $(P)$ (completing the proof). 
\end{proof}

\section{Concluding Remarks and Open Problems}
\label{sec:conclusion}

Our recognition algorithm for proper tree-graphs provides the following side result on proper leafage (introduced by Lin et al.~\cite{DBLP:journals/dmgt/LinMW98} analogously to leafage):
The \define{proper leafage} $\ell^\star$ of a chordal graph $G$ is the minimum number of leaf nodes of all trees $T$ that properly represent $G$.
The side result, as in \cref{cor:pleafage}, is that computing the proper leafage is \cFPT.
For the decision version, if $G$ is not a proper interval graph, we simply guess the host tree $\Tsub$ of minimal leafage and verify properness with our algorithm from \autoref{lemma:algorithm}. Of course, it still remains open whether computing proper leafage is \cNP-hard.

\begin{cor}\label{cor:pleafage}
Computing the proper leafage $\ell^\star$ of a chordal graph $G$ is \cFPT\ w.r.t.~$\ell^\star$.
\end{cor}

While we have shown that proper $T$-graph recognition is \cFPT, it remains open whether non-proper $T$-graph recognition is \cFPT.
Perhaps most importantly, gaps remain concerning the precise conditions under which (proper) $H$-graph recognition is \cNP-complete for fixed $H$.

\bibliography{lit}

\newpage
\appendix

\sv{\section{Omitted Proofs}}

\appendixText

\end{document}